\newcommand{\bdm}{\begin{displaymath}}
\newcommand{\edm}{\end{displaymath}}
\newcommand{\bdn}{\begin{eqnarray}}
\newcommand{\edn}{\end{eqnarray}}
\newcommand{\bay}{\begin{array}{c}}
\newcommand{\eay}{\end{array}}
\newcommand{\ben}{\begin{enumerate}}
\newcommand{\een}{\end{enumerate}}
\newcommand{\beq}{\begin{equation}}
\newcommand{\eeq}{\end{equation}}
\newcommand{\chiin}{\chi_{\mathrm{in}}}
\newcommand{\chiout}{\chi_{\mathrm{out}}}
\newcommand{\R}{\mathbb{R}}
\newcommand{\N}{\mathbb{N}}
\newcommand{\C}{\mathbb{C}}
\newcommand{\F}{\mathcal{F}}
\newcommand{\E}{\mathcal{E}}
\newcommand{\A}{\mathcal{A}}
\newcommand{\B}{\mathcal{B}}
\newcommand{\I}{\mathcal{I}}
\newcommand{\LL}{\mathcal{L}}
\newcommand{\PP}{\mathbb{P}}
\newcommand{\x}{\mathbf{x}}
\newcommand{\one}{{\ensuremath {\mathds 1} }}
\newcommand{\ep}{\varepsilon}
\newcommand{\Om}{\Omega}
\newcommand{\om}{\omega}
\newcommand{\dd}{\partial}
\newcommand{\half}{\frac{1}{2}}
\newcommand{\supp}{\mathrm{supp}}
\newcommand{\wto}{\rightharpoonup}
\newcommand{\intR}{\int_{\R ^2}}
\newcommand{\intRN}{\int_{\R ^{2N}}}
\newcommand{\bral}{\left<}
\newcommand{\ketr}{\right>}
\newcommand{\rhoP}{\rho_{\Psi}}
\newcommand{\pot}{V_{\om,k}}
\newcommand{\potN}{V_{\om,k}^N}
\newcommand{\LLLN}{\mathfrak{H} ^{N}}
\newcommand{\LLL}{\mathfrak{H}}
\newcommand{\LLLh}{H ^{\rm L}}
\newcommand{\LLLf}{\E ^{\rm L}}
\newcommand{\LLLe}{E ^{\rm L}}
\newcommand{\LLLm}{\Psi ^{\rm L}}
\newcommand{\Lgv}{L_{\rm qh}}
\newcommand{\Barg}{\B}
\newcommand{\BargN}{\B ^{N}}
\newcommand{\Bargh}{H ^{\B}}
\newcommand{\Bargm}{F^ {\B}}
\newcommand{\Barge}{E ^{\B}}
\newcommand{\Lau}{\Psi ^{\rm Lau}}
\newcommand{\cLau}{c _{\rm Lau}}
\newcommand{\intN}{\I_N}
\newcommand{\Ker}{\mathrm{Ker} (\intN)}
\newcommand{\PKerp}{P_{\mathrm{Ker} (\intN) ^{\perp}}}
\newcommand{\gap}{\mathrm{gap}}
\newcommand{\PsiGV}{\Psi ^{\rm qh}}
\newcommand{\mopt}{m_{\rm opt}}
\newcommand{\ropt}{r_{\rm opt}}
\newcommand{\muN}{\mu_{N}}
\newcommand{\muNone}{\mu_{N} ^{(1)}}
\newcommand{\ZN}{\mathcal{Z}_{N}}
\newcommand{\AN}{\A_{N}}
\newcommand{\FN}{\F_{N}}
\newcommand{\FNe}{F_{N}}
\newcommand{\HN}{H_{N}}
\newcommand{\Vm}{W_m}
\newcommand{\FNt}{\tilde{\F}_{N}}
\newcommand{\FNet}{\tilde{F}_{N}}
\newcommand{\HNt}{\tilde{H}_{N}}
\newcommand{\MFf}{\E ^{\rm MF}}
\newcommand{\MFe}{E ^{\rm MF}}
\newcommand{\rhoMF}{\varrho ^{\rm MF}}
\newcommand{\hMF}{h_{\varrho ^{\rm MF}}}
\newcommand{\MFfel}{\E ^{\rm el}}
\newcommand{\MFeel}{E ^{\rm el}}
\newcommand{\rhoMFel}{\varrho ^{\rm el}}
\newcommand{\hMFel}{h_{\varrho ^{\rm el}}}
\newcommand{\MFfth}{\E ^{\rm th}}
\newcommand{\MFeth}{E ^{\rm th}}
\newcommand{\rhoMFth}{\varrho ^{\rm th}}
\newcommand{\Zth}{Z^{\rm th}}
\newcommand{\MFftel}{\tilde{\E} ^{\rm el}}
\newcommand{\MFetel}{\tilde{E} ^{\rm el}}
\newcommand{\rhoMFtel}{\tilde{\varrho} ^{\rm el}}
\newcommand{\MFftth}{\tilde{\E} ^{\rm th}}
\newcommand{\MFetth}{\tilde{E} ^{\rm th}}
\newcommand{\rhoMFtth}{\tilde{\varrho} ^{\rm th}}
\newcommand{\Ztht}{\tilde{Z}^{\rm th}}
\newcommand{\rhoMFF}{\hat{\varrho} ^{\rm MF}}
\newcommand{\rhoMFelF}{\hat{\varrho} ^{\rm el}}
\newcommand{\Rminus}{R_m ^-}
\newcommand{\Rplus}{R_m ^+}
\newcommand{\muxi}{\mu_{x_i}}
\newcommand{\muxj}{\mu_{x_j}}
\newcommand{\delxi}{\delta_{x_i}}
\newtheorem{teo}{Theorem}[section]
\newtheorem{lem}{Lemma}[section]
\newtheorem{pro}{Proposition}[section]
\newcounter{remark}[section]
\newenvironment{rem}{\stepcounter{remark} \vspace{0,1cm} \noindent \textit{Remark \thesection.\theremark}\,}{\vspace{0,2cm}}
\numberwithin{equation}{section}
\begin{document} 
\title{Quantum Hall phases and plasma analogy in rotating trapped Bose gases}
\author{N. Rougerie ${}^{1}$, S. Serfaty ${}^{2,3}$, J. Yngvason ${}^{4,5}$\\
\normalsize\it ${}^{1}$ Universit\'e Grenoble 1 and CNRS, LPMMC, UMR 5493, BP 166, 38042 Grenoble, France.\\
\normalsize\it ${}^{2}$ Universtit\'e Paris 6,  Laboratoire Jacques-Louis Lions, UMR 7598, Paris, F-75005 France.\\
\normalsize\it ${}^{3}$ Courant Institute, New York University, 251 Mercer st, NY 10012, USA.\\
\normalsize\it ${}^{4}$ Fakult\"at f\"ur Physik, Universit{\"a}t Wien, Boltzmanngasse 5, 1090 Vienna, Austria.\\
\normalsize\it ${}^{5}$ Erwin Schr{\"o}dinger Institute for Mathematical Physics, Boltzmanngasse 9, 1090 Vienna, Austria.
}

\date{April 15th, 2013}
 
\maketitle
\centerline{
{\it Dedicated to Herbert Spohn on the occasion of his retirement from the TU M\"unchen}}

\begin{abstract}
A bosonic analogue of the fractional quantum Hall effect occurs in rapidly rotating trapped Bose gases: There is a transition from uncorrelated Hartree states to strongly correlated states such as the Laughlin wave function. This physics may be described by effective Hamiltonians with delta interactions acting on a bosonic $N$-body Bargmann space of analytic functions. In a previous paper \cite{RSY} we studied the case of a quadratic plus quartic trapping potential and derived conditions on the parameters of the model for its ground state to be asymptotically strongly correlated. This relied essentially on energy upper bounds using quantum Hall trial states, incorporating the correlations of the Bose-Laughlin state in addition to a multiply quantized vortex pinned at the origin. In this paper we investigate in more details the density of these trial states, thereby substantiating further the physical picture described in \cite{RSY}, improving our energy estimates and allowing to consider more general 
trapping potentials. Our analysis is based on the interpretation of the densities of quantum Hall trial states as Gibbs measures of classical 2D Coulomb gases (plasma analogy). New estimates on the mean-field limit of such systems are presented. 
\end{abstract}

\tableofcontents

\section{Introduction}\label{sec:intro}

The advent in the 90's of powerful techniques to cool and trap atoms has opened the way to new investigations of quantum phenomena on a macroscopic scale. It has now become possible to isolate extremely cold and dilute atomic gases and maintain them in metastable states that can be efficiently modeled as ground states of effective many-body Hamiltonians with repulsive short range interactions. Although the atoms typically trapped are neutral, one can impose an artificial magnetic field to them. This is usually achieved by rotating the trap \cite{Co,Fet} although there now exist more refined techniques \cite{DGJO}. Exploiting the analogy between the Lorentz and Coriolis force, one can easily realize that the Hamiltonian for neutral atoms in a rotating frame resembles that of charged particles in a uniform magnetic field.

The analogy has been demonstrated by the nucleation of quantized vortices in cold rotating Bose gases (see \cite{Fet} and references therein, in particular \cite{BSSD}), similar to those appearing in type II superconductors submitted to external magnetic fields. In this regime, the atoms all condense in the same one-particle state and the gas forms a Bose-Einstein condensate (BEC). The peculiar properties of quantized vortices, demonstrating the superfluid nature of BECs, have motivated numerous theoretical and mathematical works, see \cite{Aft,Fet,Co,CPRY1,CPRY2} and references therein. 

An even more striking possibility is to create with cold atomic gases phases characteristic of the fractional quantum Hall effect (FQHE), a phenomenon originally observed in 2D electron gases submitted to very large magnetic fields \cite{STG}. In this regime, the Bose gas is no longer a condensate and mean-field theories fail to capture the physics of the system. One has to use a truly many-body description, and strongly correlated phases, such as the celebrated Laughlin state \cite{Lau} may occur. This regime has been hitherto elusive in trapped Bose gases. For large rotation speeds, the centrifugal force can compensate the trapping force and lead to an instability of the gas. As of now, the competing demands of a large rotation speed on the one hand, necessary for entering the FQHE regime, and of a sufficiently stable system on the other hand, have left the FQHE regime unattainable with current technology (see \cite{RRD} for a precise discussion of this point).

Several ideas to get around this difficulty have been proposed, and it is one of those that we  examine from a mathematical point of view, here and in the companion paper \cite{RSY}. As we will see, the proposed modification of the experimental set-up also leads to new physics and mathematics.

\medskip

When discussing the quantum Hall regime it is common \cite{Aft,Fet,Co} to assume that the gas is essentially 2D, that the single-particle states are restricted to lie in the ground eigenspace of the ``magnetic'' kinetic energy operator (i.e. in the lowest Landau level) and that the inter-particles interactions are described by a Dirac delta potential. We will give more details on these approximations, that can to some extent be backed with rigorous mathematics \cite{LS}, in Section~\ref{sec:model}. Assuming their validity, the system may be described in the rotating frame by the following many-body Hamiltonian \footnote{$\rm L$ stands for Landau, referring to the lowest Landau level.} :
\begin{equation}\label{eq:intro LLLh}
\LLLh =  \sum_{j=1} ^N \left(V (z_j) - \frac12 \Om ^2 |z_j| ^2 \right) + g \sum_{i<j} \delta (z_i-z_j)
\end{equation}
``acting'' on the Lowest Landau Level (LLL) for $N$ bosons
\begin{equation}\label{eq:intro LLLN}
\LLLN = \left\{ F(z_1,\ldots,z_N) e^{-\sum_{j=1} ^N |z_j| ^2 / 2 } \in L ^2 (\R ^{2N}),\: F \mbox{ holomorphic and symmetric }\right\}. 
\end{equation}
We have here identified the positions of the particles in the plane with complex numbers $z_1,\ldots,z_N$ and by `symmetric' we mean invariant under the exchange of $z_i$ and $z_j$ for $i\neq j$. The parameter $\Om \geq 0$ is the angular frequency, with the convention that the rotation vector points in the direction perpendicular to the plane in which the particles are confined. The interaction between particles $i$ and $j$ is given by the delta function $g \delta (z_i-z_j)$ where $g$ is proportional to the scattering length of the original 3D interaction potential between the particles \cite{LS}. In our units $\hbar = m = 1$ and we will also fix $\Omega= 1$ in the sequel.

To find the ground state energy $\LLLe$ of the system we minimize the expectation value $\bral \LLLh \ketr_{\Psi}$ amongst $N$-particles states $\Psi$ in $\LLLN$:
\begin{equation}\label{eq:intro LLLe}
\LLLe := \inf\left\{\LLLf[\Psi] = \bral \Psi, \LLLh \: \Psi \ketr,\: \Psi \in \LLLN, \Vert \Psi \Vert_{L^2 (\R ^{2N})} = 1  \right\}.
\end{equation}
More precisely, the energy of a state $\Psi$ reads 
\begin{equation}\label{eq:intro LLLf}
\LLLf[\Psi] :=  \sum_{j=1} ^N \intRN \left(V (z_j) - \frac12 \Om ^2 |z_j| ^2 \right) |\Psi (Z)| ^2 dZ + g \sum_{1 \leq i<j \leq N} \int_{\R^{2(N-1)}} \left| \Psi (z_i = z_j) \right| ^2 d\hat{Z}_i
\end{equation}
where $Z= (z_1,\ldots,z_N)$, $dZ = dz_1\ldots dz_N$, $\Psi (z_i = z_j)$ is the function $\Psi$ evaluated on the diagonal $z_i = z_j$ and $d\hat{Z}_i$ is $dZ$ whith the factor $dz_i$ removed. Note that it makes perfect sense to use a delta interaction potential in $\LLLN$ : since functions in this space are all smooth, the integral
\[
\bral \Psi, \delta(z_1 - z_2) \Psi \ketr := \int_{\R^{2(N-1)}} \left| \Psi (z_2,z_2,z_3,\ldots,z_N) \right| ^2 dz_2 \ldots dz_N = \int_{\R^{2(N-1)}} \left| \Psi (z_1 = z_2) \right| ^2 d\hat{Z}_1
\]
is always well-defined. We remark that strictly speaking \eqref{eq:intro LLLh} does not act on \eqref{eq:intro LLLN} since it does not leave that space invariant. The actual Hamiltonian under consideration is that arising from the quadratic form \eqref{eq:intro LLLf}. We will also use another equivalent formulation of the problem below.

\medskip

In most experiments, the trapping potential $V$ is quadratic 
$$V (x)=  \frac12 \Om _{\perp} ^2  |x| ^2$$
and hence the \emph{effective potential}, 
\begin{equation}\label{eq:intro pot eff}
V_{\rm eff} (x) =   V(x) - \frac12 \Om ^2 |x| ^2,
\end{equation}
that includes the effect of the centrifugal force is bounded below only if $\Om \leq \Om_{\perp}$. There is thus a maximum angular frequency that one can impose on the gas, and the FQHE regime of strong correlations is expected to occur when $\Om \to \Om_{\perp}$, i.e. when the centrifugal force almost compensates the trapping force. To avoid this singularity of the limit $\Om \to \Om_{\perp}$, it has been proposed \cite{MF,Vie} to add a weak anharmonic component to the trapping potential. A simple example is provided by a quadratic plus quartic trap of the form
\begin{equation}\label{eq:intro pot}
V (r) = \frac12 \Om_{\perp} ^2 r ^2 + k r ^4. 
\end{equation}
With such a potential there is no theoretical limit to the rotation speed that one can impose on the system. One may thus expect the regime $\Om \to \Om_{\perp}$ to be less singular and more manageable experimentally. This has been demonstrated in the Gross-Pitaevskii regime by using this kind of trap to rotate Bose-Einstein condensates beyond the centrifugal limit \cite{BSSD}.

\medskip

In \cite{RSY} and the present paper we focus on the particular case where the trap $V$ in \eqref{eq:intro LLLh} is given by \eqref{eq:intro pot} :
\begin{equation}\label{eq:LLLh}
\LLLh =  \sum_{j=1} ^N \left(\om |z_j| ^2 + k |z_j| ^4 \right) + g \sum_{i<j} \delta (z_i-z_j)
\end{equation}
with 
$$\om := \frac12 \left(\Om_{\perp} ^2 - \Om ^2\right).$$
We are mainly interested in two questions : 
\begin{enumerate}
\item In which parameter regime can one obtain strongly correlated states as approximate ground states for this model ?
\item Can the addition of the quartic part of the potential lead to new physics and to phases not accessible with the purely harmonic trap ?
\end{enumerate}
Let us now be more precise about what we mean by strongly correlated states. In the simpler case where $k=0$, it is easy to see (we give more details in Section \ref{sec:model}) that for sufficiently large ratio $g/ \om$, the exact ground state of the system is given by the bosonic Laughlin state
\begin{equation}\label{eq:intro Laughlin}
\Lau = \cLau \prod_{1\leq i < j \leq N} \left( z_i -z_j\right) ^2 e ^{-\sum_{j=1} ^N |z_j| ^2 / 2 } 
\end{equation}
where $\cLau$ is a normalization constant. This wave function\footnote{or rather its fermionic analogue where the exponent of the $(z_i-z_j)$ factor is an odd number, ensuring fermionic symmetry} was originally introduced in \cite{Lau,Lau2} as a proposal to approximate the ground state of a 2D electron gas in a strong magnetic field. The correlations encoded in the holomorphic factor in \eqref{eq:intro Laughlin} decrease the interaction energy and even cancel it in our case of contact interactions.  

More generally (see \cite[Section 2.2]{LS} for further discussion), it is expected that for $g/(N\om)$ of order unity, one encounters a series of strongly correlated states with smaller and smaller interaction energy, ultimately leading to the Laughlin state. Several candidates have been proposed in the literature (see \cite{Co,Vie} for review) for the other correlated states that should occur before eventually reaching the Laughlin state. Although there is some numerical evidence that they have a good overlap with the true ground states, none of these trial functions is as firmly established as the Laughlin state.

On the other hand, for small $g/(N\om)$ one can show that the ground state energy of \eqref{eq:LLLh} is  well approximated by Gross-Pitaevskii theory \cite{LSY}, that is by taking a Hartree trial state 
\begin{equation}\label{eq:intro Hartree}
\Psi (z_1,\ldots,z_N)= \prod_{j=1} ^N \phi (z_j) 
\end{equation}
for some single particle wave function $\phi$. The corresponding GP theory with states restricted to the lowest Landau level has some very interesting features, studied in \cite{ABN1,ABN2}. One thus goes from a fully uncorrelated state for small $g/(N\om)$ to a highly correlated Laughlin state for large $g/(N\om)$.

\medskip

Now, what changes in this scenario when a quartic component is added to the trap? The tools of \cite{LSY} still apply in the Gross-Pitaevskii regime \cite{Dra}, which leads to a GP theory with some specific new aspects \cite{BR,R}. The strongly correlated regime is more difficult: The Laughlin state is no longer an exact eigenstate of the Hamiltonian, so it is not obvious that one should eventually reach this state. One should certainly expect that the ground state will essentially live in the kernel of the interaction operator $\intN = \sum_{i<j} \delta(z_i-z_j)$
\begin{equation}\label{eq:intro kernel}
\Ker = \left\{  \Lau F(z_1,\ldots,z_N),\: F \mbox{ holomorphic and symmetric}  \right\} \subset \LLLN
\end{equation}
in a regime where the interaction energy dominates the physics, i.e. when $g$ is sufficiently large and $\om,k$ sufficiently small. It is of interest to derive explicit conditions on the order of magnitude of these parameters that is needed to obtain a strongly correlated state in a certain limit, and also to obtain rigorous estimates for the energy $\LLLe$ in this regime. 

Also, a remarkable new feature of the model with $k\neq 0$ is that it is now possible to consider negative values of $\om$, that is, to go beyond the centrifugal limit of the harmonic trap by taking $\Om > \Om_{\perp}$. In this case, the effective potential \eqref{eq:intro pot eff} has a local maximum at the origin, which can lead to new physics. Indeed, it can be shown (see below) that the matter density of the Laughlin state is almost flat in a disc around the origin. If the trap develops a maximum there, the Laughlin state will have a large potential energy and one may expect that other strongly correlated phases belonging to $\Ker$, with depleted density at the center of the trap, will be preferred to the Laughlin state, even for large $g$ and small $\om,k$. 

\medskip

Our main results on the model \eqref{eq:LLLh}, derived in \cite{RSY}, can be summarized as follows 
\begin{itemize}
\item we rigorously identify conditions on the parameters of the problem $g,\om,k$ and $N$ under which the ground state of $\LLLh$ is fully correlated in the sense that its projection on the orthogonal complement of \eqref{eq:intro kernel} vanishes in a certain limit.
\item within this regime we obtain upper and lower bounds to the energy whose order of magnitudes match in the appropriate limit.
\item we can identify a regime where the Laughlin state does \emph{not} approximate the true ground state, even if it almost fully lives in $\Ker$. We prove that a state with higher angular momentum is preferred, which can never happen in a purely harmonic trap with $k=0$.
\end{itemize}
Precise statements are given in Section \ref{sec:model results} below, after we have recalled several facts about the formulation of Problem \eqref{eq:intro LLLe} using the Bargmann space of holomorphic functions. The proofs are given in \cite[Section 5]{RSY}. They mostly rely on adequate energy upper bounds derived with fully-correlated states of the form \eqref{eq:intro kernel}. Indeed, since the Laughlin state is no longer a true eigenstate of the Hamiltonian, the evaluation of its potential energy, and that of other candidate trial states, is a non trivial problem. Also, as discussed above, the effective potential changes from having a local minimum at the origin to having a local maximum when $\om$ is decreased. We thus need to have some flexibility in the matter density of our trial states to adapt to this behavior, which leads us to the form 
\begin{equation}\label{eq:trial states}
\PsiGV _m = c_m \prod_{j=1} ^N z_j ^m \prod_{1 \leq i<j\leq N} \left( z_i - z_j\right) ^2 e^{-\sum_{j=1} ^N |z_j| ^2 / 2}
\end{equation}
were $c_m$ is a normalization constant. One recovers the pure Laughlin state for $m=0$ and the states with $m>0$ are usually referred to as Laughlin quasi-holes \cite{Lau,Lau2} (hence the label $\rm qh$). They were introduced as elementary excitations of the pure Laughlin state. The factor $\prod_{j=1} ^N z_j ^m$ is interpreted as an additional multiply quantized vortex located at the origin. Its role is to deplete the density of the state when $\omega < 0$ to reduce potential energy.

We have no rigorous argument allowing to prove that the true ground states are asymptotically of the form above in the strongly correlated regime although we do believe that it is the case. A rudimentary lower bound to the energy confirms however that the trial states \eqref{eq:trial states} at least give the correct order of magnitude for the energy in the strongly correlated regime, as stated in Theorem \ref{teo:result ener} below.

The method we use in \cite{RSY} to evaluate the energy of these states relies on the representation of the effective potential 
\begin{equation}\label{eq:defi pot}
\pot (r) = \om r^2 + k r^4 
\end{equation}
in terms of angular momentum operators on the Bargmann space of analytic functions (see Section \ref{sec:model} below).  This leads relatively easily to energy estimates that imply criteria for full correlation.  On the other hand, this method allows only a limited physical interpretation of the results because it says nothing about the character of the particle density and its change as the parameters are varied.
Moreover, it is from the outset limited to radial potentials of a special kind.  Anisotropic potentials, but also a radial potential like
\[
V (r) = - k_4 r ^4 + k_6 r ^6 
\] 
with $k_4,k_6 >0$ cannot be treated by this method. For these reasons the focus of the present paper is on a different aspect than in \cite{RSY}, namely on {\it rigorous estimates on the particle density in strongly correlated states}. Such density estimates  imply in particular an alternative method for obtaining energy estimates that works also for more complicated potentials than \eqref{eq:defi pot}. In the following we summarize this method.

Since we are interested in the strongly correlated regime, all the trial states we use will belong to $\Ker$ and thus have zero interaction energy. Our task is then to calculate their potential energy 
\begin{equation}\label{eq:intro pot ener}
N \int_{\R ^2} V (z) \rhoP (z) dz 
\end{equation}
where
\begin{equation}\label{eq:intro density}
\rhoP (z) = \int_{\R ^{2(N-1)}} \left| \Psi (z,z_2,\ldots,z_N)\right| ^2 dz_2\ldots dz_N
\end{equation}
is the one-particle density of the state $\Psi$, that we have defined so that its integral is $1$. Given a candidate trial state we thus want to evaluate precisely what the corresponding matter density is. This will also provide a better understanding of the wave functions \eqref{eq:trial states} that play a central role in FQHE physics \cite{Gir,LFS,STG}. As we will see, this understanding is crucial for the interpretation of our results on the minimization problem \eqref{eq:intro LLLe}.

Our main tool is the well-known plasma analogy, originating in \cite{Lau,Lau2}, wherein the density of the Laughlin state is interpreted as the Gibbs measure of a classical 2D Coulomb gas (one component plasma). More precisely, after a scaling of space variables, one can identify the $N$-particle density of the Laughlin state with the Gibbs measure of a 2D jellium with mean-field scaling, that is a system of $N$ particles in the plane interacting via weak (with a prefactor $N^{-1}$) logarithmic pair-potentials and with a constant neutralizing background. Within this analogy the  vortex of degree $m$ in \eqref{eq:trial states} is interpreted as an additional point charge pinned at the origin.

Existing knowledge (e.g. \cite{CLMP,Kie1,KS,MS,SS}) about the mean-field limit for classical particles then suggests that we should be able to extract valuable information on the density of the Laughlin and related states in the limit $N\to \infty$. For our purpose, precise estimates that are not available in the literature are required, so we develop a new strategy for the study of the mean field limit that gives explicit and quantitative estimates on the fluctuations about the mean field density that are of independent interest.


It is convenient to work with scaled variables, defining (we do not emphasize the dependence on $m$)
\begin{equation}\label{eq:intro scaling}
\muN (Z) := N ^N \left| \PsiGV_m (\sqrt{N} Z )\right| ^2.
\end{equation}
The plasma analogy consists in comparing the one-body density corresponding to \eqref{eq:intro scaling}, i.e.
\[
\muNone (z) = \int_{\R ^{2(N-1)}} \muN (z,z_2,\ldots,z_N) dz_2\ldots dz_N, 
\]
with the minimizer $\rhoMF$ of the mean-field free energy functional\footnote{Here and in the sequel we will drop integration elements from integrals when there is no possible confusion.} 
\begin{equation}\label{eq:intro MFf}
\MFf [\rho] = \intR  \Vm \rho + 2 D(\rho,\rho) + N ^{-1} \int_{\R ^2} \rho \log \rho 
\end{equation}
amongst probability measures on $\R ^2$, $\rho\in \PP(\R ^2)$. Here 
\[
\Vm (r) = r ^2 - 2 \frac{m}{N} \log r  
\]
and the notation 
\begin{equation}\label{eq:2D Coulomb}
D(\rho,\rho) = -\iint_{\R^2\times \R^2} \rho(x) \log |x-y|  \rho(y) dxdy
\end{equation}
stands for the 2D Coulomb energy.

We will provide new estimates for this classical problem, discussed at length in Section \ref{sec:QHphases}. Theorem \ref{teo:MF limit} therein is our main result in this direction. Combined with Proposition \ref{pro:MF func} below it can be summarized as follows

\begin{teo}[\textbf{Plasma analogy for QH trial states}]\label{teo:plasma intro}\mbox{}\\
There exists a constant $C>0$ such that for large enough $N$ and any smooth function $V$ on $\mathbb R^2$
\begin{equation}\label{densitydiff}
\left\vert \intR V \left(\muNone - \rhoMF\right)  \right\vert \leq  CN^{-1/2} \log N \Vert \nabla V \Vert_{L ^2 (\R ^2)} + C N ^{-1/2}\Vert \nabla V \Vert_{L ^{\infty} (\R ^2)}
\end{equation}
if $m\lesssim N ^2$, and  
\begin{equation}\label{densitydiff2}
\left\vert \intR V \left(\muNone - \rhoMF \right)  \right\vert \leq  CN^{1/2} m ^{-1/4} \Vert V \Vert_{L ^{\infty} (\R ^2)}
\end{equation}
if $m\gg N ^2$.
\end{teo}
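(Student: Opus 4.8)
The plan is to exploit the plasma analogy to turn \eqref{densitydiff}--\eqref{densitydiff2} into a statement about a two-dimensional Coulomb gas. After the scaling \eqref{eq:intro scaling} the $N$-body density is the Gibbs measure
\begin{equation*}
\muN(Z) = \frac{1}{\ZN}\, e^{-N \HH_N(Z)}, \qquad \HH_N(Z) = \sum_{j=1}^N \Vm(z_j) - \frac{4}{N}\sum_{i<j}\log|z_i-z_j|,
\end{equation*}
and, writing $\mu_Z := \tfrac1N\sum_{j=1}^N \delta_{z_j}$ for the empirical measure, $\HH_N(Z) = N\big(\int \Vm\,d\mu_Z + 2D(\mu_Z,\mu_Z)\big)$ up to the diagonal, while $\muNone = \mathbb{E}_{\muN}[\mu_Z]$. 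I would take as reference the minimizer $\rhoMF$ of \eqref{eq:intro MFf} itself, using the Euler--Lagrange equation and the regularity and support information provided by Proposition~\ref{pro:MF func}. Since $\int V(\muNone-\rhoMF) = \mathbb{E}_{\muN}\!\int V\,d(\mu_Z-\rhoMF)$ and, by duality, $|\int V\,d\nu| \le \|\nabla V\|_{L^2}\,\|\nu\|_{\dot H^{-1}}$ with $\|\nu\|_{\dot H^{-1}}^2 \simeq D(\nu,\nu)$ for a neutral $\nu$, the bound \eqref{densitydiff} reduces to controlling the \emph{expected Coulomb energy of the fluctuation} $\mathbb{E}_{\muN}\big[D(\mu_Z-\rhoMF,\mu_Z-\rhoMF)\big]$.

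The engine is then a two-sided bound on the free energy $-\log\ZN$. For the upper bound I would test the Gibbs variational principle against the product state $\rhoMF^{\otimes N}$, which gives $-\log\ZN \le N^2\MFf[\rhoMF] - 2N\,D(\rhoMF,\rhoMF)$. For the matching lower bound I would expand $\HH_N$ around $\rhoMF$ by bilinearity of $D$:
\begin{equation*}
\int \Vm\,d\mu_Z + 2D(\mu_Z,\mu_Z) = \int \Vm\,d\rhoMF + 2D(\rhoMF,\rhoMF) + \int \zeta\,d\mu_Z + 2D(\mu_Z-\rhoMF,\mu_Z-\rhoMF),
\end{equation*}
where $\zeta := \Vm + 4U^{\rhoMF} - c \ge 0$ vanishes on $\supp\rhoMF$ by the variational equation (up to an $O(N^{-1})$ entropy correction) and $U^{\rhoMF} := -\int\log|\cdot-y|\,\rhoMF$. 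Both new terms are \emph{nonnegative}: the fluctuation energy is exactly the quantity we want to bound, and $\zeta$ provides confinement and integrability of $\ZN$. Feeding this pointwise lower bound into the exact identity $-\log\ZN = N\int\HH_N\,d\muN + \int\muN\log\muN$, using positivity of the relative entropy $\mathrm{Ent}(\muN \,|\, \rhoMF^{\otimes N})\ge 0$ to absorb $\int\muN\log\muN$, and comparing with the upper bound, I expect the leading $O(N^2)$ terms to cancel and to be left with
\begin{equation*}
\mathbb{E}_{\muN}\big[D(\mu_Z-\rhoMF,\mu_Z-\rhoMF)_{\mathrm{reg}}\big] \lesssim N^{-1}\log N .
\end{equation*}

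Combined with the duality of the first paragraph and a Cauchy--Schwarz step in the expectation, the square root of this bound (together with the logarithmic bookkeeping of the regularization) produces the term $N^{-1/2}\log N\,\|\nabla V\|_{L^2}$ in \eqref{densitydiff}, while the error made in smearing the point charges contributes the term $N^{-1/2}\|\nabla V\|_{L^\infty}$. \emph{The main obstacle is precisely the lower bound}: the fluctuation energy $D(\mu_Z-\rhoMF,\cdot)$ carries the infinite self-interaction of the point charges, so the splitting above is only formal. I would cure this by smearing each $\delta_{z_j}$ over the inter-particle scale $\eta\sim N^{-1/2}$, which replaces the divergence by a controlled error; the whole argument hinges on a quantitative renormalized-energy estimate at this optimal scale with error no worse than $O(N\log N)$, and this is the technical heart.

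Finally, the regime $m\gg N^2$ requires a separate treatment, which is why \eqref{densitydiff2} is both weaker and phrased through $\|V\|_{L^\infty}$. There the effective potential $\Vm(r)=r^2-\tfrac{2m}{N}\log r$ is minimized at $r_\ast=\sqrt{m/N}\gg\sqrt N$, so $\rhoMF$ degenerates into a thin annulus of radius $r_\ast$ and the sharp $\dot H^{-1}$ duality is no longer efficient. The free-energy bound still holds, but I would transcribe it into a density estimate through $\|V\|_{L^\infty}$ by a Chebyshev-type argument, optimizing the smearing scale against the annular geometry; this robust but cruder route yields the rate $N^{1/2}m^{-1/4}$ of \eqref{densitydiff2}.
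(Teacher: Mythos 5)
For the regime $m\lesssim N^2$ your plan is essentially the paper's own proof of \eqref{densitydiff}: the free-energy sandwich obtained by testing the Gibbs variational principle with $(\rhoMF)^{\otimes N}$ on one side, and on the other side expanding the Hamiltonian around $\rhoMF$ via the Euler--Lagrange equation while keeping the positive Coulomb fluctuation term; then the conversion of the expected fluctuation energy into a density estimate by $\dot H^{-1}$ duality plus Cauchy--Schwarz. This is precisely Theorem \ref{teo:MF limit} together with Lemma \ref{lem:key}, and the ``quantitative renormalized-energy estimate at scale $N^{-1/2}$ with error $O(N\log N)$'' that you flag as the technical heart is exactly what Onsager's lemma (Lemma \ref{lem:Onsager}) combined with the smearing estimate of Lemma \ref{lem:Lieb} delivers, yielding \eqref{eq:factor rigor} and then \eqref{eq:1 particle lim}. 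So for $m\lesssim N^2$ your outline is correct and coincides with the paper's route.

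The thermal regime $m\gg N^2$ is a genuine gap. Your proposal --- keep the same free-energy bounds and transcribe them into an $L^\infty$-dual estimate by a ``Chebyshev-type argument'' --- fails on two counts. First, there is no passage from control of the expected Coulomb ($\dot H^{-1}$) energy of the fluctuation to control of $\intR V\,(\muNone-\rhoMF)$ for merely bounded $V$: the total variation norm is not dominated by the $\dot H^{-1}$ norm (rapidly oscillating measures have arbitrarily small Coulomb energy but total variation of order one), and no Chebyshev step bridges this. Second, and more fundamentally, the scales are incompatible: to reach the rate $N^{1/2}m^{-1/4}$ one must, via relative entropy and the CKP inequality \eqref{eq:CKP}, resolve the free energy to accuracy $O(N^2m^{-3/2})$, which is $o(N^{-1})$ when $m\gg N^2$, whereas the smeared-charge machinery inevitably produces errors of order $\log N$ --- larger by many orders of magnitude. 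This is exactly why Section \ref{sec:MF thermal} abandons the Coulomb-fluctuation route in that regime and argues differently: the interaction is bounded below by absorbing it into a $(1-\beta)$-dilation of the one-body potential with $\beta=2(N-1)/m$, the cross term $\Gamma_\beta$ in \eqref{eq:W pot}--\eqref{eq:Gamma beta} being nonnegative for this choice, so that the Gibbs measure is dominated by an explicit product measure; then subadditivity of the entropy and CKP give $\Vert \muNone - \rhoMFtth\Vert_{\rm TV}\le CN^{1/2}m^{-1/4}$, and the statement \eqref{densitydiff2} for $\rhoMF$ follows from the profile comparisons of Proposition \ref{pro:MF func}, in particular \eqref{eq:MF ther regime TV}. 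Some argument of this perturbative one-body type, exploiting that the Coulomb term is genuinely small once the gas is spread over an annulus of radius $\sqrt{m/N}$, is indispensable here; your outline does not contain it.
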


This result allows to evaluate the one-particle density of our trial states in a simple and explicit way. Of course it cannot be directly applied with $V= \pot(\sqrt{N} \: \cdot \: )$, i.e.,  the suitably rescaled physical potential, because of the growth at infinity of this potential. To compensate this growth we shall provide exponential decay estimates for the density of our trial states (see Theorem \ref{teo:QH phases}). These show that the contribution of large radii $r$ to \eqref{eq:intro pot ener} is negligible, so we can apply Theorem \ref{teo:plasma intro} to a suitable truncation of $V= \pot(\sqrt{N} \: \cdot \: )$ and deduce estimates of the potential energy \eqref{eq:intro pot ener}. Optimizing these over $m$, we find 
\begin{equation}\label{eq:intro m opt}
\mopt = \begin{cases}
               0 \mbox{ if } \om \geq - 2 k N \\
               - \frac{\om}{2 k} - N \mbox{ if } \om < - 2 k N.
              \end{cases} 
\end{equation}
We interpret this as a strong indication that, within the fully correlated regime, a transition occurs for $\om < 0$ and $|\om| \propto k N$ between a pure Laughlin state and a correlated state with a density depletion at the origin. Interestingly we also find that the character of the mean-field density $\rhoMF$ and thus that of the one-particle density of the state \eqref{eq:trial states} strongly depends on $m$: For $m\ll N ^2$ it is correctly approximated by a flat density profile located in a disc or an annulus (depending on the value of $m$), whereas for $m\gg N ^2$ the density profile is approximately a radial Gaussian centered on some circle. This is due to a transition from a dominantly electrostatic to a dominantly thermal behavior of the 2D Coulomb gas to which we compare our trial states. Using the expression of the optimal value of $m$ given in \eqref{eq:intro m opt} this suggests a further transition in the ground state of \eqref{eq:intro LLLe} in the regime $|\om|\propto k N ^2$ . Establishing 
these phenomena rigorously remains a challenging open problem. 

\medskip

The rest of the paper is organized as follows: In Section \ref{sec:model} we formulate \eqref{eq:intro LLLe} precisely as a minimization problem over the Bargmann space of analytic functions. In Section \ref{sec:results} we state our main results about this model, whose proofs are given in \cite{RSY}, and discuss in Section \ref{sec:trial intro} how Theorem \ref{teo:plasma intro} allows to interpret and improve them. Section \ref{sec:QHphases} is the mathematical core of the present paper. It contains the details on the plasma analogy, along with the proof of Theorem \ref{teo:plasma intro} and related results. Finally, Section~\ref{sec:energy} shows how to use the plasma analogy to obtain rigorous estimates of the potential energy of our QH trial states.

\section{Rotating bosons in the lowest Landau level}\label{sec:model results}

\subsection{The model}\label{sec:model}

In this section we discuss the derivation of the effective Hamiltonian that we will study and its most important properties. Our starting point, Equation \eqref{eq:intro ham 3D} below, is the full 3D Hamiltonian for a rotating trapped Bose gas with repulsive interactions given by a two-body  potential $v\geq 0$. We make three standard approximations: (1) the motion along the axis of rotation is frozen, (2) the states for the motion in the plane are reduced to the lowest Landau level, (3) the interaction potential is replaced by a delta function $g \delta$ where $g$ is proportional to the scattering length of $v$ (see \cite[Appendix C]{LSSY} for a definition).  

These approximations have been studied in \cite{LS} in the case $k=0$. Rigorous bounds quantifying their validity have been derived, and the approach can be adapted with almost no modifications to the case $k\neq 0$. For these reasons we only sketch the derivation of the reduced Hamiltonian and focus on its essential properties.



\subsubsection*{The full many-body Hamiltonian}

The rotating Bose gases, where it is proposed to try to create quantum Hall phases, can be described \emph{in the rotating frame} using the following many-body Hamiltonian
\begin{equation}\label{eq:intro ham 3D}
H^{\rm 3D}_N=\sum_{j=1}^N\left\{\half \left(\mathrm i\nabla_j+\mathbf A(\x_j)\right)^2 + V(\x_j) - \frac12 \Om ^2 (x_{j,1} ^2 + x_{j,2} ^2) \right\}
+  \sum_{ i < j } v(|\x_i - \x_j|).  
\end{equation}
Here  $\x_j=(x_{j,1},x_{j,2},x_{j,3})\in\mathbb R^3$ is the coordinate of the $j$-th particle, 
$V$ is a confining external potential and $v$ the two-body interaction potential. Units are chosen so that $\hbar=m=1$. We also choose the coordinate axis so that the rotation vector $\mathbf\Omega=\Omega\mathbf e_3$ is proportional to the unit vector $\mathbf e_3$ in the 3-direction. The vector potential 
$$\mathbf A(\x)=\Omega (x_2,-x_1,0)$$ 
in \eqref{eq:intro ham 3D} represents the Coriolis force (analogous to the Lorentz force) while the negative quadratic potential proportional to $\Omega^2$ corresponds to to the centrifugal force. We are interested in the ground state of this Hamiltonian.

We consider a trapping potential of the form
\begin{equation}\label{eq:intro pot harm}
V(\x) = \frac12 \Omega_{\perp} ^2 \left(x_{1} ^2+x_{2} ^2 \right) + k \left(x_{1} ^2+x_{2} ^2 \right) ^2 + \frac{1}{2} \Omega_{\parallel} x_{3} ^2
\end{equation}
and it is necessary for $H^{\rm 3D}_N$ to be bounded below to require that the effective potential
\begin{equation}\label{eq:pot eff}
V_{\rm eff} (\x) =   V(\x) - \frac12 \Om ^2 (x_{1} ^2 + x_{2} ^2),
\end{equation}
taking into account the effect of the centrifugal force, remains bounded below. This is ensured if either $k>0$ or $k=0$ and $\Omega < \Omega_{\perp}$.



\subsubsection*{Reduction to the lowest Landau level}

The 1-particle Hamiltonian in \eqref{eq:intro ham 3D} is given by
\begin{equation}\label{eq:onepartham} 
H_1=\half \left(\mathrm i\nabla  +\mathbf A(\x)\right)^2 +\half(-\partial_3^2+ \Omega_\parallel^2\,x_3^2) + V(\x) - \frac12 \Om ^2 (x_{1} ^2 + x_{2} ^2).
\end{equation} 
The first term is the 2D Landau Hamiltonian with spectrum $2(n+\half)\mathrm \Omega$, $n\in \N$, the second a one-dimensional harmonic oscillator in the 3-direction with spectrum $(n_\parallel+\half)\Omega_\parallel$, $n_\parallel\in \N$.  These two terms commute and can thus be diagonalized simultaneously.

When the energy scales associated with the effective trapping potential in the 12-plane
$$ \frac12 \Omega_{\perp} ^2 \left(x_{1} ^2+x_{2} ^2 \right) + k \left(x_{1} ^2+x_{2} ^2 \right) ^2- \frac12 \Om ^2 (x_{1} ^2 + x_{2} ^2) $$
are much smaller than the gaps between the energy levels of the first two operators in \eqref{eq:onepartham} it is natural to restrict attention to joint eigenstates of the first two terms of \eqref{eq:onepartham} with $n=n_\parallel=0$.
The motion in the 3-direction is  then `frozen' in the ground state of the harmonic oscillator\footnote{We could also consider a more general trapping potential in the 3-direction.}. As far as the 12-plane is concerned the state is in the lowest Landau level (LLL). 

Henceforth we choose units so that $\Omega=1$. Replacing $(x_1,x_2)$ by the complex variables $z=x_1+\mathrm i x_2$ and $\bar z=x_1-i x_2$  and denoting
$\partial_z=\half(\partial_1-\mathrm i\partial_2)$, $ \partial_{\bar z}  = \bar \partial _z =\half(\partial_1+\mathrm i\partial_2)$
we can write the Landau Hamiltonian as
\beq \half (\mathrm i\nabla_\perp+\mathbf A(\x))^2=2\left(a^\dagger a+\half\right)\eeq 
with the creation and annihilation operators
$a^\dagger=-\partial+\half\bar z$, $a=\bar \partial+ \half z$.
Eigenfunctions $\psi(z,\bar z)$ in the lowest Landau level are solutions of  the equation $a\psi=0$, i.e.,
$$\bar \partial_z \psi(z,\bar z)=-\frac 12 z\psi(z,\bar z).$$ 
They are therefore of the form
\beq \label{bargmannfunctions}
\psi(z,\bar z)=f(z)\exp(-|z|^2/2)
\eeq 
with $\bar \partial _z f(z)=0$, i.e., $f$ is an analytic function of $z$.
   
   

\subsubsection*{The Bargmann space}

As seen above, single particle wave functions in the LLL correspond uniquely to functions in the Bargmann \cite{Bar,GJ} space $\mathcal B$ of  analytic functions $f$ on $\C$ such that
\beq 
\langle f,f\rangle_{\B}=\int_{\C} |{f(z)}|^2 \exp(-|z|^2)\, dz<\infty
\eeq 
where $dz$ denotes the Lebesgue measure on $\mathbb C\simeq \mathbb R^2$. We denote by $\LLL$ the space of the full wave functions $\psi$ including the Gaussian factor in \eqref{bargmannfunctions}. It is a subspace of the Hilbert space of square integrable functions w.r.t.  $dz$. Thus, state vectors in the LLL can either be regarded as elements of $\B$ or of $\LLL$ and we shall make use of both points of view.

For our $N$-body system of bosons in the LLL the corresponding Hilbert spaces are the symmetric tensor powers of $\B$ or $\LLL$, denoted by $\BargN$ and  $\LLLN$~: 
\begin{align}\label{eq:BargN}
\BargN &=  \left\{ F \mbox{ holomorphic and symmetric such that } F(z_1,\ldots,z_N) e^{-\sum_{j=1} ^N |z_j| ^2 / 2 } \in L ^2 (\R ^{2N})\right\}\\
\label{eq:LLLN}
\LLLN &= \left\{ \Psi (z_1,\ldots,z_N) = F(z_1,\ldots,z_N) e^{-\sum_{j=1} ^N |z_j| ^2 / 2 } \in L ^2 (\R ^{2N}),\: F \in \BargN \right\}. 
\end{align}
{Note that functions $\Psi$ in the LLL actually depend on both $z$ and $\bar{z}$ (because of the gaussian factor), but we believe that our notation does not lead to any confusion.
The scalar product on $\BargN$ is given by
\begin{equation}\label{eq:BargN scalar}
\bral F,\:G \ketr_{\BargN} = \bral F e^{-\sum_{j=1} ^N |z_j| ^2 / 2 },\:G e^{-\sum_{j=1} ^N |z_j| ^2 / 2 } \ketr_{L ^2 (\R ^{2N})}.
\end{equation}



\subsubsection*{The reduced $N$-body Hamiltonian}

We can now define our energy functional. When the above reductions have been made, the only term in the one-particle Hamiltonian \eqref{eq:onepartham} that is not fixed is the effective potential term in the $12$ plane. For a short range potential it makes sense physically to replace $v$ by a delta pair-potential $g \delta$. It also makes sense mathematically since the wave functions of the lowest Landau level are smooth. Justifying the substitution rigorously is a difficult task \cite{LS} on which we will not elaborate, but if we take this for granted we obtain
\begin{equation}\label{eq:ener LLL}
\LLLf [\Psi] = N \int_{\R ^2} \pot (r) \rhoP (z) dz + 4 N(N-1) g  \int_{\R^{2(N-1)}} \left| \Psi (z_2,z_2,z_3\ldots,z_N)\right| ^2 dz_2\ldots dz_N
\end{equation}
where $\rhoP$ is the one-body density normalized to $1$, $g$ the coupling constant and  
\begin{equation}\label{eq:pot}
\pot (r) = \om r ^2 + k r ^4. 
\end{equation}
The ground-state energy $\LLLe$ is then defined as in \eqref{eq:intro LLLe}. It is useful to reformulate this problem in the Bargmann space:

\begin{lem}[\textbf{Hamiltonian in the Bargmann space}]\label{lem:ham barg}\mbox{}\\
Define the Hamiltonian acting on $\BargN$
\begin{equation}\label{eq:Bargh}
\Bargh := N\left( \om + 2 k \right) + \sum_{j=1} ^N \left(\left( \om + 3 k \right) L_j + k L_j ^2 \right)+ g \sum_{1\leq i<j\leq N} \delta_{ij} 
\end{equation}
with $L _j = z_j \dd _{z_j}$ the angular momentum operator in the $j$-th variable and
\begin{equation}\label{eq:inter Barg} 
\delta _{ij} F (\dots,z_i,\dots,z_j\dots)=\frac 1{2\pi} F \big(\dots,\half(z_i+z_j), \dots,\half(z_i+z_j),\dots\big).
\end{equation} 
We have
\begin{equation}\label{eq:barg LLL plus}
\LLLf[\Psi] = \bral F,\Bargh F \ketr_{\BargN} \mbox{ with } \Psi = F \exp\left(-\sum_{j=1} ^N \frac{|z_j| ^2}{2} \right),\: F \in \BargN.   
\end{equation}
In particular
\begin{equation}\label{eq:barg LLL}
\LLLe = \Barge := \inf \sigma_{\BargN} \Bargh. 
\end{equation}
\end{lem}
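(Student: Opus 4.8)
The plan is to establish the quadratic-form identity \eqref{eq:barg LLL plus}, from which \eqref{eq:barg LLL} follows immediately: by the very definition \eqref{eq:BargN scalar} of the scalar product on $\BargN$, the map $F \mapsto \Psi = F\exp(-\sum_{j} |z_j|^2/2)$ is a norm-preserving bijection from $\BargN$ onto $\LLLN$, so minimizing $\LLLf$ over normalized $\Psi\in\LLLN$ is identical to minimizing the quadratic form $\langle F, \Bargh F\rangle_{\BargN}$ over normalized $F\in\BargN$, i.e. to computing $\inf\sigma_{\BargN}\Bargh$. Everything thus reduces to \eqref{eq:barg LLL plus}, and since the potential and interaction parts of $\LLLf$ in \eqref{eq:ener LLL} are separate quadratic forms, I would match them one at a time to the corresponding pieces of $\Bargh$.

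For the potential part I would first use permutation symmetry and the definition of $\rhoP$ to rewrite $N\intR \pot(r)\rhoP$ as $\sum_{j=1}^N \langle \Psi, \pot(|z_j|)\Psi\rangle_{L^2}$, reducing the computation to a single variable at a time. The key observation is that the monomials $z^n$ form an orthogonal basis of $\B$ with $\|z^n\|_{\B}^2 = \intR |z|^{2n} e^{-|z|^2} = \pi\, n!$, and that they simultaneously diagonalize multiplication by $|z|^2$ and $|z|^4$ and the angular momentum $L = z\partial_z$. A direct Gamma-integral gives $\langle z^n, |z|^2 z^n\rangle_\B = \pi (n+1)! = (n+1)\|z^n\|_\B^2$ and $\langle z^n, |z|^4 z^n\rangle_\B = \pi (n+2)! = (n+1)(n+2)\|z^n\|_\B^2$, so that multiplication by $|z|^2$ and $|z|^4$ act on $\B$ exactly as $L+1$ and $(L+1)(L+2) = L^2 + 3L + 2$. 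Hence $\om |z|^2 + k|z|^4$ corresponds to $kL^2 + (\om + 3k)L + (\om + 2k)$; summing over the $N$ variables and collecting the $N$ constant terms into $N(\om + 2k)$ reproduces precisely the one-body part of $\Bargh$ in \eqref{eq:Bargh}.

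For the interaction part I would show that each $\langle F, \delta_{ij} F\rangle_{\BargN}$ equals the corresponding diagonal integral, and by symmetry it suffices to treat the pair $(1,2)$. Writing out the scalar product and changing variables to $u = \tfrac12(z_1+z_2)$ and $v = z_1 - z_2$ (Jacobian $1$, with $|z_1|^2 + |z_2|^2 = 2|u|^2 + \tfrac12|v|^2$), the inner integral over the relative coordinate becomes $\int_{\C}\overline{F(u+v/2, u-v/2, z_3,\dots)}\, e^{-|v|^2/2}\, dv$. Since $v\mapsto F(u+v/2, u-v/2, z_3,\dots)$ is holomorphic, only its value at $v=0$ survives the angular Gaussian integration, producing $2\pi\, \overline{F(u,u,z_3,\dots)}$; the prefactor $1/(2\pi)$ in the definition \eqref{eq:inter Barg} of $\delta_{12}$ cancels this, leaving exactly $\int |\Psi(z,z,z_3,\dots,z_N)|^2$, the diagonal integral appearing in \eqref{eq:ener LLL}. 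Summing over the $\binom{N}{2}$ pairs and reconciling the numerical coupling constant then matches the interaction term.

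The step I expect to be the main obstacle is this interaction identity. The delicate point is that integrating a holomorphic function against the Gaussian in the relative variable reproduces its value on the diagonal $z_1 = z_2$ — this is what makes the contact interaction a genuine bounded operator $\delta_{ij}$ on the Bargmann space, and it is essentially the mean-value property of holomorphic functions in disguise. By contrast, the potential part is a routine diagonalization in the monomial basis. Along the way I would also record the standard Fubini justification for interchanging orders of integration and for the term-by-term identification in the monomial basis, both of which are licensed by the smoothness and Gaussian decay of lowest-Landau-level functions.
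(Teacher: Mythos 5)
Your proposal is correct and takes essentially the same route as the paper: the interaction identity is proved exactly as there (pass to center-of-mass and relative variables, then use holomorphy in the relative variable so that radial Gaussian integration picks out the diagonal value $F(u,u,\dots)$), and the one-body part rests on the same correspondences $|z|^2 \leftrightarrow L+1$ and $|z|^4 \leftrightarrow L^2+3L+2$. The only cosmetic difference is that the paper obtains these one-body identities by integration by parts (citing Lemma 3.1 of \cite{AB}), whereas you diagonalize the radial quadratic forms in the monomial basis $\{z^n\}$ --- an equally routine computation.
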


\begin{proof}
The fact that the $\delta$ interaction operator acts as \eqref{eq:inter Barg} seems to have been noticed first in \cite{PB}. Indeed, taking \eqref{eq:inter Barg} as the definition of the interaction operator and using the analyticity of $F \in \BargN$, one easily realizes that
\begin{multline} \label{eq:link inter}
\langle F ,\delta _{ij} F \rangle_{\BargN} = \int_{\mathbb C^N}|F(\cdots,z,\cdots,z,\cdots)|^2\exp(-2|z|^2)\,dz \exp(-\hbox{$\sum_{k\neq i,j}$}|z_k|^2)\hbox{$\prod_{k\neq i,j}$}d z_k.
\end{multline}
It is sufficient to perform the computation for a two-body $f\in \Barg ^2$ :
\begin{align*}
\bral f, \delta_{12} f\ketr_{\Barg ^2} &= \frac{1}{2\pi} \iint_{\C ^2} \bar{f} (z_1,z_2) f\left( \frac{z_1+z_2}{2},\frac{z_1+z_2}{2}\right) e^{-|z_1| ^2 - |z_2| ^2}dz_1 dz_2 \\
&= \frac{2}{\pi} \int_{\C} f(u,u) e ^{-2|u| ^2} \int_{\C} \bar{f} (u+v,u-v) e ^{-2 |v| ^2} dv du.
\end{align*}
Fixing $u$, writing $g(v) = f(u+v,u-v)$ with an holomorphic $g(v) = \sum_{n\geq 0} g_n v^n$ we find 
$$ \int_{\C} \bar{f} (u+v,u-v) e ^{-2 |v| ^2} dv = \bar{g}_0 \int_{\C} e ^{-2 |v| ^2} dv = \frac{\pi}{2} \bar{g}(0) = \frac{\pi}{2} \bar{f}(u,u), $$
which is what we need.

On $\mathcal B$ the operator of the 3-component of the angular momentum is  $L_z=z\partial_z$ and an integration by parts shows (see e.g. \cite[Lemma 3.1]{AB}) that, for $f \in\mathcal B$,
\beq\label{eq:squarepot} 
\langle f,L_z f \rangle_{\B}=\int_{\C} ( |z|^2-1)|{f(z)}|^2 \exp(-|z|^2)\,dz. 
\eeq 
Integrating by parts twice we also obtain
\beq\label{eq:quartpot} 
\int_{\C}  |z|^4 |{f(z)}|^2 \exp(-|z|^2)\, dz = \langle f,L_z ^2 f \rangle_{\B} + 3 \langle f,L_z f \rangle_{\B} + 2 \langle f,f \rangle_{\B}.
\eeq 
Putting \eqref{eq:link inter}, \eqref{eq:squarepot} and \eqref{eq:quartpot} together proves \eqref{eq:barg LLL plus} and \eqref{eq:barg LLL}.
\end{proof}

Note that the angular momentum operator $z \dd_z$ on $\Barg$ has eigenvalues $\ell=0,1,2,\dots$ with corresponding normalized eigenfunctions $f_\ell(z)=(\pi \ell!)^{-1/2}\, z^\ell$. On $\LLL$ it acts as  $z\partial_z-\bar z\bar\partial_z$ rather than $z\partial_z$. 



\subsubsection*{Existence of a ground state}

The essential virtue of writing the Hamiltonian as in Lemma \ref{lem:ham barg} is that the operator $\Bargh$ sends $\BargN$ onto $\BargN$. Clearly $(2\pi)\delta _{ij}$, defined in \eqref{eq:inter Barg}, is a projector, in particular a bounded operator. Also $\Bargh$ commutes with the total angular momentum operator
\begin{equation}\label{eq:Ltot}
\LL _N = \sum_{j=1} ^N z_j \dd_{z_j} 
\end{equation}
and we have
\begin{lem}[\textbf{Existence of a ground state}]\label{lem:existence}\mbox{}\\
$\Barge$, as defined in \eqref{eq:barg LLL} is an eigenvalue of $\Bargh$ with (possibly non unique) eigenfunction $\Bargm$. One may choose $\Bargm$ to have a definite angular momentum, say $L_0$:
$$ \LL_N \Bargm = L_0 \Bargm.$$
Equivalently, the infimum in \eqref{eq:intro LLLe} is attained. One may choose a minimizer $\LLLm$ with definite angular momentum $L_0$.
\end{lem}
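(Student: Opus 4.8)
The plan is to exploit the two structural facts recorded just before the statement: $\Bargh$ maps $\BargN$ into itself, and it commutes with the total angular momentum $\LL_N = \sum_j z_j\dd_{z_j}$. Since $\LL_N$ multiplies the monomial $z_1^{a_1}\cdots z_N^{a_N}$ by $a_1+\cdots+a_N$, the space $\BargN$ splits as an orthogonal Hilbert sum
\[
\BargN = \bigoplus_{L\in\N} \HH_L,
\]
where $\HH_L$ is the \emph{finite-dimensional} space of symmetric holomorphic polynomials of total degree $L$, equipped with the Bargmann scalar product. Because $\Bargh$ commutes with $\LL_N$ it preserves each $\HH_L$, and there it acts as a Hermitian operator: the operators $L_j = z_j\dd_{z_j}$ and $L_j^2$ are self-adjoint on $\Barg$ (they are diagonal in the orthonormal basis $z^\ell/\sqrt{\pi\,\ell!}$), while $2\pi\,\delta_{ij}$ is an orthogonal projection, hence self-adjoint and nonnegative. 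Thus $\Bargh$ is self-adjoint and bounded below, and by decomposing any $\Psi=\sum_L\Psi_L$ one gets
\[
\Barge = \inf\sigma_{\BargN}\Bargh = \inf_{L\in\N}\, e_L, \qquad e_L := \min\sigma\!\left(\Bargh|_{\HH_L}\right),
\]
each $e_L$ being attained because $\HH_L$ is finite-dimensional.

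Next I would show that $e_L\to+\infty$, so the outer infimum is in fact a minimum. On $\HH_L$ one has $\sum_j L_j = \LL_N = L$, while $\sum_j L_j^2$ is diagonalized by the symmetrized monomials of degree $L$, with eigenvalues $\sum_j a_j^2 \geq (\sum_j a_j)^2/N = L^2/N$ by Cauchy--Schwarz. Using in addition $g\sum_{i<j}\delta_{ij}\geq 0$, formula \eqref{eq:Bargh} yields, for every $F\in\HH_L$,
\[
\bral F,\Bargh F\ketr_{\BargN} \geq \left( N(\om+2k) + (\om+3k)\,L + \tfrac{k}{N}L^2 \right)\Vert F\Vert_{\BargN}^2 .
\]
Since $k>0$, the right-hand side is a parabola in $L$ with positive leading coefficient; hence it is bounded below (confirming $\Barge>-\infty$) and tends to $+\infty$ as $L\to\infty$ (when $k=0$ one uses instead $\om>0$, the condition for boundedness below). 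Consequently there is a threshold $L_*$ with $e_L > \Barge + 1$ for all $L>L_*$, so $\Barge = \min_{0\leq L\leq L_*} e_L$ is attained at some finite $L_0$.

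It remains to collect the conclusions. Choosing a lowest eigenvector $\Bargm\in\HH_{L_0}$ of $\Bargh|_{\HH_{L_0}}$ gives $\Bargh\Bargm = \Barge\Bargm$ and $\LL_N\Bargm = L_0\Bargm$, proving the statement on $\BargN$. Passing to the LLL through the isometry $F\mapsto F\,e^{-\sum_j|z_j|^2/2}$ of \eqref{eq:barg LLL plus}, the function $\LLLm := \Bargm\,e^{-\sum_j|z_j|^2/2}$ minimizes \eqref{eq:intro LLLe}, since $\LLLf[\LLLm]=\bral\Bargm,\Bargh\Bargm\ketr_{\BargN}=\Barge=\LLLe$; and a direct computation shows that the physical angular momentum $\sum_j(z_j\dd_{z_j}-\bar z_j\bar\partial_{z_j})$ applied to $\LLLm$ reproduces $(\LL_N\Bargm)\,e^{-\sum_j|z_j|^2/2}=L_0\LLLm$, so $\LLLm$ carries definite angular momentum $L_0$. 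The only genuine point of the argument is the coercivity estimate of the second paragraph: it is what turns an a priori infinite-dimensional variational problem into the minimization of $e_L$ over the finitely many sectors $L\leq L_*$, each a finite-dimensional Hermitian eigenvalue problem where attainment is automatic. I expect this to be the crux, the rest being routine; its proof hinges only on the positivity of the quartic coupling $k$ together with the nonnegativity of the delta interaction.
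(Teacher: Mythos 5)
Your proof is correct and follows essentially the same route as the paper: decompose $\BargN$ into the finite-dimensional angular momentum sectors $\{\LL_N = L\}$, drop the nonnegative interaction, and use the operator inequality $\sum_j L_j^2 \geq \frac{1}{N}\bigl(\sum_j L_j\bigr)^2$ (the paper's Lemma \ref{lem:low bound}) to get coercivity in $L$, so the infimum over sectors is attained. The only difference is that you spell out details the paper leaves implicit (self-adjointness on each sector, the isometry to $\LLLN$, and the identification of the physical angular momentum $z\dd_z - \bar z\bar\dd_z$ with $z\dd_z$ on the Bargmann side), which is fine but not a new argument.
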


\begin{proof}
Since $\Bargh$ commutes with $\LL_N$ we may look for its ground state by looking at the joint spectrum of $\Bargh$ and $\LL_N$,
\begin{equation}\label{eq:joint spectrum}
\Barge = \inf_{L\in \N} \: \inf \left\{\bral F , \Bargh F \ketr,\: F \in \BargN, \LL_N F = L F \right\}. 
\end{equation}
The subspace of $\BargN$ given by $\LL_N F = L F$ has finite dimension (it is spanned by the elementary symmetric polynomials of $N$ variables with total degree $L$) and thus the bottom of the spectrum of $\Bargh$ in this subspace is an eigenvalue. Obviously we have
$$ \Bargh \geq N\left( \om + 2 k \right) + \sum_{j=1} ^N \left(\left( \om + 3 k \right) L_j + k L_j ^2 \right)$$
and it is easy to deduce (see Lemma \ref{lem:low bound} below) that
$$\inf \left\{\bral F , \Bargh F \ketr,\: F \in \BargN, \LL_N F = L F \right\}  \to \infty \mbox{ when } L \to \infty.$$
The infimum over $L$ in \eqref{eq:joint spectrum} is thus attained, say at $L_0$, and the bottom of $\sigma_{\BargN} \Bargh$ is an eigenvalue, with eigenfunction (a priori not unique) $\Bargm$, which may be chosen with angular momentum $L_0$ . 
\end{proof}



\subsubsection*{The yrast curve and fully correlated states}

In the case $k=0$, the Hamiltonian $\Bargh$ is simply 
\begin{equation}\label{eq:Bargh harmonic}
\Bargh = N\om + \om \LL_N + g \intN 
\end{equation}
with $\LL_N$ the total angular momentum \eqref{eq:Ltot} and $\intN$ the total interaction operator
\begin{equation}\label{eq:inter tot}
\intN = \sum_{1\leq i<j \leq N} \delta_{ij}. 
\end{equation}
This Hamiltonian has been studied extensively in the literature see e.g. the reviews \cite{Co,Vie} and \cite{LS,LSY,RSY} for references. Its essential property is that the operators $\mathcal L_N$ and{ $\mathcal I_N$ {commute}. The  lower boundary of (the convex hull of) their joint spectrum in a plot with angular momentum as the horizontal axis is called the  {\it yrast curve} (see \cite{RJ1,RJ2,RCJJ,VHR} and \cite{LS} for plots showing its qualitative features.). As a function of the eigenvalues $L$ of $\mathcal L_N$ the Yrast curve $I(L)$ is monotonously decreasing, starting at $I(0)= C N(N-1)$ and hitting zero at $L=N(N-1)$. The monotonicity follows from the observation that if a simultaneous eigenfunction of $\mathcal L_N$ and $\mathcal I_N$ is multiplied by the center of mass, $(z_1+\cdots+z_N)/N$, the interaction is unchanged while the angular momentum increases by one unit. 

For a given ratio $\omega/g$ the ground state of \eqref{eq:Bargh harmonic} (in general not unique) is determined by the point(s) on the yrast curve  where a supporting line has slope $-\omega/g$. For $L\leq N$ the ground state of \eqref{eq:Bargh harmonic} is explicitly known \cite{PB,SW} while for large $N$ and $L\ll N^2$ a Gross-Pitaevskii description with an uncorrelated ground state is asymptotically correct \cite{LSY}. For
$L=N(N-1)$ the unique ground state of $\mathcal I_N$ with eigenvalue $0$ is the bosonic {\it  Laughlin state} \eqref{eq:intro Laughlin} whose wave function in $\mathcal B_N$ is the symmetric polynomial
\beq \label{eq:Laughlin Barg}
F_{\rm Lau}(z_1,\dots,z_N)=c_{\rm Lau} \prod_{1\leq i<j\leq N}(z_i-z_j)^2
\eeq
with a normalization constant $c_{\rm Lau}$. More generally we have 

\begin{lem}[\textbf{Null space of the interaction operator}]\label{lem:null space}\mbox{}\\
The null space of the interaction operator \eqref{eq:inter Barg} is given by  
\begin{equation}\label{eq:kernel}
\Ker = \left\{  F_{\rm Lau} F,\: F \in \BargN \right\}.
\end{equation}
\end{lem}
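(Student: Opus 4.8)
The plan is to prove both inclusions by first turning the operator identity $\intN F=0$ into a pointwise vanishing condition on the coincidence sets $\{z_i=z_j\}$, and then invoking a holomorphic divisibility argument. First I would use the observation recorded before the statement that $(2\pi)\delta_{ij}$ is an orthogonal projection, so that each $\delta_{ij}$ is a nonnegative self-adjoint operator and $\intN=\sum_{i<j}\delta_{ij}$ is a sum of such. Consequently, for $F\in\BargN$,
\[
F\in\Ker \iff \langle F,\intN F\rangle_{\BargN}=0 \iff \langle F,\delta_{ij}F\rangle_{\BargN}=0 \ \text{ for every } i<j,
\]
since a sum of nonnegative expectations vanishes iff each term does.

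Second, I would read off from the identity \eqref{eq:link inter} established in the proof of Lemma \ref{lem:ham barg} that
\[
\langle F,\delta_{ij}F\rangle_{\BargN}=\int_{\C^{N-1}}\big|F(\dots,z,\dots,z,\dots)\big|^2 e^{-2|z|^2}\,dz\,\prod_{k\neq i,j}e^{-|z_k|^2}\,dz_k ,
\]
a genuine nonnegative integral of $|F|^2$ restricted to the diagonal $\{z_i=z_j\}$. Hence this quantity vanishes exactly when the entire function $F$ vanishes identically on $\{z_i=z_j\}$. Combining with the first step, $F\in\Ker$ if and only if $F$ vanishes on every pairwise diagonal.

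The heart of the matter is the third step: translating ``vanishing on all diagonals'' into divisibility by $F_{\rm Lau}$. Writing $\mathcal V=\prod_{1\le i<j\le N}(z_i-z_j)$ for the Vandermonde polynomial, I would argue that since $F$ is entire and vanishes on $\{z_i=z_j\}$, the quotient $F/(z_i-z_j)$ extends to an entire function (fix the remaining variables, factor the single-variable zero at $z_i=z_j$, and invoke Hartogs' theorem together with the removable-singularity theorem to recover joint holomorphy). As the linear forms $(z_i-z_j)$ are pairwise non-associate irreducibles, their product $\mathcal V$ divides $F$ and $F/\mathcal V$ is entire. Because $F$ is symmetric while $\mathcal V$ is antisymmetric, $F/\mathcal V$ is antisymmetric, hence vanishes on every diagonal as well, and a second application of the same division yields $\mathcal V^2\mid F$ with a \emph{symmetric} holomorphic quotient. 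Since $F_{\rm Lau}=c_{\rm Lau}\mathcal V^2$, this is precisely $F=F_{\rm Lau}G$ with $G$ symmetric and holomorphic, giving $\Ker\subseteq\{F_{\rm Lau}G\}$. The reverse inclusion is immediate: any $F_{\rm Lau}G$ vanishes to second order on each diagonal, so $\delta_{ij}(F_{\rm Lau}G)=0$ for all $i<j$, whence $F_{\rm Lau}G\in\Ker$ once it belongs to $\BargN$.

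I expect the main obstacle to lie in the rigorous execution of the third step, namely upgrading the classical polynomial fact (an antisymmetric polynomial is a Vandermonde multiple) to an arbitrary entire $F\in\BargN$: one must guarantee that the successive quotients are jointly holomorphic, which is handled by the removable-singularity theorem in one variable combined with Hartogs' theorem, and invoke coprimality of the linear factors to pass from divisibility by each $(z_i-z_j)$ to divisibility by $\mathcal V$. A secondary point to verify is that the final symmetric quotient $G=F/F_{\rm Lau}$ sits in the space on the right-hand side; this is ensured because the division is exact (so $G$ is entire) and $F=F_{\rm Lau}G\in\BargN$.
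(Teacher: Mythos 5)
Your proof is correct and follows essentially the same route as the paper's: vanishing on all diagonals forces a Vandermonde factor by analyticity, the bosonic symmetry of $F$ makes the quotient antisymmetric so that it vanishes on the diagonals as well, and a second division produces $F_{\rm Lau}$ times a symmetric holomorphic quotient. The paper simply asserts the first step (null states vanish on the diagonals) and the divisibility, so your added justifications via positivity of the operators $\delta_{ij}$ and the removable-singularity/Hartogs argument are extra detail on the same method rather than a different approach.
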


\begin{proof}
Null states of $\intN$ must vanish on the diagonals $z_i = z_j$, hence contain the factor $\prod_{i<j}(z_i-z_j)$ by analyticity. Null state are thus of the form 
$$F(z_1,\dots, z_N) = \prod_{1\leq i<j\leq N}(z_i-z_j) \tilde{G}(z_1,\ldots,z_N)$$
and the bosonic symmetry of the wave function $F$ imposes that $\tilde{G}$ be anti-symmetric (fermionic) with respect to particles exchanges
$$\tilde{G} (\ldots,z_i,\ldots,z_j,\ldots) = - \tilde{G} (\ldots,z_j,\ldots,z_i,\ldots), \mbox{ for any } i,j $$
Then $\tilde{G}$ also has to vanish on the diagonals $z_i=z_j$, which leads to the form  
\begin{equation}\label{eq:kernel barg}
F(z_1,\dots, z_N) = \prod_{1\leq i<j\leq N}(z_i-z_j)^2 G(z_1,\ldots,z_N)
\end{equation}
by analyticity again and proves \eqref{eq:kernel}.
\end{proof}

We shall call states of the form \eqref{eq:kernel barg} {\it fully correlated states} because adding more correlations to them cannot decrease further the interaction energy. Their angular momentum spectrum is contained in $L\geq N(N-1)$ and amongst them only the Laughlin state has angular momentum exactly $N(N-1)$.



\subsubsection*{Spectral gaps of the interaction operator}

Clearly, a proof that a ground state of $\Bargh$ almost fully lives in $\Ker$ will rely on the operator $\intN$ having a spectral gap above its ground state. However this is not known at present although it is widely believed to be true. It is however possible to restrict our attention to states satisfying bounds on their angular momentum. For example, states having too large an angular momentum will be proved to also have an unreasonably large potential energy as compared to that of the trial states we are going to construct. Once restricted to states with finite angular momentum, the interaction operator is a non zero operator on a finite dimensional space and thus it trivially has a gap. Important quantities for us will be the gaps that one obtains when restricting the interaction operator in such a manner :
\begin{equation}\label{eq:gap}
\gap \left(L \right):= \min \left(\sigma \left( \intN | _{ \{ \LL_N = L \}} \right) \setminus \left\{ 0 \right\}\right)
\end{equation}
where $\LL_N$ is given by \eqref{eq:Ltot}. The function $L \mapsto \gap (L)$ is decreasing for the same reason that the yrast curve is.

Numerical diagonalizations of the interaction operator (see e.g. \cite{RCJJ,RJ1,RJ2,VHR}) for small numbers of particles suggest that the unrestricted operator truly has a gap, and that it is attained at angular momentum $N(N-1) - N$ : 
$$\gap \left( L \right) = \gap(N(N-1) -N)\mbox{ for any } L\geq N(N-1)-N.$$
It is also believed (see e.g. discussions in \cite{LS}) that this gap stays of order $1$ when $N\to +\infty$. Proofs of these conjectures are unknown to us and hence our criteria for the ground state of \eqref{eq:LLLh} being strongly correlated will depend on quantities whose $N$-dependence is not known. If true, the conjectures would imply a uniform lower bound
\[
\gap(L) \geq \gap (N(N-1)-N) \geq C \mbox{ for any } L\in \N  
\]
which would simplify the conditions on the parameters to have strong correlations in the ground state that we give below and make explicit their dependence on $N$.

\subsection{Main results}\label{sec:results}

With the notation above one can easily see \cite[Section 2.2]{LS} that when $k=0$ and $g/\om$ is large enough (in dependence with $N$), the ground state of $\Bargh$ is exactly given by the Laughlin state.

When $k\neq 0$, the single particle part of $\Bargh$ is no longer a multiple of $\LL_N$ as in \eqref{eq:Bargh harmonic}. In particular it no longer commutes with the interaction operator and the Laughlin state is no longer an exact eigenstate of the full Hamiltonian. One may only hope that $\Bargm$ becomes almost fully correlated in an appropriate limit. Identifying such a limit is the main goal of the companion paper \cite{RSY}.
Note also that the spectrum of the single particle Hamiltonian in \eqref{eq:Bargh} is given (up to an additive constant) by 
$$\left( \omega + 3k \right)\ell + k\ell ^2, \quad \ell\in \N$$
with normalized eigenfunctions $f_\ell(z)=(\pi \ell!)^{-1/2}\, z^\ell$. When $\om < 0$, which is allowed if $k\neq 0$, the single particle part thus favors states with non zero angular momentum $\ell$, in contrast with the situation when $\om \geq 0$. One may thus imagine to obtain fully correlated states with angular momentum larger than $N(N-1)$. 

\medskip

We now state precisely our results, starting by providing estimates on the angular momentum of the ground state of \eqref{eq:LLLh}. In particular, this will provide the reduction to states having finite angular momenta we were alluding to before. Since our Hamiltonian commutes with the total angular momentum $\LL_N$ we may choose $\LLLm$ in a definite total angular momentum sector. We denote $L_0$ the total momentum of such a ground state :
\begin{equation}\label{eq:mom LLLm}
\sum_{j=1} ^N (z_j \dd_{z_j} - \bar{z}_j\bar{\dd}_{z_j}) \LLLm = L_0 \LLLm
\end{equation}
and we have
\begin{teo}[\textbf{Angular momentum estimates}]\label{teo:momentum}\mbox{}\\
In the limit $N\to \infty$, $\om,k \to 0$ the angular momentum $L_0$ of a ground state of $\LLLh$ satisfies
\begin{enumerate}
\item If $\om \geq - 2 k N$,
\begin{equation}\label{eq:result mom 1}
L_0 \leq 2 N ^2 
\end{equation}
\item If $\om \leq -2 kN$ and $|\om| /k \ll N ^2$,
\begin{equation}\label{eq:result mom 2}
\left| L_0 - \Lgv \right| \leq \sqrt{3} N ^2,
\end{equation}
where
\begin{equation}\label{eq:mom GV}
\Lgv =  - \frac{\om N}{2k} + O(1).
\end{equation}
In particular  $L_0/\Lgv \to 1$ if $ N \ll |\omega|/k\ll N 2$.
\item If $\om \leq -2 kN$ and $|\om| /k \gg N ^2$
\beq\label{eq:result mom 3}
\left| L_0 - \Lgv \right| \leq \sqrt{3} \Lgv ^{1/2} N. 
\eeq
In particular  $L_0/\Lgv \to 1$ if $|\omega|/k \gg N^{2}$.
\end{enumerate}
\end{teo}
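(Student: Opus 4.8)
The plan is to localize $L_0$ by sandwiching the ground state energy $\Barge$ between a lower bound that depends only on the total angular momentum and an upper bound supplied by the quasi-hole trial states $\PsiGV_m$. Both bounds will be governed by one and the same upward parabola in the angular momentum, whose vertex sits at $\Lgv$ up to lower order terms; confining $L_0$ near that vertex is exactly what produces the three estimates.

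For the lower bound I would use the Bargmann form of the Hamiltonian from Lemma~\ref{lem:ham barg}. Since $(2\pi)\delta_{ij}$ is a projector one has $g\,\intN\geq 0$, and since the commuting number operators $L_j=z_j\partial_{z_j}$ satisfy the Cauchy--Schwarz operator inequality $\sum_{j=1}^N L_j^2\geq N^{-1}\LL_N^2$, a ground state $\Bargm$ chosen with $\LL_N\Bargm=L_0\Bargm$ (Lemma~\ref{lem:existence}) obeys
\begin{equation*}
\Barge=\langle\Bargm,\Bargh\,\Bargm\rangle\geq f(L_0),\qquad f(L):=N(\om+2k)+(\om+3k)L+\tfrac{k}{N}L^2 .
\end{equation*}
Because $k>0$ this is an upward parabola; completing the square gives $f(L)=f(L^\ast)+\tfrac{k}{N}(L-L^\ast)^2$ with $L^\ast=-\tfrac{(\om+3k)N}{2k}=\Lgv+O(N)$.

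For the upper bound I would test $\Barge$ against $\PsiGV_m$. By Lemma~\ref{lem:null space} these lie in $\Ker$, so their interaction energy vanishes and they are eigenstates of $\LL_N$ with $L_m=N(N-1)+mN$. Writing $\langle\sum_j L_j^2\rangle_m=N^{-1}L_m^2+N\,V_m$, where $V_m$ is the variance of the single-particle angular momentum, one gets $\Barge\leq f(L_m)+kN\,V_m$. The one quantitative input needed is a bound on $V_m$: since $\PsiGV_m$ is a polynomial whose degree in each variable runs from $m$ to $m+2(N-1)$, its one-particle reduced density matrix is supported on the angular momenta $\{m,\dots,m+2(N-1)\}$, whence $V_m\leq(N-1)^2$ uniformly in $m$. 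Subtracting $f(L^\ast)$ from the two bounds and dividing by $k/N$ gives the single clean inequality
\begin{equation*}
(L_0-L^\ast)^2\leq(L_m-L^\ast)^2+N^2(N-1)^2 .
\end{equation*}

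It remains to optimize over the integer $m\geq 0$ and read off the regimes. If $\om\geq-2kN$ then $L^\ast\leq N(N-1)$, so I would take $m=0$ and use $\sqrt{a^2+b^2}\leq a+b$ to conclude $L_0\leq 2N(N-1)<2N^2$. If $\om\leq-2kN$ then $L^\ast\geq N(N-1)-N/2$, so some $m\geq0$ has $|L_m-L^\ast|\leq N/2$; this yields $|L_0-L^\ast|\leq N^2$ and hence $|L_0-\Lgv|\leq\sqrt3\,N^2$, which is case~(2) and, since $\Lgv^{1/2}N\gg N^2$ when $|\om|/k\gg N^2$, also covers case~(3) a fortiori. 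The step I expect to carry the real weight is the upper bound: the crude range estimate $V_m\leq(N-1)^2$ is enough for the generous constants stated here, but obtaining the \emph{sharp} energy in the thermal regime $m\gg N^2$, pinning down the $O(1)$ correction in $\Lgv$, and treating anisotropic or more general radial potentials all force one to replace this soft convexity bound by the quantitative fluctuation estimates of the plasma analogy developed in Section~\ref{sec:QHphases}. The remainder is bookkeeping: the operator inequality, the polynomial-degree support of the angular momentum, and the $O(N)$ gap between $L^\ast$ and $\Lgv$ are all lower order than the error terms claimed.
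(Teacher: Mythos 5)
Your proof is correct, and its overall architecture --- sandwiching $\Barge$ between the fixed-momentum convexity lower bound (the operator inequality $\sum_j L_j^2\geq N^{-1}\LL_N^2$, i.e.\ Lemma~\ref{lem:low bound}) and a variational upper bound from the quasi-hole states of Lemma~\ref{lem:null space}, then locating $L_0$ near the vertex of the resulting parabola --- is exactly the strategy this paper outlines and defers to \cite{RSY}. Where you genuinely differ is in the one quantitative step, the evaluation of $\langle \sum_j L_j^2\rangle$ in the state $\PsiGV_m$: you bound the single-particle angular-momentum variance by Popoviciu's inequality, $V_m\leq (N-1)^2$, using only the fact that $F^{\rm qh}_m$ contains powers $z_1^{\ell}$ with $m\leq\ell\leq m+2(N-1)$; since distinct powers of $z_1$ are automatically orthogonal in $\Barg$, the spectral measure of $L_1$ really is supported on that interval, so this step is sound. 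The payoff is that your error $kNV_m\leq kN^3$ is uniform in $m$, so the single estimate $|L_0-L^\ast|\leq N^2$ covers items (2) and (3) at once, and in item (3) it is in fact \emph{stronger} than the stated bound $\sqrt{3}\,\Lgv^{1/2}N$ (there $\Lgv^{1/2}N\gg N^{5/2}\gg N^2$). The paper's route instead evaluates the quasi-hole energy with an error that grows with $m$ --- visible in the subleading term $\tfrac32|\om|N$ of Theorem~\ref{teo:result ener}, case 4, which is what one obtains if the variance term is allowed to grow like $m\propto|\om|/k$ --- and this is precisely why the theorem splits the regime $\om\leq -2kN$ into two cases with different right-hand sides. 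So your argument reproduces the statement with room to spare; the plasma-analogy machinery of Section~\ref{sec:QHphases} is needed not for this momentum localization but for the density information and the refined energy constants of Proposition~\ref{pro:up bound}.
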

                       

The situation described in Item 1 of the above is compatible with the Laughlin state staying the ground state of the Hamiltonian. On the other hand, Items 2 and 3 show that the Laughlin state is \emph{not} the true ground state for sufficiently negative values of the ratio $\om / kN$: in this case the ground state has a momentum given to leading order by $\Lgv$ ($\rm qh$ stands for quasi-hole again). A state with much larger angular momentum than Laughlin's is thus favored. As we will discuss below, we prove that a wave function containing a vortex at the origin in addition to the correlations of the Laughlin state has a lower energy than the pure Laughlin state.   

\medskip

Our next theorem is a criterion for the ground state of \eqref{eq:Bargh} to be asymptotically fully correlated, in the sense that its projection $\PKerp (\LLLm)$ on the orthogonal complement of $\Ker$ (as defined in \eqref{eq:intro kernel}) vanishes in a certain limit. As anticipated above, our criteria depend on spectral gaps of the interaction operator. Recalling \eqref{eq:gap}, let us define 
\begin{align}\label{eq:defi gaps}
\Delta_{1} &=  \gap(2 N^2)\nonumber \\
\Delta_{3} &= \gap(\Lgv + \sqrt 3 N^{2})\nonumber \\
\Delta_4  &= \gap(\Lgv+ \sqrt 3 \Lgv ^{1/2} N),
\end{align}
where $\Lgv$ is defined as in \eqref{eq:mom GV}. The indices in the notation correspond to the different cases in the following theorem. A reader willing to take for granted the conjectures about the spectral gaps of $\intN$ we discussed above may replace these quantities by fixed numbers in the following statements.

\begin{teo}[\textbf{Criteria for strong correlations in the ground state}]\label{teo:correl}\mbox{}\\
Let $\LLLm$ be a minimizer of \eqref{eq:intro LLLe}. We have
\begin{equation}\label{eq:correl}
\left\Vert \PKerp \LLLm \right\Vert  \to 0  
\end{equation}
in the limit $N\to \infty$, $\om,k\to 0$ if one of the following conditions holds~:\smallskip

\noindent {\bf Case 1.} $\om \geq 0$ and 
\[
\left( g \,\Delta_1 \right)^{-1}(\omega N^2+kN^3)\rightarrow 0. 
\]

\noindent {\bf Case 2.} $- 2 k N\leq \om \leq 0$ and
\[
\left(g \,\Delta_1  \right)^{-1}({N \om ^2}/{k} + \om N ^2 + k N ^3) \rightarrow 0. 
\]
\smallskip

\noindent{\bf Case 3.} $\om \leq - 2 k N$ and
\[
\left(g \: \Delta_3  \right)^{-1} k N ^{3}\rightarrow 0.
\]
\noindent{\bf Case 4.}
$\om \leq - 2 k N$ and
$$
\left(g \: \Delta_4 
\right)^{-1} |\omega|N\rightarrow 0.
$$
\end{teo}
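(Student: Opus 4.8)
The plan is to pair an energy upper bound coming from the quasi-hole trial states \eqref{eq:trial states} with a two-part lower bound (single-particle energy plus interaction gap), so as to obtain a single inequality controlling $\Vert \PKerp \LLLm \Vert$ that specializes to all four cases. First I would pass to the Bargmann picture through Lemma \ref{lem:ham barg} and, by Lemma \ref{lem:existence}, fix a normalized minimizer $\Bargm \in \BargN$ with $\LL_N \Bargm = L_0 \Bargm$. Split $\Bargh = H_0 + g\intN$ with the single-particle part
\[
H_0 := N(\om+2k) + \sum_{j=1} ^N \left((\om+3k) L_j + k L_j ^2\right), \qquad L_j = z_j \dd_{z_j},
\]
and set $\alpha := \Vert \PKerp \Bargm \Vert ^2$. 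For the lower bound I use two facts. (i) The operators $L_j$ commute and are jointly diagonalized by monomials with eigenvalues in $\N$, so with $e(\ell) := (\om+3k)\ell + k\ell ^2$ and $e_{\min} := \min_{\ell \in \N} e(\ell)$ one gets the operator bound $H_0 \geq E_0 := N(\om+2k) + N e_{\min}$; here $e_{\min} = 0$ when $\om \geq 0$ and $e_{\min} \geq -(\om+3k) ^2/(4k)$ otherwise. (ii) Since $\intN$ commutes with $\LL_N$ and $\Ker$ is $\LL_N$-invariant, on the sector $\{\LL_N = L_0\}$ the operator $\intN$ has kernel $\Ker \cap \{\LL_N = L_0\}$ and least nonzero eigenvalue $\gap(L_0)$; as $\Bargm$ lies in this sector, $\langle \Bargm, \intN \Bargm\rangle \geq \gap(L_0)\,\alpha$. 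Together with the trial bound $\Barge \leq \LLLf[\PsiGV_{\mopt}] =: E^{\rm qh}$ (the trial state lies in $\Ker$, so $E^{\rm qh}$ is purely potential), this yields the master inequality
\[
\alpha \;\leq\; \frac{E^{\rm qh} - E_0}{g\,\gap(L_0)}.
\]

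It then remains to estimate the two sides case by case. For the gap I invoke Theorem \ref{teo:momentum} and the monotonicity of $L \mapsto \gap(L)$: in Cases 1 and 2 ($\om \geq -2kN$) we have $L_0 \leq 2N ^2$, hence $\gap(L_0) \geq \gap(2N ^2) = \Delta_1$; in Case 3, $L_0 \leq \Lgv + \sqrt3\,N ^2$ gives $\gap(L_0) \geq \Delta_3$; in Case 4, $L_0 \leq \Lgv + \sqrt3\,\Lgv ^{1/2} N$ gives $\gap(L_0) \geq \Delta_4$. For the numerator I use \eqref{eq:intro m opt}: in Cases 1--2, $\mopt = 0$, so $E^{\rm qh}$ is the potential energy of the Laughlin state, of order $\om N ^2 + kN ^3$, while $E_0 \approx N(\om+2k)$ (Case 1) or $E_0 \approx N(\om+2k) - N\om ^2/(4k)$ (Case 2), whence $E^{\rm qh}-E_0 \lesssim \om N ^2 + kN ^3$ and $\lesssim N\om ^2/k + \om N ^2 + kN ^3$ respectively. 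In Cases 3--4, $\mopt = -\om/(2k) - N$ and the global minimum $E_0 \approx -N\om ^2/(4k)$ is attained near momentum $\Lgv$, while the trial state sits near $\Lgv + N(N-1)$; thus $E^{\rm qh}-E_0$ is the cost of superimposing the Laughlin correlations onto the optimal vortex, which the density analysis shows to be of order $kN ^3$ in Case 3 and $|\om| N$ in Case 4. Inserting each estimate and the corresponding hypothesis makes the right-hand side of the master inequality tend to $0$, giving $\alpha \to 0$, i.e. \eqref{eq:correl}.

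The genuinely hard step is the sharp evaluation of $E^{\rm qh} - E_0$, especially the correlation cost in Cases 3 and 4; the elementary bound above only gives the crude operator inequality $H_0 \geq E_0$, whereas pinning down $E^{\rm qh} = N\intR \pot\, \rho_{\PsiGV_{\mopt}}$ requires real work. This is precisely where the plasma analogy enters: the scaling \eqref{eq:intro scaling} identifies $\rho_{\PsiGV_{\mopt}}$ with the one-body density of a 2D Coulomb gas, and Theorem \ref{teo:plasma intro}, combined with the exponential decay estimates of Theorem \ref{teo:QH phases} to tame the growth of $\pot$, reduces $E^{\rm qh}$ to the explicit mean-field energy $\MFe$. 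The passage from the $kN ^3$ behaviour (flat density on a disc or annulus) to the $|\om| N$ behaviour (radial Gaussian) mirrors the electrostatic-to-thermal crossover at $m \sim N ^2$, and is consistent with the matching of the case thresholds (both numerators reduce to $kN ^3$ at $\om \sim -2kN$, and again at $|\om| \sim kN ^2$). A secondary point is internal consistency with Theorem \ref{teo:momentum}, itself obtained by comparable energy comparisons, so that the admissible window of $L_0$ in each case is compatible with the gap $\Delta_i$ actually used.
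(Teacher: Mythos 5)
Your skeleton coincides with the argument the paper actually relies on (it is carried out in the companion paper \cite{RSY}, to which the statement is deferred): pass to the Bargmann picture via Lemma \ref{lem:ham barg}, fix a minimizer of definite momentum $L_0$ by Lemma \ref{lem:existence}, bound the one-body part below by its bottom eigenvalue, bound the interaction below by $g\,\gap(L_0)\,\Vert \PKerp \Bargm \Vert^2$ (your justification -- $\intN$ commutes with $\LL_N$ and $\Ker$ is $\LL_N$-invariant, so the projection onto $\Ker^{\perp}$ stays in the sector -- is the right one), localize $L_0$ through Theorem \ref{teo:momentum} and the monotonicity of $L \mapsto \gap(L)$, and compare with the energy of the quasi-hole trial states; the single-particle lower bound is exactly what Lemma \ref{lem:low bound} provides. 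So the ``master inequality'' and the identification $\gap(L_0)\geq \Delta_i$ in each case are precisely the intended mechanism.

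The one genuine soft spot is your numerator estimate in Cases 3 and 4. There the argument hinges on the \emph{exact} cancellation of the leading term $-N\om^2/(4k)$ between $E^{\rm qh}$ and $E_0$, so that $E^{\rm qh}-E_0 \lesssim kN^3$ (resp.\ $|\om| N$). Citing the energy bounds in the form of Theorem \ref{teo:result ener} is not sufficient: those upper bounds carry a global $(1+o(1))$, hence an error that may be of size $o(\om^2N/k)$, and since $\om^2 N/k \geq 4kN^3$ whenever $|\om|\geq 2kN$, such an error can swamp the terms $kN^3$ and $|\om| N$ that you must retain. The plasma analogy does yield the needed sharp form, but only with the restrictions of Proposition \ref{pro:up bound}: $|\om| \ll kN^{7/5}\log N$ in the electrostatic regime and $|\om| \gg kN^{10/3}$ in the thermal regime. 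The windows $kN^{7/5}\log N \lesssim |\om| \lesssim kN^2$ and $kN^2 \lesssim |\om| \lesssim kN^{10/3}$, which belong to Cases 3 and 4, are not covered by this paper's error estimates, and this is exactly why the paper presents the plasma analogy as an \emph{improvement} of some bounds rather than as the proof of Theorem \ref{teo:correl}. The full range $\om \leq -2kN$ is handled in \cite{RSY} by evaluating the trial-state energies through the angular-momentum representation of the potential (the structure visible in \eqref{eq:Bargh}), which keeps the $-N\om^2/(4k)$ coefficient exact throughout. With that substitution -- or a sharpening of the plasma-analogy errors in the missing windows -- your argument is complete; as written, it proves the theorem only on a strict subset of the parameter regimes of Cases 3 and 4.
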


Note that, given some $\om$ and $k$ depending on $N$ in a definite manner, one can always choose $g$ so large that one of the criteria in Theorem \ref{teo:correl} is satisfied in the limit $N \to \infty$.

\medskip

We now state some energy estimates. In the cases described in Theorem \ref{teo:correl} we are able to determine the order of magnitude of the leading order of the energy, though with unmatching constants.

\begin{teo}[\textbf{Energy bounds}]\label{teo:result ener}\mbox{}\\ 
The ground state energy $\LLLe$ satisfies the following bounds:
\medskip

\noindent {\bf Cases 1 and 2}
\begin{equation}\label{eq:result ener 1}
\left(\omega N^2+kN^3\right)(1-o(1))\leq \LLLe \leq\left( \omega N^2+\frac 43 kN^3\right)(1+o(1)).
\end{equation}
\noindent {\bf Case 3}
\begin{equation}\label{eq:result ener 2}
 -\frac {\omega^2N}{4k}(1-o(1))\leq \LLLe \leq \left(-\frac {\omega^2N}{4k}+\frac 13 kN^3\right)(1+o(1)). 
\end{equation}
\noindent {\bf Case 4}
\beq \label{eq:result ener 3}
-\frac {\omega^2N}{4k}(1-o(1))\leq \LLLe \leq \left(-\frac {\omega^2N}{4k}+\frac 32 |\omega| N\right)(1+o(1)).
\eeq
\end{teo}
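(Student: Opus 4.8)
The plan is to bound $\LLLe=\Barge$ from above using the fully correlated trial states $\PsiGV_m$ of \eqref{eq:trial states} and from below by elementary operator inequalities on $\Bargh$. Since every $\PsiGV_m$ lies in $\Ker$ its interaction energy vanishes, so $\LLLf[\PsiGV_m]$ reduces to the potential energy $N\intR \pot(z)\rho_{\PsiGV_m}(z)\,dz$. After the scaling \eqref{eq:intro scaling}, which gives $\rho_{\PsiGV_m}(\sqrt N z)=N^{-1}\muNone(z)$, this becomes
\[
\LLLf[\PsiGV_m] = N\intR \pot(\sqrt N z)\,\muNone(z)\,dz,
\]
and the whole point of the plasma analogy is that $\muNone$ is controlled by the mean-field density $\rhoMF$.

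\emph{Upper bounds.} First I would replace $\muNone$ by $\rhoMF$ using Theorem \ref{teo:plasma intro}. Because $\pot(\sqrt N\,\cdot\,)$ grows at infinity the estimate is not directly applicable, so I would truncate $\pot(\sqrt N\,\cdot\,)$ outside a ball and use the exponential decay of the density (Theorem \ref{teo:QH phases}) to show the discarded tail is negligible. On the truncated potential $\|\nabla V\|_{L^2}$ and $\|\nabla V\|_{L^\infty}$ are of order $kN^2$ (and of the corresponding size when $m\gg N^2$), so the right-hand sides of \eqref{densitydiff}–\eqref{densitydiff2} are of lower order than the energy. It then remains to evaluate $N\intR \pot(\sqrt N z)\rhoMF(z)\,dz$ explicitly from the description of $\rhoMF$ in Proposition \ref{pro:MF func}: for $m\ll N^2$ it is the flat profile $\tfrac1{2\pi}$ on a disc ($m=0$) or an annulus with $R_+^2-R_-^2=2$ and $R_-^2=m/N$, while for $m\gg N^2$ it is essentially the radial Gaussian $\propto r^{2m}e^{-Nr^2}$ concentrated near $r^2=m/N$. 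Carrying out the elementary radial integral and minimising over $m$ reproduces $\mopt$ as in \eqref{eq:intro m opt}: $m=0$ gives $\om N^2+\tfrac43 kN^3$ (Cases 1–2), the optimal annulus gives $-\tfrac{\om^2N}{4k}+\tfrac13 kN^3$ (Case 3), and the Gaussian profile, whose variance produces the subleading term, gives $-\tfrac{\om^2N}{4k}+\tfrac32|\om|N$ (Case 4).

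\emph{Lower bounds.} Dropping the nonnegative term $g\intN$ gives the operator bound $\Bargh\ge N(\om+2k)+\sum_{j=1}^N\big((\om+3k)L_j+kL_j^2\big)$, whose infimum over $\BargN$ equals $N(\om+2k)+N\min_{\ell\in\N}\big((\om+3k)\ell+k\ell^2\big)$. For $\om\le-2kN$ the real minimiser is $\ell\approx-\om/(2k)\gg1$, producing $-\tfrac{\om^2N}{4k}(1-o(1))$ and settling Cases 3 and 4. For Cases 1–2 ($\om\ge-2kN$) this crude bound is too weak and the interaction must be kept. Decomposing into sectors of fixed total angular momentum $L$ and using permutation symmetry together with convexity, $\langle\sum_j L_j^2\rangle\ge L^2/N$, the potential part in a sector is at least $h(L):=(\om+3k)L+kL^2/N$, which is increasing for $L\ge N(N-1)$ precisely because $\om\ge-2kN$, so $h(L)\ge(\om N^2+kN^3)(1+o(1))$ there. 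In sectors with $L<N(N-1)$ the state is orthogonal to $\Ker$, whence its interaction energy is at least $g\,\gap(L)\ge g\Delta_1$; the hypothesis $g\Delta_1\gg \om N^2+kN^3$ then makes these sectors energetically forbidden, and the weighted minimum is attained with all mass on $L\ge N(N-1)$, giving $\Barge\ge(\om N^2+kN^3)(1-o(1))$.

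The main obstacle is the upper bound. The delicate points are (i) matching the growing potential $\pot(\sqrt N\,\cdot\,)$ with the density estimate of Theorem \ref{teo:plasma intro}, which forces the truncation together with the decay estimates of Theorem \ref{teo:QH phases}, and (ii) extracting the subleading contributions with the correct constants, in particular the Case 4 term $\tfrac32|\om|N$, which arises from the variance of the nearly Gaussian density in the thermal regime and must be shown to dominate the fluctuation error coming from \eqref{densitydiff2}. Verifying that the plasma-analogy error is genuinely of lower order than each target energy, and than the respective subleading correction, is where the quantitative strength of Theorem \ref{teo:plasma intro} is essential.
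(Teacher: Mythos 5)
Your proposal is correct in outline and is, for the most part, the route this paper itself develops: the upper bounds via the quasi-hole trial states $\PsiGV_m$, the plasma analogy with truncation plus the decay estimates of Theorem \ref{teo:QH phases}, and the optimization over $m$ are exactly the content of Proposition \ref{pro:up bound} in Section \ref{sec:energy}, while your lower bound is Lemma \ref{lem:low bound}. Two points of comparison are worth recording. First, the paper formally defers the proof of Theorem \ref{teo:result ener} to \cite{RSY}, where the upper bounds are obtained not from the plasma analogy but by expressing $\pot$ through the angular momentum operators $L_j$, $L_j^2$ on the Bargmann space and computing the moments of the trial states algebraically; the plasma route you follow is the alternative (and more flexible) argument of Section \ref{sec:energy}. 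For the lower bounds, your Cases 3--4 argument (drop $g\intN$ and minimize the one-body energy freely over $\ell$) is a mild simplification of the paper's use of Lemma \ref{lem:low bound} evaluated at the ground-state momentum $L_0\approx\Lgv$ from Theorem \ref{teo:momentum}; both give $-\frac{\om^2N}{4k}(1-o(1))$, and your sector-plus-gap argument for Cases 1--2 is precisely what the hypotheses on $g\Delta_1$ in Theorem \ref{teo:correl} are designed for. Second, one quantitative caution: your stated goal that the plasma-analogy error be smaller than the \emph{subleading} corrections $\frac13 kN^3$ and $\frac32|\om|N$ cannot be achieved uniformly on the Case 3/Case 4 ranges with the estimates of Theorem \ref{teo:plasma intro}; the paper itself obtains this only under the extra restrictions $|\om|\ll kN^{7/5}\log N$ and $|\om|\gg kN^{10/3}$ of Proposition \ref{pro:up bound}, and in the intermediate window the fluctuation error genuinely exceeds those corrections. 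This does not break your proof of the theorem \emph{as stated}, since there the $(1+o(1))$ is relative to the full right-hand side, dominated by $\om^2N/(4k)$, and the plasma error is $o(\om^2N/k)$ throughout the ranges (using $|\om|\geq 2kN$); but you should either weaken the claim about subleading precision or impose the corresponding restrictions.
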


The proofs of these results are given in \cite{RSY}. The rest of the paper is devoted to their interpretation, using the plasma analogy that we expose in details in Section \ref{sec:plasma}.

\subsection{Discussion}\label{sec:trial intro}

The reason for the occurrence of different cases in Theorems \ref{teo:correl} and \ref{teo:result ener} can be interpreted in view of the plasma analogy. Indeed, the minimization of \eqref{eq:intro MFf} is a simple electrostatics problem and accurate approximations to $\rhoMF$ may be computed, leading to the following picture:
\begin{itemize}
\item In case 1 the effective potential is increasing and it is favorable to use the pure Laughlin state as a trial state. We prove that in this case the mean-field density is approximately constant in a disc around the origin.
\item In cases 2 to 4, the effective potential has a local maximum at the origin and a minimum along some circle of radius $\ropt(\om,k)$. In case 2 the potential well along $r=\ropt (\om,k)$ is not deep enough to make it favorable to deplete the density of the trial state at the origin, and  the Laughlin state is still preferred.
\item For the cases 3 and 4, we notice that the mean-field density $\rhoMF$ is well approximated by a profile with a maximum along a circle of radius $r(m,N)$. Equating $\ropt(\om,k)$ and $r(m,N)$ in order that the maximum of the density coincides with the minimum of the potential we find an optimal choice for the phase circulation $m$ of the giant vortex at the origin 
\begin{equation}\label{eq:m opt intro}
\mopt = \begin{cases}
               0 \mbox{ if } \om \geq - 2 k N \\
               - \frac{\om}{2 k}  - N \mbox{ if } \om < - 2 k N.
              \end{cases} 
\end{equation}
The Laughlin state is thus favored for $\om \geq -2kN$ whereas there is a tendency towards adding a vortex at the origin in the opposite regime.
\item The character of the density of the optimal trial state changes from an almost constant profile when $m\ll N ^2$ to a Gaussian profile when $m\gg N ^2$, corresponding to the change in the order of magnitude of the subleading contribution to the energy upper bound in Theorem \ref{teo:result ener} that distinguishes case 3 from case 4. 
\end{itemize}

The fact that the mean-field density profile changes in the regime $m\propto N ^2$ is the reason why the estimates of Theorem \ref{teo:plasma intro} are stated differently when $m\ll N ^2$ and $m\gg N ^2$. The change in the physics is reflected by the need of a different approach to the mean-field limit in the two regimes.

We thus see that the plasma analogy provides a rationale for the occurrence of different cases in the minimization of the energy functional describing rotating bosons in the lowest Landau level when the trapping potential is of the form \eqref{eq:intro pot eff} (more complicated expressions could be considered). As we prove below (Section \ref{sec:energy}), it also allows to improve some of our energy upper bounds. The rest of the paper is devoted to the proof of Theorem \ref{teo:plasma intro} and to the study of the mean-field energy functional \eqref{eq:intro MFf}.

\section{Quantum Hall states and the plasma analogy}\label{sec:QHphases}

In this core section of the paper we use the interpretation of the modulus squared of fully correlated trial states as the Gibbs measure of a 2D Coulomb gas (one-component plasma) to compute the single-particle density of the trial states in the limit of large particle number. This interpretation has been instrumental since the first introduction of the Laughlin state \cite{Lau,Lau2} in the context of the fractional quantum Hall effect, see \cite{Gir} for a review. Ideas derived from the plasma analogy are commonly used in the literature (see e.g. \cite{BCR,CTZ,DGIS,GRG,LFS,Jan,JLS}), and we shall provide a rigorous justification to some of them.

It is convenient to scale distances by a factor $\sqrt{N}$: 
\begin{equation}\label{eq:Gibbs states}
\muN (Z) := N ^N \left| \PsiGV_m (\sqrt{N} Z )\right| ^2.
\end{equation}
With such a rescaling, we can recognize the Gibbs measure of a 2D Coulomb gas with temperature $T= N ^{-1}$ and a mean-field scaling in the interactions ($\ZN$ is a normalization factor)
\begin{eqnarray}\label{eq:plasma analogy}
\muN (Z) &=& \ZN ^{-1} \exp\left( \sum_{j=1} ^N \left( - N  |z_j| ^2 + 2 m \log |z_j|\right)  - 4 \sum_{i<j} \log |z_i - z_j|\right)\nonumber \\
&=& \ZN ^{-1} \exp\left( -\frac{1}{T} \left( \sum_{j=1} ^N \left( |z_j| ^2 - 2 \frac{m}{N} \log |z_j|\right) - \frac{4}{N} \sum_{i<j} \log |z_i - z_j|\right)\right)\nonumber \\
&=& \ZN ^{-1} \exp\left( -\frac{1}{T} \HN \right),
\end{eqnarray}
where the Coulomb Hamiltonian $\HN$ is defined as
\begin{equation}\label{eq:Coul hami}
\HN (Z) : =  \sum_{j=1} ^N \Vm (z_j) - \frac{2}{N} \sum_{i\neq j} \log |z_i - z_j|
\end{equation}
with 
\begin{equation}\label{eq:Vm}
\Vm (z) = |z| ^2 - 2 \frac{m}{N} \log |z|.
\end{equation}
This model describes $N$ classical 2D particles located at points $z_1,\ldots,z_N$ in the complex plane, interacting via 2D Coulomb forces and feeling the electric potential generated by a constant background of opposite charge (the $|z_j| ^2$ terms). When $m\neq 0$, the term $-2\frac{m}{N} \log |z_j|$ describes the effect of a particle of charge $2\frac{m}{N}$ fixed at the origin. Our classical one-component plasma is thus more precisely a jellium  with an additional point charge pinned at the origin.

Note the $1/N$ factor in front of the interaction term: the interest of scaling the distances is to put us in a mean-field regime. Common wisdom about the thermodynamic limit for classical particles then suggests that we shall be able to extract information about $\muN$ from a limit $N\to \infty$. More precisely, one should expect that $\muN$ factorizes
\begin{equation}\label{eq:factor formal}
\muN \approx \rho ^{\otimes N} \mbox{ when } N\to \infty, 
\end{equation}
for some well-chosen probability measure $\rho \in \PP (\R ^2)$ (see Section \ref{sec:plasma} below), in the sense that
\begin{equation}\label{eq:factor formal prec}
\muN ^{(k)} \approx \rho ^{\otimes k} \mbox{ when } N\to \infty \mbox{ and } k \mbox{ is fixed}.
\end{equation}
Here we denote by $\mu ^{(k)}$ the $k$ particle density of a symmetric measure $\mu$ over a Cartesian product, defined by integrating $\mu^{(k)}$ over $N-k$ variables:
\begin{equation}\label{eq:defi marginal}
\mu  ^{(k)} (z_1,\ldots,z_k) := \int_{\R ^{2(N-k)}} \mu (z_1,\ldots,z_N) dz_{k+1}\ldots dz_{N}.
\end{equation}
Results in this direction are given in \cite{CLMP,MS,Kie1,KS} for related models. A large deviation result is presented in \cite{BZ}. Adapting these methods we could prove that $\muN ^{(k)} \wto \rho ^{\otimes k}$ weakly as measures for any fixed $k$. For our purpose, however,  quantitative estimates are needed and we thus use a different method.

\medskip

Our new approach to the mean-field limit of the Coulomb gas works in any scaling of the spatial variables but it is important to note that once a scaling has been chosen, the dependence of the temperature on $N$ is fixed. In the most convenient mean-field scaling that we have chosen above, the temperature is 
\begin{equation}\label{eq:set T}
T=N ^{-1} 
\end{equation}
and thus $T\to 0$ when $N\to \infty$. One could thus expect the plasma, to which we compare the density of our QH states, to be close to being in its ground state. These heuristic considerations have to be taken with caution however since the potential $\Vm$ depends on both $m$ and $N$. As it turns out there is a transition in the physics of our trial state, the dividing line being given by $m\propto N ^2$: For $m\ll N ^2$, electrostatic effects dominate, i.e. the plasma is close to its ground state at zero temperature, whereas for $m\gg N ^2$ entropy considerations dominate the physics, i.e. the temperature plays an important role. 

Although we shall not use this analogy, it is worth recalling the strong connection between Coulomb gases and Gaussian random matrices, noted first by Wigner, see \cite{For} for references. Interestingly, in the random matrix context, one also has to analyze a Coulomb gas with mean-field interactions and temperature of order $N^{-1}$, see \cite{KS}. The study of the 2D Coulomb gas problem, or some of its generalizations, is also related to vortex systems in classical and quantum fluids, see e.g. \cite{CLMP,CY,SS} for discussions. 

\subsection{The mean-field one component plasma} \label{sec:plasma}

Let us now go into more details about our approach to the mean-field limit. As is well-known \cite{CLMP,Kie1} and easy to prove, $\muN$ minimizes the free-energy functional (recall the temperature is $N ^{-1}$)
\begin{equation}\label{eq:free ener f}
\FN [\mu] := \intRN \HN (Z) \mu(Z) dZ + T \intRN \mu(Z) \log \mu(Z)dZ
\end{equation}
amongst symmetric probability measures $\mu \in \PP_s (\R^{2N})$. We denote by
\begin{equation}\label{eq:free ener e}
\FNe = \FN[\muN] = - T \log \ZN
\end{equation}
the minimum free energy.

We will prove estimates relating the minimization of \eqref{eq:free ener f} to that of the mean-field free energy functional
\begin{equation}\label{eq:MFf}
\MFf [\rho] = \intR  \Vm \rho + 2 D(\rho,\rho) + T \int_{\R ^2} \rho \log \rho 
\end{equation}
amongst probability measures $\rho\in \PP(\R ^2)$. We denote by $\rhoMF$ and $\MFe$ respectively the ground state and the ground state (free) energy of the mean-field free energy functional. The notation 
\begin{equation}\label{eq:2D Coulomb 2}
D(\rho,\rho) = -\iint_{\R^2\times \R^2} \rho(x) \log |x-y|  \rho(y) dxdy
\end{equation}
stands for the 2D Coulomb energy.

As usual, $\MFf$ is obtained by restricting $\FN$ to trial states of the form $\rho^{\otimes N}$, which should be a reasonable approximation when $N$ is large. The main goal of this section is to justify this approximation by proving that
\[
\FNe \approx N \MFe, \quad \muN \approx \rhoMF \: ^{\otimes N}
\]
in a sense to be made precise below, and with quantitative estimates.

The transition in the physics of our trial states between dominantly electrostatic and thermal behaviors can be taken into account by introducing two simplified functionals that will be used to approximate \eqref{eq:MFf} in the two different regimes. In the electrostatic regime we drop the entropy term and define
\begin{equation}\label{eq:MFfh}
\MFfel [\rho] = \intR  \Vm \rho + 2 D(\rho,\rho)
\end{equation}
with ground state $\rhoMFel$ and ground state energy $\MFeel$ whereas in the thermal regime we drop the electrostatic term to obtain
\begin{equation}\label{eq:MFfth}
\MFfth [\rho] = \intR \Vm \rho + T \intR \rho \log \rho 
\end{equation}
whith ground state $\rhoMFth$ and ground state energy $\MFeth$. For our computational purpose it is much more convenient to estimate the difference between $\muNone$ and $\rhoMFel$ or $\rhoMFth$ because, at least with the relatively simple potentials $\Vm$ we consider, the two latter functions are explicit, see Proposition~\ref{pro:MF func} below.


The main output of the plasma analysis is the following theorem. We denote 
\[
 \ropt = \sqrt\frac{m}{N}
\]
the minimum point of the potential $\Vm$.

\begin{teo}[\textbf{Plasma analogy for quantum Hall phases}]\label{teo:QH phases}\mbox{}\\
There exists a constant $C>0$ such that we have 
\begin{enumerate}
\item (Mean-field limit in the electrostatic regime). For $m\lesssim N ^2$ and any $V$ such that $\nabla V\in L ^{\infty} (\R ^2)\cap L ^{2} (\R ^2)$  
\begin{equation}\label{eq:1 particle lim el}
\left\vert \intR \left(\muNone - \rhoMFel \right) V \right\vert \leq  C N ^{-1/2} (\log N ) ^{1/2} \Vert \nabla V \Vert_{L ^2 (\R ^2)} + C N ^{-1/2}\Vert \nabla V \Vert_{L ^{\infty} (\R ^2)}.
\end{equation}
Also, for some constants $c,C>0$ and for $N$ large enough
\begin{equation}\label{eq:1 particle decay}
\muNone (z) \leq C \exp\left( -c N \left(\left( |z|-\ropt \right)^2 -  \log N \right)\right) \mbox{ when } ||z|-\ropt|\geq C \max (N ^{1/2} m ^{-1/2}, N ^{-1/2} ). 
\end{equation}
\item (Mean-field limit in the thermal regime). For  $m\gg N ^2$ and any $V\in L ^{\infty} (\R ^2)$
\begin{equation}\label{eq:1 particle lim th}
\left\vert \intR \left(\muNone - \rhoMFth \right) V \right\vert \leq  C N ^{1/2} m ^{-1/4} \Vert V \Vert_{L ^{\infty} (\R ^2)}.
\end{equation}
Moreover there exists $c,C>0$ such that
\begin{equation}\label{eq:1 particle decay thermal}
\muNone(z) \leq \exp\left(-c N (|z|-\ropt) ^2\right)\mbox{ when } ||z|-\ropt| \geq C N^{1/2} m ^{-1/4}.
\end{equation}

\end{enumerate}
\end{teo}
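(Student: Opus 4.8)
The plan is to deduce both the density approximation and the pointwise decay from a quantitative two-sided comparison between the full free energy $\FNe=\FN[\muN]$ and $N$ times the mean-field energy, together with control of the fluctuations of the empirical measure $\mu_Z:=\tfrac1N\sum_{j=1}^N\delta_{z_j}$ about the mean-field density. Since by symmetry $\intR V\,\muNone=\big\langle\intR V\,d\mu_Z\big\rangle_{\muN}$, every statement about $\muNone$ follows once these fluctuations are estimated in a negative Sobolev norm. I would first carry out the comparison against the full functional $\MFf$ (which shares the entropy structure of $\FN$), and only afterwards invoke Proposition~\ref{pro:MF func} to replace $\rhoMF$ by the explicit $\rhoMFel$ when $m\lesssim N^2$ and by $\rhoMFth$ when $m\gg N^2$; this separates the probabilistic mean-field analysis from the purely variational identification of the profile, and reflects the fact that the two regimes require genuinely different comparisons.

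\textbf{Free-energy bounds.} For the upper bound I would insert the product $\rhoMF{}^{\otimes N}$ into $\FN$; additivity of the entropy on products and $T=N^{-1}$ give the exact identity $\FN[\rhoMF{}^{\otimes N}]=N\MFf[\rhoMF]-2D(\rhoMF,\rhoMF)$, hence $\FNe\le N\MFe-2D(\rhoMF,\rhoMF)$, the correction being a computable self-energy. The lower bound is the heart of the matter. Writing $\HN(Z)=N\intR\Vm\,d\mu_Z-\tfrac2N\sum_{i\ne j}\log|z_i-z_j|$ and splitting $\mu_Z=\rhoMF+(\mu_Z-\rhoMF)$, I would use the variational relation satisfied by the minimizer (an Euler--Lagrange/obstacle-problem identity of the form $\Vm+4h^{\rho}=\mathrm{const}$ on the bulk, with $h^{\rho}:=-\int\log|\cdot-y|\,\rho(y)\,dy$, up to the entropy correction) to recombine the linear terms into a nonnegative contribution and isolate the quadratic fluctuation energy $2N\,D(\mu_Z-\rhoMF,\mu_Z-\rhoMF)$. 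This term is nonnegative by positivity of the Coulomb kernel but carries a divergent self-energy from the atoms of $\mu_Z$; I would regularize by smearing each charge at a scale $\eta$, replacing $\mu_Z$ by $\mu_Z^{\eta}$ and bounding the point-versus-smeared discrepancy. Balancing the resulting self-energy contribution (of order $\log(1/\eta)$ after the $2N$ prefactor) against the background-regularity error (of order $N\eta^2$), and absorbing the $N$-body entropy via a subadditivity/relative-entropy inequality, one obtains
\[
\FNe\ \ge\ N\MFe\ +\ 2N\big\langle D\big(\mu_Z^{\eta}-\rhoMF,\,\mu_Z^{\eta}-\rhoMF\big)\big\rangle_{\muN}\ -\ \mathrm{Err}(\eta),
\]
and the optimal scale $\eta\sim N^{-1/2}$ gives $\mathrm{Err}(\eta)\sim\log N$ (this is where the logarithm enters).

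\textbf{From fluctuations to the density.} Matching the two bounds forces $2N\,\big\langle D(\mu_Z^{\eta}-\rhoMF,\mu_Z^{\eta}-\rhoMF)\big\rangle_{\muN}\le\mathrm{Err}(\eta)$, i.e. an averaged fluctuation energy of order $N^{-1}\log N$. To pass to $\muNone$ I would split $\intR V\,d(\mu_Z-\rhoMF)=\intR V\,d(\mu_Z^{\eta}-\rhoMF)+\intR V\,d(\mu_Z-\mu_Z^{\eta})$, estimate the first piece by the Coulomb/$\dot H^{-1}$ duality $\big|\intR V\,d(\mu_Z^{\eta}-\rhoMF)\big|\le C\|\nabla V\|_{L^2}\,D(\mu_Z^{\eta}-\rhoMF,\mu_Z^{\eta}-\rhoMF)^{1/2}$ followed by Jensen's inequality under $\langle\cdot\rangle_{\muN}$, and the smearing error by $\|\nabla V\|_{L^\infty}\,\eta$. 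With $\eta\sim N^{-1/2}$ these produce exactly the two terms $N^{-1/2}(\log N)^{1/2}\|\nabla V\|_{L^2}$ and $N^{-1/2}\|\nabla V\|_{L^\infty}$ of \eqref{eq:1 particle lim el}. The companion estimate \eqref{eq:1 particle lim th} follows the same template with $\rhoMFth$, except that in the thermal regime $m\gg N^2$ the Coulomb interaction is subleading against potential plus entropy: I would bound it away and compare $\muN$ directly to the single-particle thermal Gibbs state $\propto e^{-N\Vm}$, whose radial concentration scale governs the rate $N^{1/2}m^{-1/4}$.

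\textbf{Decay and the main obstacle.} For \eqref{eq:1 particle decay} and \eqref{eq:1 particle decay thermal} I would exploit the explicit Gibbs form of $\muN$: the effective one-body potential felt by a single particle, $\Vm+2h^{\rhoMF}$, grows like $(\,||z|-\ropt|\,)^2$ outside the bulk, so conditioning one variable and integrating out the others (using the already-established bulk localization) turns this energy gap into a factor $\exp(-\tfrac1T[\text{gap}])=\exp(-cN(\ldots))$, yielding Gaussian decay once $||z|-\ropt|$ exceeds the fluctuation scale $\max(N^{1/2}m^{-1/2},N^{-1/2})$, respectively $N^{1/2}m^{-1/4}$. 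The main obstacle throughout is the lower bound of the free-energy step: controlling the self-energy renormalization uniformly as the minimizer $\rhoMF$ degenerates from a disc into an increasingly thin annulus as $m$ approaches $N^2$, pinning down the sharp order of $\mathrm{Err}(\eta)$ (hence the precise $(\log N)^{1/2}$ weight), and correctly matching the $N$-body and mean-field entropies. The transition at $m\sim N^2$ is exactly where the smearing/positivity argument of the electrostatic regime must be abandoned in favour of the thermal comparison, which is why Theorem~\ref{teo:QH phases} is stated as two separate cases.
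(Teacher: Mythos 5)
You are correct, and your proposal is essentially the paper's own proof: a product-state upper bound; an Onsager-type lower bound with charges smeared at scale $N^{-1/2}$, whose retained positive fluctuation term is converted into the density estimate by $\dot H^{-1}$ duality plus the smearing error and then combined with Proposition \ref{pro:MF func} to pass from $\rhoMF$ to $\rhoMFel$; a direct perturbative comparison of $\muN$ with the one-body Gibbs state in the thermal regime (Coulomb term absorbed into the one-body potential, entropy subadditivity and the CKP inequality), precisely because the smearing errors are too large there; and decay read off the Gibbs form of $\muN$ together with the growth of the effective one-body potential. The only slips are cosmetic: that effective potential is $\Vm+4\hMF$ rather than $\Vm+2\hMF$, and your exact self-energy correction $-2D(\rhoMF,\rhoMF)$ in the upper bound differs harmlessly from the paper's stated $-D(\rhoMF,\rhoMF)$.
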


The proof of this theorem relies on upper and lower bounds to the free energy proving that $\FNe \sim N \MFe$ with controlled error. Our approach can also give information on the reduced densities $\muN ^{(k)}$, for $k$ fixed in the limit $N\to \infty$ see Remark \ref{rem:marginals} below. We state only \eqref{eq:1 particle lim el} and \eqref{eq:1 particle lim th} explicitly because they will be our main tools for estimating the energy of our quantum Hall trial states. What these equations say is that we can replace $\muNone$ by $\rhoMFel$ or $\rhoMFth$, making a controlled error. Note that to put them to good use in the proofs of our main results we will have to truncate the physical potential $\pot$ so that the norms appearing in the right-hand sides be finite. To estimate the error this induces we need to know that $\muNone$ has a suitable decay, which is the purpose that \eqref{eq:1 particle decay} and \eqref{eq:1 particle decay thermal} serve. As we will prove below, $\rhoMF$ decays rather fast in the 
regions where $\rhoMFel$ and $\rhoMFth$ are small so that one can hope not to make a large error when truncating the physical potential.  

The proof of this result goes as follows :  In Section \ref{sec:MF func} we first study the mean-field functional and prove that $\rhoMF$ can be approximated by $\rhoMFel$ (respectively $\rhoMFth$) when $m\ll N ^2$ (respectively when $m\gg N ^2$). We also study the decay of $\rhoMF$, which will provide the desired decay of $\muNone$ in the electrostatic regime according to \eqref{eq:1 particle decay}. In Section \ref{sec:QH thermo} we study the mean-field limit and thereby relate $\muNone$ to $\rhoMF$, which will complete the proof of \eqref{eq:1 particle lim el}. Most of the arguments in this part apply to much more general situations than that we are directly interested in. The interested reader should have no difficulty in adapting our proofs to different potentials than our specific $\Vm$, and to other temperature regimes than $T= N^{-1}$. Finally Section \ref{sec:MF thermal} contains the proof of \eqref{eq:1 particle lim th}. As explained below we have to follow different strategies for the 
electrostatic and thermal regimes, which accounts for the different norms of $V$ appearing in the right-hand sides of \eqref{eq:1 particle lim el} and \eqref{eq:1 particle lim th}, and the different forms of the decay estimates \eqref{eq:1 particle decay} and \eqref{eq:1 particle decay thermal}.

\subsection{The mean-field functionals}\label{sec:MF func}

We now state several facts about the mean-field problems, some being well-known from potential theory (see \cite{ST} for references). 

\begin{pro}[\textbf{The mean-field functionals}]\label{pro:MF func}\mbox{}\\
The following properties hold
\begin{enumerate}
\item (Existence, Uniqueness). The functionals \eqref{eq:MFf}, \eqref{eq:MFfh} and \eqref{eq:MFfth} each admit a unique minimizer among probability measures, respectively denoted $\rhoMF$, $\rhoMFel$ and $\rhoMFth$. Moreover 
\begin{equation}\label{eq:bounds rhoMF}
0 <\rhoMF \leq \frac{1}{2\pi} \mbox{ a.e. in } \R ^2. 
\end{equation}
\item (Electrostatic regime). We have the explicit expression
\begin{eqnarray}\label{eq:exp rhoMFmh 0}
\rhoMFel &=& \frac{1}{2\pi} \one_{B(0,\sqrt{2})} \mbox{ if } m=0 \\ 
\rhoMFel &=& \frac{1}{2\pi} \one_{\AN} \mbox{ if } m>0 \label{eq:exp rhoMFmh}
\end{eqnarray}
where $\AN$ is the annulus of inner radius $\Rminus = \sqrt{m/N}$ and outer radius $\Rplus = \sqrt{2 + m/N}$ centered at the origin. Moreover
\begin{equation}\label{eq:MF elec regime}
D \left(\rhoMFel- \rhoMF,\rhoMFel- \rhoMF \right) \leq C N ^{-1}.
\end{equation}
\item (Thermal regime). We have the explicit expression
\begin{equation}\label{eq:exp rhoMFmth}
\rhoMFth (r) = \frac{1}{\Zth} \exp \left( - \frac{1}{T} \Vm (r) \right)  = \frac{2N^{m+1}}{\pi m!} |z|^{2m} e ^{-N|z|^2}
\end{equation}
where $\Zth$ is a normalization constant satisfying $\MFeth = - T \log \Zth.$  Moreover,  
\begin{equation}\label{eq:MF ther regime coul}
D \left(\rhoMFth- \rhoMF,\rhoMFth- \rhoMF \right) \leq C m ^{-1/2}
\end{equation}
for any $m\gg N ^2$ and 
\begin{equation}\label{eq:MF ther regime TV}
 \left\Vert \rhoMFth- \rhoMF \right\Vert_{\rm TV} \leq C \frac{N ^{1/2}}{m ^{1/4}} 
\end{equation}
where $\left\Vert \mu \right\Vert_{\rm TV} = \int |\mu_+| + \int|\mu_-|$ stands for the total variation norm of a measure $\mu$.
\end{enumerate}
\end{pro}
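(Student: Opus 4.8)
\textbf{Part 1 (existence, uniqueness, and the bound $\rhoMF\le\tfrac1{2\pi}$).} I would run the direct method on each of $\MFf$, $\MFfel$, $\MFfth$: all three are lower semicontinuous on $\PP(\R^2)$, and they are coercive because $\Vm(z)\to+\infty$ as $|z|\to\infty$ (and, for $m>0$, also as $|z|\to0$), because $D(\rho,\rho)$ is bounded below on probability measures, and because the entropy is bounded below on tight families; this yields minimizers. Uniqueness comes from strict convexity: the term $T\int\rho\log\rho$ is strictly convex (handling $\MFf$ and $\MFfth$), while $D(\cdot,\cdot)$ is positive-definite on differences of probability measures, i.e. on mass-zero measures (handling $\MFfel$). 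For the pointwise bound I would write the Euler--Lagrange equation for $\MFf$, namely $\Vm+4h_{\rhoMF}+T\log\rhoMF=\text{cst}$ on $\{\rhoMF>0\}$ with $h_{\rho}=-\log|\cdot|\ast\rho$, and apply $\Delta$, using $\Delta\Vm=4$ away from the origin and $-\Delta h_{\rhoMF}=2\pi\rhoMF$. This gives $\rhoMF=\tfrac1{2\pi}+\tfrac{T}{8\pi}\Delta\log\rhoMF$; since the entropy term forces $\rhoMF>0$ everywhere with Gaussian decay, its maximum is attained at an interior point, where $\Delta\log\rhoMF\le0$, whence $\rhoMF\le\tfrac1{2\pi}$.

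\textbf{Part 2 (electrostatic regime).} Here $\rhoMFel$ is the classical equilibrium measure: its variational equation $\Vm+4h_{\rhoMFel}=\text{cst}$ on the support, combined with $\Delta\Vm=4$ and $-\Delta h=2\pi\rho$, forces $\rhoMFel=\tfrac1{2\pi}$ on the interior of its support. By radial symmetry and Newton's theorem ($h_{\rho}'(r)=-M(r)/r$ with $M$ the cumulative mass), stationarity reads $M(r)=\tfrac12 r^2-\tfrac{m}{2N}$; matching with the density $\tfrac1{2\pi}$ and imposing $M(\Rplus)=1$ pins the radii to $\Rminus=\sqrt{m/N}$ and $\Rplus=\sqrt{2+m/N}$ (a disc $B(0,\sqrt2)$ when $m=0$), giving \eqref{eq:exp rhoMFmh 0}--\eqref{eq:exp rhoMFmh}. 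For \eqref{eq:MF elec regime} I would use strong convexity of $\MFfel$: the variational inequality for $\rhoMFel$ and mass-zero of $\rho-\rhoMFel$ give $\MFfel[\rho]-\MFfel[\rhoMFel]\ge 2D(\rho-\rhoMFel,\rho-\rhoMFel)$; taking $\rho=\rhoMF$ and combining with $\MFf[\rhoMFel]\ge\MFf[\rhoMF]$ yields $2D(\rhoMF-\rhoMFel,\rhoMF-\rhoMFel)\le T\big(\int\rhoMFel\log\rhoMFel-\int\rhoMF\log\rhoMF\big)$. Since $T=N^{-1}$ and both entropies are $O(1)$ — using $\rhoMF\le\tfrac1{2\pi}$ together with the fact that $\rhoMF$ has a boundary layer of width $O(N^{-1/2})$ and Gaussian tails — the right-hand side is $\le CN^{-1}$.

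\textbf{Part 3 (thermal regime).} Dropping the Coulomb term, the Euler--Lagrange equation of $\MFfth$ is $\Vm+T(\log\rho+1)=\text{cst}$, so $\rhoMFth=(\Zth)^{-1}e^{-\Vm/T}\propto|z|^{2m}e^{-N|z|^2}$, with $\Zth$ computed from $\int_0^\infty r^{2m+1}e^{-Nr^2}dr$; this is \eqref{eq:exp rhoMFmth}. The key tool is the relative-entropy identity $\MFfth[\rho]-\MFfth[\rhoMFth]=T\,\mathrm{Ent}(\rho\,|\,\rhoMFth)$ and its analogue for the strictly convex $\MFf$, namely $\MFf[\rho]-\MFf[\rhoMF]=2D(\rho-\rhoMF,\rho-\rhoMF)+T\,\mathrm{Ent}(\rho\,|\,\rhoMF)$. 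Feeding $\rho=\rhoMFth$ into the second and using $\MFf[\rhoMF]\le\MFf[\rhoMFth]$, both $D(\rhoMFth-\rhoMF,\rhoMFth-\rhoMF)$ and $T\,\mathrm{Ent}(\rhoMF\,|\,\rhoMFth)$ are bounded by $2\big(D(\rhoMFth,\rhoMFth)-D(\rhoMF,\rhoMF)\big)$. Because $m\gg N^2$ the profile is a thin ring at the large radius $\ropt=\sqrt{m/N}$ of width $O(N^{-1/2})$, over which the Coulomb potential varies by only $O(m^{-1/2})$; this makes the self-energy difference $O(m^{-1/2})$, giving \eqref{eq:MF ther regime coul}. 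Then $\mathrm{Ent}(\rhoMF\,|\,\rhoMFth)\le CNm^{-1/2}$, and Pinsker's inequality converts this into $\Vert\rhoMFth-\rhoMF\Vert_{\rm TV}\le\sqrt{2\,\mathrm{Ent}}\le CN^{1/2}m^{-1/4}$, which is \eqref{eq:MF ther regime TV}.

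\textbf{Main obstacle.} The delicate point is the thermal regime. Adding the two variational inequalities only produces $D(\rhoMF,\rhoMF)\le D(\rhoMFth,\rhoMFth)$, with the quadratic terms cancelling, and the bare self-energy $D(\rhoMFth,\rhoMFth)$ is of order $\log(m/N)$ and hence not small; so one cannot bound the relative entropy by it directly. The heart of the matter is to show the self-energy \emph{difference} of the two nearly-coincident ring profiles is $O(m^{-1/2})$, which demands quantitative control of their radial cumulative masses near $\ropt$ and genuinely uses that Coulomb interactions are weak at the large radius imposed by $m\gg N^2$. A secondary subtlety is the electrostatic estimate: claiming the entropy is $O(1)$ rather than $O(\log N)$ rests on the Gaussian decay of $\rhoMF$; without that one only obtains the weaker $N^{-1}\log N$.
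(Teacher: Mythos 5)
Parts 1 and the identification of the explicit profiles are essentially the paper's own arguments (Euler--Lagrange equation, Laplacian and an interior-maximum argument for \eqref{eq:bounds rhoMF}; constant density on the support plus Newton/screening for \eqref{eq:exp rhoMFmh 0}--\eqref{eq:exp rhoMFmh}; the Gibbs form \eqref{eq:exp rhoMFmth}), and those parts are fine. The two comparison estimates are where your proposal breaks down.

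\emph{Thermal regime.} Your scheme yields $2D(\rhoMFth-\rhoMF,\rhoMFth-\rhoMF)+T\intR\rhoMFth\log\left(\rhoMFth/\rhoMF\right)\leq 2\left(D(\rhoMFth,\rhoMFth)-D(\rhoMF,\rhoMF)\right)$ (note the relative entropy appears with $\rhoMFth$ in first position, not $\rhoMF$; this is harmless since the CKP inequality \eqref{eq:CKP} applies to either order). Everything then hinges on the self-energy \emph{difference} being $O(m^{-1/2})$, and this you do not prove; it requires a lower bound on $D(\rhoMF,\rhoMF)$ matching $D(\rhoMFth,\rhoMFth)$, i.e.\ quantitative concentration of the unknown minimizer on the same ring, which is essentially the statement being proved. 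You flag this as ``the heart of the matter'' but leave it open, so this is a genuine gap, not a technicality. The paper's proof is built precisely to avoid it: instead of expanding $\MFf$ around $\rhoMF$, it linearizes the Coulomb term around the \emph{known} profile via the identity $D(\rhoMF,\rhoMF)=2D(\rhoMFth,\rhoMF)-D(\rhoMFth,\rhoMFth)+D(\rhoMF-\rhoMFth,\rhoMF-\rhoMFth)$, giving \eqref{eq:MF low bound therm}: $\MFe\geq\MFftth[\rhoMF]+2D(\rhoMF-\rhoMFth,\rhoMF-\rhoMFth)-2D(\rhoMFth,\rhoMFth)$, where $\MFftth$ is a \emph{one-body} functional with explicit potential $\Vm+4h_{\rhoMFth}$ whose minimizer is the explicit Gibbs measure $\rhoMFtth$. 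The only analytic input is the Newton-type bound $|\partial_r h_{\rhoMFth}(r)|\leq C/r$ leading to \eqref{eq:vari hthermal t}, which shows $h_{\rhoMFth}$ is constant up to $O(m^{-1/2})$ across the ring of width $N^{-1/2}$; hence $\MFetth=\MFeth+4h_{\rhoMFth}(\ropt)+O(m^{-1/2})$ and $D(\rhoMFth,\rhoMFth)=h_{\rhoMFth}(\ropt)+O(m^{-1/2})$, the $h_{\rhoMFth}(\ropt)$ terms cancel against the upper bound \eqref{eq:MF up bound therm}, and no lower bound on $D(\rhoMF,\rhoMF)$ is ever needed. The TV estimate \eqref{eq:MF ther regime TV} then follows by applying CKP inside this one-body problem (relative entropy with respect to $\rhoMFtth$), and only afterwards comparing $\rhoMFtth$ with $\rhoMFth$.

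\emph{Electrostatic regime.} Your route reduces \eqref{eq:MF elec regime} to the claim $-\intR\rhoMF\log\rhoMF=O(1)$, justified by ``Gaussian tails'' of $\rhoMF$. Within this paper that is circular: the decay of $\rhoMF$ is Proposition \ref{pro:decay MF}, whose very first step uses \eqref{eq:MF elec regime}. The non-circular device, which is what the paper does, is to bound the entropy from below by positivity of the relative entropy with respect to an \emph{explicit} reference measure rather than by any unproved property of $\rhoMF$: with $\rho_0=\pi^{-1}e^{-|z|^2}$ one gets $T\intR\rhoMF\log\rhoMF\geq -T\intR|z|^2\rhoMF-T\log\pi$ as in \eqref{eq:MF low bound elec}, the term $-T|z|^2$ is absorbed into the potential, and one compares with the modified electrostatic functional $\MFftel$, whose minimizer $\rhoMFtel$ and energy $\MFetel$ are again explicit, so that every quantity entering the final comparison can be computed. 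Your fallback remark that one ``only'' loses a factor $\log N$ without the decay is likewise unsubstantiated as stated: with the Gaussian reference the price is $T\intR|z|^2\rhoMF$, and it is controlled through the explicit comparison of $\MFetel$ with $\MFeel$, not through an entropy estimate on $\rhoMF$.
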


\begin{rem}\label{rem:com MF}
\begin{enumerate}
\item \emph{Norms}. Note that $D(.,.)$ is the square of a norm on the space of measures with total mass $0$ as we will see below. It is actually the square of the $\dot{H}^{-1}$ norm ($L^2$ norm in Fourier space with weight $|k|^{-2}$). Comparing \eqref{eq:MF elec regime} and \eqref{eq:MF ther regime coul} one can see that $\rhoMF$ is better approximated by $\rhoMFel$ (respectively $\rhoMFth$) when $m\ll N^2$ (respectively $m \gg N ^2$). When $m\gg N ^2$, it becomes possible  to use the total variation norm to estimate the difference between $\rhoMF$ and $\rhoMFth$ \eqref{eq:MF ther regime TV}, which is more convenient for practical purposes. Note that in our case $\rhoMF$ and $\rhoMFth$ are $L ^1$ functions, so their total variation norm coincides with their $L^1$ norm.
\item \emph{Comparing the profiles}. When $m \gg N$ it is safe to approximate $\Vm$ with its second variation around $\ropt  = m^{1/2} N ^{-1/2}$, noting that $\Vm'' (\ropt) = O(1)$, which means that $\rhoMFth$ is roughly speaking a Gaussian profile centered on the minimum of $\Vm$. Obviously this is a very different shape from the electrostatic profile, which is constant in an annulus close to $\ropt$. More important are the scales involved in the two profiles: the electrostatic density has its maximum of order $1$ and consequently its support has a thickness of order $m ^{-1/2} N ^{1/2}$, whereas the thermal profile has a maximum of order $m ^{-1/2} N$ and thus is spread over an annulus of thickness $N ^{-1/2}$ to ensure normalization. 
\item \emph{Heuristics for the electrostatic/thermal transition}. A good criterion for the transition, that can be backed with energetic considerations, is the comparison of the length scales: To favor the potential energy, $\rhoMF$ wants to be as concentrated as possible close to the minimum of $\Vm$, the meaning of ``possible'' being set by the other terms in the functional. The entropy and electrostatic terms are associated with different length scales, $N^{-1/2}$ for the entropy and $m ^{-1/2} N ^{1/2}$ for the Coulomb term. In order to minimize the energy, the true profile $\rhoMF$ is spread over the maximum of these two length scales, i.e. on the electrostatic length scale for $m\ll N  ^2$ and on the thermal length scale for $m\gg N ^2$.
\end{enumerate}
\end{rem}
\hfill\qed

\medskip

Let us begin by recalling some well-known lemmas that we shall use several times in the sequel :

\begin{lem}[\textbf{Positivity of relative entropies and CKP inequality}]\label{lem:rel ent}\mbox{}\\
Let $\mu$ and $\nu$ be two probability measures with $\mu$ absolutely continuous with respect to $\nu$. Then
\begin{equation}\label{eq:pos rel ent}
\int \mu \log \frac{\mu}{\nu} \geq 0. 
\end{equation} 
More precisely one has the Csisz\'{a}r-Kullback-Pinsker (CKP) inequality 
\begin{equation}\label{eq:CKP}
\int \mu \log \frac{\mu}{\nu} \geq \frac{1}{2}   \left\Vert \mu - \nu \right\Vert_{\rm TV} ^2.
\end{equation}
\end{lem}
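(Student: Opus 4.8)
The plan is to treat the two inequalities separately, the positivity being a one-line consequence of convexity and the CKP inequality being handled by reduction to a two-point problem. For the positivity \eqref{eq:pos rel ent}, write $f = d\mu/d\nu$ for the Radon--Nikodym derivative, which exists since $\mu \ll \nu$, so that the left-hand side equals $\int \log f \, d\mu$. I would invoke the elementary bound $\log t \le t-1$ for $t>0$, applied to $t = 1/f$ on the set $\{f>0\}$ (which carries all of $\mu$): this gives $-\log(1/f) \ge 1 - 1/f$, whence
\begin{equation*}
\int \mu \log\frac{\mu}{\nu} = -\int_{\{f>0\}} \log\frac{1}{f}\, d\mu \ge \int_{\{f>0\}} \left(1 - \frac{1}{f}\right) d\mu = 1 - \nu(\{f>0\}) \ge 0,
\end{equation*}
using that both measures are probabilities. (Equivalently one applies Jensen's inequality to the convex function $t\mapsto t\log t$.)

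For the CKP inequality \eqref{eq:CKP} I would reduce to a two-point space by coarse-graining. Let $A = \{f \ge 1\}$ and set $p = \mu(A)$, $q = \nu(A)$. Since $\int(\mu-\nu)=0$, the convention $\left\Vert\mu-\nu\right\Vert_{\rm TV} = \int(\mu-\nu)_+ + \int(\mu-\nu)_-$ gives $\int(\mu-\nu)_+ = \tfrac12\left\Vert\mu-\nu\right\Vert_{\rm TV}$, and since the positive part of $\mu-\nu$ is supported exactly on $A$ one has
\begin{equation*}
p - q = \int_A (\mu-\nu) = \int(\mu-\nu)_+ = \tfrac12 \left\Vert\mu-\nu\right\Vert_{\rm TV}.
\end{equation*}
The log-sum inequality applied to the partition $\{A, A^c\}$ (equivalently, monotonicity of the relative entropy under the push-forward to the two-point space) then yields
\begin{equation*}
\int \mu \log\frac{\mu}{\nu} \ge p\log\frac{p}{q} + (1-p)\log\frac{1-p}{1-q} =: d(p\,\Vert\,q),
\end{equation*}
so that it suffices to establish the scalar bound $d(p\,\Vert\,q) \ge 2(p-q)^2$, whose right-hand side equals $\tfrac12\left\Vert\mu-\nu\right\Vert_{\rm TV}^2$ by the displayed identity.

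The scalar inequality I would prove by elementary calculus: fixing $p\in(0,1)$ and differentiating $g(q) := d(p\,\Vert\,q) - 2(p-q)^2$ gives
\begin{equation*}
g'(q) = \frac{q-p}{q(1-q)} + 4(p-q) = (q-p)\left(\frac{1}{q(1-q)} - 4\right),
\end{equation*}
and since $q(1-q)\le \tfrac14$ the bracket is nonnegative, so $g$ decreases on $(0,p)$ and increases on $(p,1)$, giving $g(q)\ge g(p)=0$. The only genuinely non-trivial point is the coarse-graining step, namely that the full relative entropy dominates its two-point contraction $d(p\,\Vert\,q)$; this rests on the log-sum inequality, itself a consequence of the convexity of $t\mapsto t\log t$. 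Some care is needed to match the total-variation convention of the statement with the relation $p-q = \tfrac12\left\Vert\mu-\nu\right\Vert_{\rm TV}$ so that the final constant $\tfrac12$ comes out correctly.
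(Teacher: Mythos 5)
Your proof is correct, and it actually does more than the paper's own argument. On the positivity \eqref{eq:pos rel ent} the two proofs are essentially the same: the paper applies Jensen's inequality to the convex map $x\mapsto x\log x$, which is the same convexity input as your elementary bound $\log t\le t-1$ (an equivalence you note yourself). The genuine difference concerns the CKP inequality \eqref{eq:CKP}: the paper does not prove it at all, it simply refers to \cite{BV}, whereas you give a complete, self-contained derivation along the classical Csisz\'ar--Kullback--Kemperman lines, coarse-graining onto the partition $\{f\ge 1\}$, $\{f<1\}$ via the log-sum inequality and then proving the binary Pinsker bound $d(p\,\Vert\,q)\ge 2(p-q)^2$ by one-variable calculus; your derivative computation $g'(q)=(q-p)\bigl(\tfrac{1}{q(1-q)}-4\bigr)$ and the bound $q(1-q)\le\tfrac14$ are correct. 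You also verify the bookkeeping that the paper leaves implicit: with the convention $\Vert\mu-\nu\Vert_{\rm TV}=\int(\mu-\nu)_++\int(\mu-\nu)_-$ used in the statement, one indeed has $p-q=\tfrac12\Vert\mu-\nu\Vert_{\rm TV}$ (the sets $\{f\ge1\}$ and $\{f>1\}$ differ only where the density of $(\mu-\nu)_+$ vanishes), so $2(p-q)^2=\tfrac12\Vert\mu-\nu\Vert_{\rm TV}^2$ and the constant $\tfrac12$ comes out exactly as claimed. The trade-off: the paper's citation is shorter and points to stronger weighted generalizations in \cite{BV}, while your argument stands on its own and makes the dependence of the constant on the total-variation normalization transparent. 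The only loose end is cosmetic: your calculus step fixes $p\in(0,1)$, so the degenerate cases $p\in\{0,1\}$ (and $q\in\{0,1\}$, which by absolute continuity force $p=q$) deserve a one-line separate check, e.g. $-\log q\ge 2(1-q)^2$ when $p=1$.
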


\begin{proof}
A simple application of Jensen's inequality : 
\[ 
\int \mu \log \frac{\mu}{\nu} = \int \nu \frac{\mu}{\nu} \log \frac{\mu}{\nu} \geq \left(\int \nu \frac{\mu}{\nu} \right) \log \left( \int \nu \frac{\mu}{\nu} \right) = 0,
\]
since $\mu$ and $\nu$ are probability measures and $x\mapsto x \log x$ is convex. A proof of the CKP inequality and some generalizations may be found in \cite{BV}.
\end{proof}

\begin{lem}[\textbf{Positivity properties of the 2D Coulomb energy}]\label{lem:Coul def pos}\mbox{}\\
Let $\mu$ be a Radon measure over $\R^2$ whose positive and negative parts $\mu_+$ and $\mu_-$  satisfy $|D(\mu_+,\mu_+)| < \infty$, $|D(\mu_-,\mu_-)| < \infty$. If 
\[
\int_{\R ^2} \mu = 0 
\]
then 
\begin{equation}\label{eq:positivity Coulomb}
D(\mu, \mu) \geq 0 
\end{equation}
with equality if and only if $\mu = 0$.

Consequently, the functional $\mu \mapsto D(\mu,\mu)$ is strictly convex on the convex set $\PP(\R^2)$ of probability measures on $\R ^2$.

\end{lem}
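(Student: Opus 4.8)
The plan is to show that $D$ is a positive semidefinite quadratic form on the space of finite signed measures of total mass zero, the positivity being inherited from that of the Gaussian kernel through a subordination formula for the logarithm. Concretely, I would start from the Frullani-type identity
\[
-\log|x-y| = \frac12\int_0^{\infty}\frac{e^{-t|x-y|^2}-e^{-t}}{t}\,dt,
\]
valid for all $x\neq y$, which exhibits the logarithmic kernel as a superposition of Gaussians.

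Next I would insert this into the definition \eqref{eq:positivity Coulomb} of $D(\mu,\mu)$ and interchange the $t$-integral with the double integral against $\mu\otimes\mu$. Because $\int_{\R^2}d\mu = 0$, the constant piece $e^{-t}$ contributes $e^{-t}\left(\int d\mu\right)^2 = 0$ and drops out, leaving
\[
D(\mu,\mu) = \frac12\int_0^{\infty}\frac1t\left(\iint_{\R^2\times\R^2} e^{-t|x-y|^2}\,d\mu(x)\,d\mu(y)\right)dt.
\]
For each fixed $t>0$ the Gaussian $e^{-t|x-y|^2}$ is a positive-definite kernel: its Fourier transform $(\pi/t)\,e^{-|k|^2/(4t)}$ is strictly positive, so by Parseval the inner double integral equals $(2\pi)^{-2}\int_{\R^2}(\pi/t)\,e^{-|k|^2/(4t)}\,|\hat\mu(k)|^2\,dk\geq 0$. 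This gives $D(\mu,\mu)\geq0$ termwise in $t$. Moreover, if $D(\mu,\mu)=0$, then since the $t$- and $k$-integrands are nonnegative and the Gaussian weight is strictly positive, $\hat\mu$ must vanish almost everywhere, whence $\mu=0$.

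The step I expect to be the main obstacle is the Fubini interchange, since $\mu$ is a signed measure and the integrand $t^{-1}(e^{-t|x-y|^2}-e^{-t})$ is not sign-definite; here the hypotheses $|D(\mu_+,\mu_+)|<\infty$ and $|D(\mu_-,\mu_-)|<\infty$ are exactly what is needed. I would decompose $\mu=\mu_+-\mu_-$ and bound the four resulting double integrals against $\mu_\pm\otimes\mu_\pm$ separately, using the finiteness of each $D(\mu_\pm,\mu_\pm)$ together with the local and large-distance behaviour of $-\log|\cdot|$ to certify absolute integrability and legitimize the interchange. Near $t=0$ the difference $e^{-t|x-y|^2}-e^{-t}$ is $O(t)$, which tames the $1/t$ singularity, while for large $t$ the Gaussian decay controls the integral; this is what makes $D(\mu,\mu)$ a well-defined, finite quantity rather than an indeterminate $\infty-\infty$.

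Finally, for the strict convexity of $\mu\mapsto D(\mu,\mu)$ on $\PP(\R^2)$, I would use the bilinearity of $D$. For $\rho_0,\rho_1\in\PP(\R^2)$ and $\lambda\in(0,1)$, a direct computation with $\rho_\lambda=(1-\lambda)\rho_0+\lambda\rho_1$ gives
\[
(1-\lambda)D(\rho_0,\rho_0)+\lambda D(\rho_1,\rho_1)-D(\rho_\lambda,\rho_\lambda)=\lambda(1-\lambda)\,D(\rho_0-\rho_1,\rho_0-\rho_1).
\]
Since $\rho_0-\rho_1$ is a finite signed measure of total mass zero (and of finite energy, its positive and negative parts being dominated by $\rho_0$ and $\rho_1$), the first part of the lemma applies: the right-hand side is $\geq0$, and strictly positive unless $\rho_0=\rho_1$, because $\lambda(1-\lambda)>0$. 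This yields strict convexity and completes the proof.
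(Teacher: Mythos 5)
Your proposal is correct, but for the positivity part it follows a genuinely different route from the paper. The paper offers no subordination argument at all: it simply cites the Riesz-composition formula of \cite[Chapter I, Lemma 1.8]{ST},
\[
D(\mu,\mu) \;=\; \frac{1}{2\pi}\int_{\R^2}\left(\int_{\R^2}\frac{d\mu(z)}{|t-z|}\right)^2 dt
\qquad \text{when } \intR \mu = 0,
\]
which displays $D(\mu,\mu)$ as the integral of a square, and then obtains strict convexity from the bilinearity identity at the midpoint $\lambda = 1/2$ (your identity for general $\lambda$ is the same computation). Your Frullani/Gaussian argument is self-contained where the paper's is a citation, and it buys something extra: carrying out the $t$-integration explicitly in your representation yields $D(\mu,\mu) = \frac{1}{2\pi}\intR |\hat\mu(k)|^2 |k|^{-2}\,dk$, which is precisely the $\dot{H}^{-1}$ interpretation that the paper states without proof in a nearby remark, and it makes the equality case ($\hat\mu \equiv 0$, hence $\mu = 0$) transparent. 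One caveat on your Fubini step: since for fixed $x \neq y$ the $t$-integrand $t^{-1}(e^{-t|x-y|^2}-e^{-t})$ has constant sign in $t$, absolute integrability against $|\mu|\otimes|\mu|\otimes dt$ is \emph{equivalent} to $\iint |\log|x-y||\,d|\mu|(x)\,d|\mu|(y) < \infty$, and the lemma's hypotheses control only the self-energies of $\mu_\pm$, not the cross energy $D(\mu_+,\mu_-)$; finiteness of the cross term needs a small additional argument (a log-moment bound $\int \log(1+|x|)\,d\mu_\pm < \infty$ for the far field, which can be extracted from finiteness of the self-energies, plus a local Cauchy--Schwarz argument near the diagonal). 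This loose end is shared by the paper itself---the cited Saff--Totik lemma is stated for compactly supported measures, whereas here it is applied to measures of unbounded support---so it is a defect of the lemma's formulation rather than of your proof.
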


\begin{proof}
This is a consequence of the formula 
\[
 D(\mu,\mu) = \frac{1}{2\pi} \int_{\R ^2} \left( \int_{\R ^2} \frac{1}{|t-z|} d\mu(z) \right) ^2 dt
\]
that holds whenever $\intR \mu = 0$, see \cite[Chapter I, Lemma 1.8]{ST}.

To see that \eqref{eq:positivity Coulomb} implies the claimed convexity property, pick $\mu_1,\mu_2 \in \PP (\R ^2)$ and notice that
\[
\frac{1}{2} D(\mu_1,\mu_1) +\frac{1}{2} D(\mu_2,\mu_2) - D\left(\frac{1}{2}\mu_1+ \frac{1}{2}\mu_2, \frac{1}{2}\mu_1+ \frac{1}{2}\mu_2\right) = \frac{1}{4} D\left( \mu_1- \mu_2, \mu_1 - \mu_2\right) \geq 0
\]
since $\int_{\R ^2} \mu_1 = \int_{\R ^2} \mu_2 = 1$.
\end{proof}

\begin{lem}[\textbf{Newton's theorem}]\label{lem:Newton}\mbox{}\\
For a measure $\mu$ let  
\begin{equation}\label{eq:pot gen}
h_{\mu} (x)= - \intR \log |x-y| \mu(dy) 
\end{equation}
be the potential generated by $\mu$. If $\mu$ is radial then
\begin{equation}\label{eq:newton}
h_{\mu} (x) = -\log |x| \int_{|y|<|x|} \mu(dy) - \int_{|y|>|x|} \log |y| \mu (dy).
\end{equation}
\end{lem}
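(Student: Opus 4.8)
The plan is to reduce the statement to the single averaging identity that the logarithmic potential of a uniformly charged circle is constant strictly inside the circle and equals that of a point charge at the origin strictly outside. Writing $\sigma_s$ for the uniform probability measure on the circle $\{|y|=s\}$, the heart of the argument is
\[
\int \log|x-y|\, d\sigma_s(y) = \log \max(|x|,s).
\]
I would prove this by harmonicity: since $y\mapsto \log|x-y|$ is harmonic away from $y=x$, the average $u(x):=\int \log|x-y|\,d\sigma_s(y)$ is radial and harmonic separately on $\{|x|<s\}$ and on $\{|x|>s\}$. A radial harmonic function has the form $a+b\log|x|$; boundedness at the origin forces $b=0$ on the inner region, and evaluating at $x=0$ gives $u\equiv \log s$ there. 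On the outer region the same form applies, and matching the asymptotics $\log|x-y|\sim\log|x|$ as $|x|\to\infty$ pins down $u(x)=\log|x|$. (Equivalently, one expands $\log|x-se^{i\theta}|=\log\max(|x|,s)-\mathrm{Re}\sum_{n\ge1} n^{-1}(\min(|x|,s)/\max(|x|,s))^n e^{in\theta}$ and observes that every oscillatory term averages to zero.)

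Next I would disintegrate the radial measure $\mu$ over circles. Since $\mu$ is radial, for every bounded measurable $f$ one has
\[
\intR f\,d\mu = \int_0^{\infty}\Big(\int f\,d\sigma_s\Big)\,d\rho(s),
\]
where $\rho$ is the radial marginal, i.e.\ the law of $|y|$ under $\mu$. Applying this with $f(y)=-\log|x-y|$ and inserting the averaging identity yields
\[
h_{\mu}(x) = -\int_0^{\infty}\log\max(|x|,s)\,d\rho(s).
\]

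Finally I would split the $s$-integral at $s=|x|$. On $\{s<|x|\}$ the maximum is $|x|$, contributing $-\log|x|\,\rho(\{s<|x|\})=-\log|x|\int_{|y|<|x|}d\mu$; on $\{s>|x|\}$ the maximum is $s$, contributing $-\int_{|y|>|x|}\log|y|\,d\mu$. The value $s=|x|$ gives $\log|x|$ under either convention and so may be assigned to either side without affecting the result. This is exactly \eqref{eq:newton}. The only genuine obstacle is the circle-average identity; the rest is bookkeeping, with the mild caveat that one should check integrability so that the disintegration and the splitting are legitimate, which is automatic once $h_\mu$ is assumed well defined.
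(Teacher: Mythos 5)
Your proof is correct and is essentially the argument the paper relies on: the paper's entire proof is the instruction to reproduce the 3D Newton's theorem of \cite{LL} (Theorem 9.7 there), whose classical proof is exactly your two steps, namely the circle-averaging identity $\int \log|x-y|\,d\sigma_s(y)=\log\max(|x|,s)$ established by harmonicity together with disintegration of the radial measure $\mu$ over circles and splitting the radial integral at $s=|x|$. You have simply carried out explicitly, for the 2D logarithmic kernel, the adaptation that the paper leaves to the reader, including the (correct) observation that mass on the circle $\{|y|=|x|\}$ may be assigned to either term.
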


\begin{proof}
Simply reproduce the proof of the corresponding result in 3D, see \cite[Theorem 9.7]{LL}. 
\end{proof}

\noindent \emph{Proof of Theorem \ref{pro:MF func}.}

\emph{Step 1. (Existence, Uniqueness).}
The existence part is standard material, as are the following Euler-Lagrange equations  
\begin{align}
4 \hMF + \Vm +  N ^{-1} \log (\rhoMF) &= \MFe + 2 D(\rhoMF,\rhoMF) \mbox{ on } \R ^2 \label{eq:EEL MFm}\\
\Vm + N^{-1} \log (\rhoMFth) &= \MFeth \mbox{ on } \R ^2 \label{eq:EEL MFmth}. 
\end{align}
Strictly speaking, studying the variations of $\MFf$ and $\MFfth$ only tells that the above equations hold on the support of $\rhoMF$, respectively $\rhoMFth$. However, as we prove below, $\rhoMF$ and $\rhoMFth$ are strictly positive a.e. which is why the above equations hold on $\R ^2$.

On the contrary, $\rhoMFel$ has compact support (see below), which makes the Euler-Lagrange equation a little bit more subtle \cite{ST}:
\begin{align}
4 \hMFel + \Vm  &= \MFeel + 2 D(\rhoMFel,\rhoMFel) \mbox{ on } \supp (\rhoMFel) \label{eq:EEL MFmh 1}\\
4 \hMFel + \Vm  &\geq \MFeel + 2 D(\rhoMFel,\rhoMFel) \mbox{ on } \R ^2 \setminus \supp (\rhoMFel) \label{eq:EEL MFmh 2}.
\end{align}
Here $\hMF$ and $\hMFel$ are defined as in \eqref{eq:pot gen}.
The value of the constants on the right-hand sides of the Euler-Lagrange equations is evaluated by multiplying the equations by $\rhoMF$, $\rhoMFth$ and $\rhoMFel$ respectively, and integrating. The formula \eqref{eq:exp rhoMFmth} is a direct consequence of \eqref{eq:EEL MFmth} once one knows that $\rhoMFth >0$ a.e. and thus that \eqref{eq:EEL MFmth} holds on the whole space.

The fact that we have equality on the whole of $\R ^2$ in \eqref{eq:EEL MFm} and \eqref{eq:EEL MFmth} follows from the fact that $\supp(\rhoMF) = \supp(\rhoMFth)= \R ^2$, which is probably a point that deserves a little discussion. We follow an argument from \cite{Ner} (proof of Proposition 15 therein) : Suppose for contradiction that $\supp(\rhoMF) ^c$ contains a set $S$ of nonzero Lebesgue measure. Consider the trial state 
\[
 \rho = \frac{\rhoMF + \ep \one_{S}}{1+\ep |S|}
\]
for some $\ep$ small enough. Evaluating $\MFfel[\rho]$ is easy and we find that for small enough $\ep$ there is a constant $C>0$ such that (we consider $N$, $m$ and $T$ as fixed here)
\[
\MFfel[\rho] \leq \MFfel[\rhoMF] + C \ep.
\]
To compute the entropy of $\rho$, the key point to notice is that since $\rhoMF$ and $\ep \one_{S}$ have disjoint supports we have
\begin{multline*}
\intR  \left(\rhoMF + \ep \one_{S}\right) \log \left( \rhoMF + \ep \one_{S} \right) = \intR \rhoMF \log  \rhoMF + \ep \intR \one_{S} \log \left( \ep \one_{S} \right) \\ 
\leq  \intR \rhoMF \log  \rhoMF + C \ep \left(1 +\log \ep \right),
\end{multline*}
and thus
\[
\MFf[\rho] \leq \MFf[\rhoMF] + C \ep \left(1 +\log \ep \right) < \MFe
\]
for $\ep$ small enough, which is a contradiction. The proof that also $\supp (\rhoMFth) = \R ^2$ is identical.

To see that $\rhoMF \leq (2\pi) ^{-1}$ a.e. in $\R^2$, we take the Laplacian of \eqref{eq:EEL MFm} and obtain  
\[
- T\frac{\Delta \rhoMF}{\rhoMF} + T \frac{|\nabla \rhoMF| ^2}{(\rhoMF) \:^2} + 8\pi \rhoMF - 4 + \frac{4\pi m}{N} \delta_0 = 0 
\]
which implies 
\[
- T \Delta \rhoMF + 8\pi (\rhoMF ) ^2 - 4 \rhoMF \leq 0. 
\]
At any local maximum of $\rhoMF$ we have $\Delta \rhoMF \leq 0$ and thus
\[
\rhoMF \leq \frac{1}{2\pi}, 
\]
which proves the claim.

\medskip 

The uniqueness of the minimizer of $\MFf$ follows from the strict convexity of the functional. Quantitatively, we have the following stability identity: For any probability measure $\rho$ that we write as $ \rho = \rhoMF + \nu$,
\begin{align}\label{eq:stability MF}
\MFf[\rho] &= \intR \Vm \rhoMF +\intR \Vm \nu + 2 D (\rhoMF, \rhoMF ) + 4 D (\rhoMF, \nu )+ 2 D(\nu,\nu)  \nonumber \\
&+  T \intR \left(\rhoMF + \nu \right) \log \left( \rhoMF + \nu \right) \nonumber \\ 
&= \intR \Vm \rhoMF + 2 D (\rhoMF, \rhoMF ) + T \intR \left(\rhoMF \right) \log \left( \rhoMF \right)  + 2 D(\nu,\nu) \nonumber \\ 
&+  T \intR \left( \left(\rhoMF + \nu \right) \log \left( \rhoMF + \nu \right) - \nu \log \rhoMF -  \rhoMF\log \rhoMF \right) \nonumber \\
&= \MFe + 2 D(\nu,\nu) +  T \intR  \left(\rhoMF + \nu \right) \log \left( \frac{\rhoMF + \nu}{\rhoMF} \right) 
\end{align}
where we have used \eqref{eq:EEL MFm} and the fact that $\intR \nu = 0$ to go to the second line. This yields uniqueness for $\rhoMF$ using Lemmas \ref{lem:rel ent} and \ref{lem:Coul def pos}. Note for later use that this also proves stability of the minimizer in $\dot{H}^{-1}$ and $\rm TV$ norms.

\medskip

\emph{Step 2. (The electrostatic profile).} The proof of \eqref{eq:exp rhoMFmh 0} and \eqref{eq:exp rhoMFmh} relies on \eqref{eq:EEL MFmh 1}. We start with the easiest case $m=0$ where we obtain the circle law for the Ginibre ensemble (see \cite{KS} for example). A proof of \eqref{eq:exp rhoMFmh 0} may be found in \cite{ST}, we give details for the convenience of the reader. Taking the Laplacian of \eqref{eq:EEL MFmh 1} we have
\begin{equation}\label{eq:rhoMFmh supp}
 \rhoMFel = \frac{1}{2\pi} \mbox{ on } \supp (\rhoMFel). 
\end{equation}
By uniqueness of the minimizer of $\MFfel$ and the radiality of $\Vm$,  $\rhoMFel$ must be radial. Its support is thus in any case a union of some annuli (counting a disc as an annulus with inner radius zero). Suppose for contradiction that the support is not a disc; Then there is a nonempty annulus $\A$ (or a disc centered at the origin as a special case) in the complement of the support and enclosed by the support. The potential
\begin{equation}\label{eq:varphi}
 \varphi = 4 \hMFel + \Vm - \MFeel - 2 D(\rhoMFel,\rhoMFel)
\end{equation}
satisfies $\varphi =0$ on $\dd \A$ by \eqref{eq:EEL MFmh 1} and $-\Delta \varphi = - 2 < 0$ in $\A$ since $\rhoMFel= 0$ there by definition. We deduce by the maximum principle that $\varphi \leq 0$ and thus $\varphi = 0$ in $\A$  because by \eqref{eq:EEL MFmh 2} we already know that $\varphi \geq 0$. Taking the Laplacian of the equation $\varphi = 0$  on $\A$ we would conclude as before that $\rhoMFel = (2\pi)^{-1}$ in $\A$ which is a contradiction with the fact that $\A \cap \supp (\rhoMFel) = \varnothing$. We conclude that the support of $\rhoMFel$ must be a disc centered at the origin, which implies \eqref{eq:exp rhoMFmh 0} via \eqref{eq:rhoMFmh supp} and the normalization constraint.

The result in the case $m>0$ follows from the same kind of arguments. We have from \eqref{eq:EEL MFmh 1}
\[
 \rhoMFel = \frac{1}{2\pi} - \frac{m}{2 N} \delta_0 \mbox{ on } \supp (\rhoMFel) 
\]
and since $\rhoMFel\geq 0$, we deduce that $0 \notin \supp (\rhoMFel)$. By the same maximum principle argument as above we deduce that $\supp (\rhoMFel)$ is an annulus and that \eqref{eq:rhoMFmh supp} also holds in the case $m>0$. We consider again $\varphi$ as defined in \eqref{eq:varphi}. By \eqref{eq:EEL MFmh 1}, it must be that $\varphi \equiv 0$ on $\supp (\rhoMFel)$. In particular it is constant there. On the other hand, 
\[
\varphi = - \MFeel - 2 D(\rhoMFel,\rhoMFel) + \tilde{\varphi} = \mathrm{const} + \tilde{\varphi}
\]
is a constant plus the potential $\tilde{\varphi} = 4 \hMFel + \Vm$ generated by 
\begin{itemize}
\item a constant background of charge density $- (2 \pi) ^{-1}$ (coming from the $|z| ^2$ term in $\Vm$)
\item a point charge of strength $\frac{2m}{N}$ located at the origin (coming from the $2\frac{m}{N} \log|z|$ term in $\Vm$)
\item the charge density $ 4 \rhoMFel$, equal to $(2 \pi) ^{-1}$ in an annulus of radii say $\Rminus$ and $\Rplus$ and $0$ elsewhere.
\end{itemize}
Using Newton's theorem, Lemma \ref{lem:Newton}, this potential, evaluated at any $ \Rminus \leq r\leq \Rplus$ equals that generated by a point charge of strength $\frac{2m}{N} - 2\Rminus \:^2$ located at the origin plus another constant (the constant background and the charge density $ 4 \rhoMFel$ cancel each other in $\supp (\rhoMFel)$). The only possibility for $\tilde{\varphi}$, and therefore $\varphi$, to be constant in $\supp (\rhoMFel)$ is then to have 
\[
\Rminus =  \sqrt{\frac{m}{N}}
\]
so that the effective potential generated on $\supp(\rhoMFel)$ by the constant background in $B(0,\Rminus)$ is canceled (screened) by the point charge sitting at the origin. Using the normalization of $\rhoMFel$ and \eqref{eq:rhoMFmh supp}, we compute $\Rplus$ and \eqref{eq:exp rhoMFmh} follows.

\medskip

\emph{Step 3. (Electrostatic regime).} We now turn to the proof of \eqref{eq:MF elec regime}. First, taking $\rhoMFel$ as a trial state for $\MFf$ we have
\[
\MFe \leq \MFeel + T \intR \rhoMFel \log \rhoMFel
\]
which, in view of \eqref{eq:exp rhoMFmh 0} and \eqref{eq:exp rhoMFmh}, yields
\begin{equation}\label{eq:sup MFme}
\MFe \leq \MFeel -\log (2\pi)  T.
\end{equation}
Next, we denote
\[
\rho_0 (z)= \pi ^{-1} \exp(- |z| ^2)  
\]
and note that $\intR \rho_0 = 1$. Using Lemma \ref{lem:rel ent} we have 
\begin{eqnarray}
\MFe &=& 2 D(\rhoMF,\rhoMF) + \intR \Vm \rhoMF + T \intR \rhoMF \log \rho_0 + T  \intR \rhoMF \log \frac{\rhoMF}{\rho_0} \nonumber\\
&\geq & 2 D(\rhoMF,\rhoMF) + \intR \left( \Vm - T |z| ^2 \right) \rhoMF - \log (\pi) T.
\label{eq:MF low bound elec}
\end{eqnarray}
Now, the functional 
\[
\MFftel[\rho] := 2 D(\rho,\rho) + \intR \left( \Vm - T |z| ^2 \right)\rho
\]
is of the same type as $\MFfel$, at least if $N$ is large enough for the $- T |z| ^2$ term in the above to be smaller than the $|z| ^2$ term in $\Vm$ (recall that we have $T= N ^{-1}$). We denote by $\MFetel$ and $\rhoMFtel$ respectively the ground-state energy and the minimizer of $\MFftel$. Using the Euler-Lagrange equation satisfied by $\rhoMFtel$, similar to \eqref{eq:EEL MFmh 1}, and a computation analogous to \eqref{eq:stability MF} we obtain for any measure $\rho$
\begin{equation}\label{eq:stability elec}
\MFftel[\rho] \geq \MFetel + 2 D \left(\rho- \rhoMFtel,\rho- \rhoMFtel \right).
\end{equation}
Note that in this case there is no entropy term in the analogue of \eqref{eq:stability MF} and the second line becomes an inequality because $\rhoMFtel$ has compact support and thus it satisfies an Euler-Lagrange equation only on its support. Outside of the support we have an inequality as in \eqref{eq:EEL MFmh 2}. We deduce from \eqref{eq:MF low bound elec} and \eqref{eq:stability elec} that
\begin{equation}\label{eq:inf MFme}
\MFe \geq \MFetel + 2 D \left(\rhoMFtel- \rhoMF,\rhoMFtel- \rhoMF \right) -C T.
\end{equation}
Using the explicit expression for $\rhoMFtel$, similar to \eqref{eq:exp rhoMFmh}, it is then not difficult to see that also 
$$\MFetel = \MFeel + O(T),$$
and thus \eqref{eq:sup MFme} and \eqref{eq:inf MFme} combine to give 
\[
D \left(\rhoMF- \rhoMFtel,\rhoMF- \rhoMFtel \right) \leq C T.
\] 
We deduce that \eqref{eq:MF elec regime} holds by noting that also
\[
 D \left(\rhoMFtel- \rhoMFel,\rhoMFtel- \rhoMFel \right) \leq C T
\]
which can be proved easily, inspecting the explicit expressions for $\rhoMFtel$ and $\rhoMFel$.

\medskip

\emph{Step 4. (Thermal regime).} Taking $\rhoMFth$ as a trial state for $\MFfth$ we have
\begin{equation}\label{eq:MF up bound therm}
\MFe \leq \MFeth + 2 D (\rhoMFth,\rhoMFth). 
\end{equation}
To obtain a lower bound we write
\[
D(\rhoMF,\rhoMF) = D (\rhoMF-\rhoMFth,\rhoMF-\rhoMFth) -  D(\rhoMFth,\rhoMFth) +2 D (\rhoMFth,\rhoMF)
\]
and deduce 
\begin{equation}\label{eq:MF low bound therm}
\MFe \geq \MFftth [\rhoMF] + 2 D (\rhoMF-\rhoMFth,\rhoMF-\rhoMFth) -  2 D(\rhoMFth,\rhoMFth) 
\end{equation}
where (we denote $h_{\rhoMFth}$ the potential associated to $\rhoMFth$) 
\[
\MFftth[\rho]:= \intR \left( \Vm + 4 h_{\rhoMFth} \right) \rho + T \intR \rho \log \rho 
\]
with ground state $\rhoMFtth$ and ground state energy $\MFetth$. Of course 
\begin{equation}\label{eq:thermal profile tilde}
\rhoMFtth = \frac{1}{\Ztht} \exp\left(-T ^{-1} (\Vm + 4 h_{\rhoMFth})\right) 
\end{equation}
for some normalization constant $\Ztht$ satisfying  
\begin{equation}\label{eq:Etht}
\MFetth = - T \log \Ztht.
\end{equation}
Now, $h_{\rhoMFth}$ is radial and satisfies on $\R ^2$
\[
 -\Delta h_{\rhoMFth} = \rhoMFth \geq 0.
\]
Integrating this equation over $B(0,r)$ and using Stokes' theorem we deduce 
\[
2\pi r \dd_r h_{\rhoMFth}(r) = - \int_{B(0,r)} \rhoMFth 
\]
and thus $|\nabla h_{\rhoMFth}(t) | \leq C r^{-1}$ for any $t\in B\left(r/2,2r\right)$. From this we deduce the estimate
\begin{equation}\label{eq:vari hthermal t}
\left|h_{\rhoMFth} (r) - h_{\rhoMFth} (\ropt)\right| \leq C \ropt ^{-1} |r-\ropt| \mbox{ for any } r\in B\left(\frac12 \ropt,2\ropt\right).  
\end{equation}
Since on the other hand $\Vm-\min \Vm$ grows as $C(r-\ropt) ^2$ close to $\ropt$, it is easy to deduce from \eqref{eq:thermal profile tilde} that $\rhoMFtth$ is exponentially small in the region where $\rhoMFth$ is, that is for $|r-\ropt| \gg N ^{-1/2}$. Indeed, note that in this region
\[
|\Vm-\min \Vm| \propto C|r-\ropt| ^2 \gg (N/m) ^{1/2} |r-\ropt| \propto \ropt ^{-1} |r-\ropt| \propto \left|h_{\rhoMFth} (r) - h_{\rhoMFth} (\ropt)\right|
\]
provided $m\gg N ^2$. Simple estimates then show that 
\begin{align}\label{eq:aaa1}
\MFetth &= - T \log \Ztht = - T \log \left[ \intR \exp\left(-T ^{-1} \Vm\right) \exp\left( - T ^{-1} \left(4 h_{\rhoMFth} (\ropt) + O(m ^{-1/2})\right)\right) \right]\nonumber\\
&= - T \log \left( \intR \exp\left(-T ^{-1} \Vm\right) \right) + 4 h_{\rhoMFth} (\ropt) + O(m ^{-{1/2}})\nonumber \\
&= \MFeth + 4 h_{\rhoMFth} (\ropt) + O(m ^{-{1/2}}).
\end{align}
On the other hand, similar considerations based on \eqref{eq:exp rhoMFmth} and \eqref{eq:vari hthermal t} lead to
\begin{equation}\label{eq:aaa2}
D(\rhoMFth,\rhoMFth) = \intR h_{\rhoMFth} \rhoMFth = h_{\rhoMFth} (\ropt) + O(m ^{-1/2})
\end{equation}
where the last term is $\ropt ^{-1} = \sqrt{N/m}$ times the length scale of $\rhoMFth$ (we are basically saying that $\rhoMFth$ resembles a delta function concentrated along the circle of radius $\ropt$). Coming back to \eqref{eq:MF low bound therm}, using \eqref{eq:aaa1} and \eqref{eq:aaa2}, we have thus proved that for $m\gg N ^2$
\[
 \MFe \geq \MFeth + 2 D(\rhoMFth,\rhoMFth) + 2 D (\rhoMF-\rhoMFth,\rhoMF-\rhoMFth) + O(m ^{-{1/2}}),
\]
which we combine with \eqref{eq:MF up bound therm} to obtain \eqref{eq:MF ther regime coul}.

To prove \eqref{eq:MF ther regime TV} we go back to \eqref{eq:MF low bound therm} again and note that for any $\rho \in \PP (\R ^2)$ we have 
\[
 \MFftth [\rho] \geq \MFetth + T \intR \rho \log \frac{\rho}{\rhoMFtth} \geq \MFetth +  \frac{T}{2} \left\Vert \rho - \rhoMFtth \right\Vert_{\rm TV}^2 
\]
as a consequence of the explicit expression of $\rhoMFtth$ and the CKP inequality \eqref{eq:CKP}. Combining this with the considerations above, our lower bound can be improved to 
\[
\MFe \geq \MFeth + 2 D(\rhoMFth,\rhoMFth) + 2 D (\rhoMF-\rhoMFth,\rhoMF-\rhoMFth) + \frac{T}{2} \left\Vert \rhoMF - \rhoMFtth \right\Vert_{\rm TV}^2 + O(m ^{-{1/2}}). 
\]
Combining with the upper bound \eqref{eq:MF up bound therm} we deduce 
\[
\left\Vert \rhoMF - \rhoMFtth \right\Vert_{\rm TV} \leq C T ^{-1/2} m ^{-1/4} = C N ^{1/2} m ^{-1/4}
\]
and \eqref{eq:MF ther regime TV} follows by estimating the difference between $\rhoMFtth$ and $\rhoMFth$, using \eqref{eq:vari hthermal t} and the explicit expressions \eqref{eq:exp rhoMFmth} and \eqref{eq:thermal profile tilde}.\\
\hfill \qed

\medskip

As announced, the proof of \eqref{eq:1 particle decay} requires an estimate of the decay of $\rhoMF$. This is the content of the following

\begin{pro}[\textbf{Decay of the mean-field density}]\label{pro:decay MF}\mbox{}\\
There exists a $C>0$ such that for any $r\in \R$ satisfying $|r-\ropt| > C \max (N ^{1/2} m ^{-1/2}, N ^{-1/2})$
\begin{equation}\label{eq:decay MF}
\rhoMF (r) \leq C  \exp(-C N (r-\ropt) ^2 ).
\end{equation}
\end{pro}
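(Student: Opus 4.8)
The plan is to turn the Euler-Lagrange equation into a pointwise representation of $\rhoMF$ and then show that the associated effective potential grows quadratically away from $\ropt=\sqrt{m/N}$. Since $\supp(\rhoMF)=\R^2$, equation \eqref{eq:EEL MFm} holds everywhere (with $T=N^{-1}$), and solving for the density gives
\[
\rhoMF(r)=\exp\big(-N\,\Phi(r)\big),\qquad \Phi(r):=\Vm(r)+4\hMF(r)-\MFe-2D(\rhoMF,\rhoMF).
\]
Because of the a priori bound $\rhoMF\leq(2\pi)^{-1}$ from \eqref{eq:bounds rhoMF}, for any reference radius $r_*$ one has $\rhoMF(r)=\rhoMF(r_*)\exp(-N(\Phi(r)-\Phi(r_*)))\leq(2\pi)^{-1}\exp(-N(\Phi(r)-\Phi(r_*)))$. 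Thus the whole problem reduces to proving a lower bound of the form $\Phi(r)-\Phi(r_*)\geq c\,(r-\ropt)^2$ with an appropriately chosen $r_*$ lying in (or at the edge of) the bulk.

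The key computation is the monotonicity of $\Phi$, obtained from Newton's theorem. Since $\Vm$ is radial, uniqueness forces $\rhoMF$ to be radial, so Lemma \ref{lem:Newton} yields $\partial_r\hMF(r)=-Q(r)/r$ with $Q(r):=\int_{|y|<r}\rhoMF$. Using $\Vm(r)=r^2-2\tfrac{m}{N}\log r$ this gives the clean formula
\[
\partial_r\Phi(r)=\frac{2}{r}\Big(r^2-\frac{m}{N}-2Q(r)\Big),
\]
where $0\leq Q\leq 1$. Recalling $\ropt^2=m/N$ and $\Rplus^2=m/N+2$, this shows $\partial_r\Phi<0$ for $r<\ropt$ and $\partial_r\Phi>0$ for $r>\Rplus$. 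For the inner region I would take $r_*=\ropt$ and bound $Q\geq0$, so that integrating from $r$ to $\ropt$ gives
\[
\Phi(r)-\Phi(\ropt)\geq \int_r^{\ropt}\frac{2}{s}\Big(\frac{m}{N}-s^2\Big)\,ds
=\frac{m}{N}\Big(-2\log\tfrac{r}{\ropt}-1+\tfrac{r^2}{\ropt^2}\Big)\geq (\ropt-r)^2,
\]
the last step being the elementary inequality $x-1-\log x\geq0$ applied at $x=r/\ropt$. This already closes the inner region with constant one.

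For the outer region I would take $r_*=\Rplus$ and bound $Q\leq1$, giving $\partial_s\Phi\geq\frac{2}{s}(s^2-\Rplus^2)$ for $s>\Rplus$; integrating and using $y-1-\log y\geq0$ at $y=r/\Rplus$ produces $\Phi(r)-\Phi(\Rplus)\geq (r-\Rplus)^2$. The remaining point, which I expect to be the main (if modest) obstacle, is to convert this decay measured from $\Rplus$ into decay measured from $\ropt=\Rminus$. Here I would use that the annulus width is controlled by the threshold, namely $\Rplus-\Rminus=\frac{2}{\Rplus+\Rminus}\leq 2N^{1/2}m^{-1/2}$, which is dominated by $\max(N^{1/2}m^{-1/2},N^{-1/2})$. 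Consequently, choosing the constant $C$ in the hypothesis $|r-\ropt|>C\max(N^{1/2}m^{-1/2},N^{-1/2})$ large enough guarantees both that $r>\Rplus$ and that $r-\Rplus\geq\tfrac12(r-\ropt)$, whence $\Phi(r)-\Phi(\Rplus)\geq\tfrac14(r-\ropt)^2$. Combining the two regions and absorbing the $(2\pi)^{-1}$ prefactor yields \eqref{eq:decay MF}. The only care needed beyond these elementary estimates is the rigorous justification of the Newton-based identity for $\partial_r\hMF$ and the fact, already established in the proof of Proposition \ref{pro:MF func}, that the Euler-Lagrange equation holds on all of $\R^2$.
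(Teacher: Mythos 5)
Your proof is correct, and it follows a genuinely different route from the paper's. The paper proves \eqref{eq:decay MF} by comparison with the explicit electrostatic profile: it first upgrades the estimate \eqref{eq:MF elec regime} to an $\dot{H}^1$ bound on the potential difference $\hMFel-\hMF$, converts this by radiality and Cauchy--Schwarz into a pointwise bound on $h_{\rhoMFel-\rhoMF}(r)$ with error terms $C\bigl(T+T^{1/2}(r-\ropt)^{1/2}(r+\ropt)^{1/2}/\min(\ropt,r)\bigr)$, and then inserts this into the representation $\rhoMF=\exp\bigl(T^{-1}(\MFe+2D(\rhoMF,\rhoMF)-\Vm-4\hMF)\bigr)$, so that the Gaussian decay comes from the explicit electrostatic effective potential outside $\supp(\rhoMFel)$ via \eqref{eq:EEL MFmh 2}; the threshold in the hypothesis is exactly what is needed for that quadratic decay to dominate the error terms. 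You bypass the comparison with $\rhoMFel$ entirely: starting from the same pointwise representation $\rhoMF=e^{-N\Phi}$ and differentiating Newton's formula (Lemma \ref{lem:Newton}) you obtain the exact identity
\[
\Phi'(r)=\frac{2}{r}\Bigl(r^2-\frac{m}{N}-2Q(r)\Bigr),\qquad Q(r)=\int_{|y|<r}\rhoMF\in[0,1],
\]
and the crude bounds $Q\geq 0$ inside, $Q\leq 1$ outside, combined with the elementary inequality $x-1-\log x\geq 0$, give quadratic growth of $\Phi$ with explicit constants and no error terms at all. Both arguments rest on the same preliminaries from Proposition \ref{pro:MF func} (Euler--Lagrange equation valid on all of $\R^2$, the bound \eqref{eq:bounds rhoMF}, radiality via uniqueness), and your handling of the annulus width $\Rplus-\ropt\leq 2N^{1/2}m^{-1/2}$ correctly converts decay measured from $\Rplus$ into decay measured from $\ropt$. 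Your route is shorter, self-contained, and in fact marginally stronger: it uses only the $N^{1/2}m^{-1/2}$ part of the threshold, whereas the paper's error terms are responsible for the additional length scales in its discussion. What the paper's heavier machinery buys is two-sided information --- the exponent of $\rhoMF$ is pinned to the explicit electrostatic potential, not merely bounded below --- but for the one-sided bound \eqref{eq:decay MF} that extra structure is unnecessary. The single point requiring care, which you rightly flag, is the differentiability of $\hMF$ behind the Newton identity; this is standard for an $L^1\cap L^\infty$ density and is no more than what the paper itself assumes when applying Stokes' theorem to $h_{\rhoMFth}$ in Step 4 of the proof of Proposition \ref{pro:MF func}.
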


\begin{proof}
\emph{Step 1.} We start by proving that the potential
\begin{equation}\label{eq:defi delta pot}
h_{\rhoMFel-\rhoMF} = \hMFel -\hMF  = 2\pi \left( - \Delta \right) ^{-1} \left( \rhoMFel - \rhoMF \right)
\end{equation}
is in $\dot{H} ^1 (\R ^2)$ and that one has the bound 
\begin{equation}\label{eq:H1 dot bound}
\intR \left|\nabla h_{\rhoMFel-\rhoMF} \right| ^2 \leq C T. 
\end{equation}
Let us denote $\hat{f}$ the Fourier transform of a function $f$. We have (formally for the moment)
\[
\hat{h}_{\rhoMFel-\rhoMF} (k) = \frac{2\pi}{|k| ^2} \left( \rhoMFF (k) - \rhoMFelF (k)\right). 
\]
Now, since both $\rhoMF$ and $\rhoMFel$ are uniformly bounded in $L^1$ and in $L ^{\infty}$, they also are uniformly bounded in $L^2$, which implies that $\rhoMFF - \rhoMFelF$ is uniformly bounded in $L ^2$. On the other hand, since $\intR \Vm \rhoMF$ and $\intR \Vm \rhoMFel$ are finite, we deduce 
\[
\intR |x| \rhoMF (x)dx < \infty,\: \intR |x| \rhoMFel (x) dx< \infty
\]
which implies that $\nabla \left( \rhoMFF - \rhoMFelF\right) \in L ^ {\infty} (\R ^2)$. It remains to recall that 
\[
\rhoMFF (0) - \rhoMFelF (0) = \intR  \left( \rhoMF - \rhoMFel \right) = 0
\]
to deduce that, for $|k|$ small enough
\[
\left| \rhoMFF (k) - \rhoMFelF (k) \right| \leq C |k|. 
\]
Together with the fact that $\rhoMFF - \rhoMFelF \in L^2 (\R^2)$ this implies that 
\[
\intR \left|\nabla h_{\rhoMFel-\rhoMF} \right| ^2 = 4 \pi^2 \intR \frac{1}{|k| ^2} \left| \rhoMFF (k) - \rhoMFelF (k)\right| ^2 < \infty. 
\]
We can thus justify the integration by parts leading to  
\[
\intR \left|\nabla h_{\rhoMFel-\rhoMF} \right| ^2 = \intR -\Delta h_{\rhoMFel-\rhoMF} \left( \rhoMFel-\rhoMF \right) = D \left( \rhoMFel-\rhoMF, \rhoMFel-\rhoMF\right)
\]
and \eqref{eq:H1 dot bound} then follows from \eqref{eq:MF elec regime}.

\medskip

\emph{Step 2.} We now claim that for any $r\in \R$
\begin{multline}\label{eq:decay potential}
4 h_{\rhoMFel-\rhoMF} (r) \leq \MFeel + 2 D(\rhoMFel,\rhoMFel) - \MFe - 2 D(\rhoMF, \rhoMF) \\ + C \left( T + T ^{1/2} \frac{(r-\ropt) ^{1/2} (r+\ropt) ^{1/2}}{\min (\ropt,r)}\right).
\end{multline}
First note that taking the difference of equations \eqref{eq:EEL MFm} and \eqref{eq:EEL MFmh 1} we obtain 
\begin{align*}
4 h_{\rhoMFel-\rhoMF} (\ropt) &=  \MFeel + D(\rhoMFel,\rhoMFel) - \MFe - D(\rhoMF, \rhoMF) + T \log \rhoMF (\ropt) \\
&\leq \MFeel + D(\rhoMFel,\rhoMFel) - \MFe - D(\rhoMF, \rhoMF) + C T
\end{align*}
by using \eqref{eq:bounds rhoMF}. Using radiality we then have
\begin{eqnarray*}
h_{\rhoMFel-\rhoMF} (r)  &=& h_{\rhoMFel-\rhoMF} (\ropt) + \int_{\ropt} ^r h_{\rhoMFel-\rhoMF}' (t) dt   
\\ &\leq& h_{\rhoMFel-\rhoMF} (\ropt) + \frac{1}{\min(\ropt,r)} \int_{\ropt} ^r |\nabla h_{\rhoMFel-\rhoMF} | tdt 
\\ &\leq& h_{\rhoMFel-\rhoMF} (\ropt) + \frac{C T ^{1/2}}{\min(\ropt,r)} \left( \frac{r ^2}{2} - \frac{\ropt ^2}{2} \right) ^{1/2}
\end{eqnarray*}
where we use \eqref{eq:H1 dot bound} and the Cauchy-Schwarz inequality. This proves \eqref{eq:decay potential}. 

\medskip

\emph{Step 3.} Using \eqref{eq:EEL MFm} we have 
\[
\rhoMF = \exp \left( \frac{1}{T} \left( \MFe + 2 D(\rhoMF, \rhoMF) - \Vm - 4 \hMF \right)\right). 
\]
Inserting \eqref{eq:decay potential} and using $h_{\rhoMFel-\rhoMF}  = h_{\rhoMFel}-h_{\rhoMF} $ we deduce
\begin{multline}\label{eq:decay MF 2}
\rhoMF (r) \leq \exp\left[ \frac{1}{T} \left( \MFeel + 2 D(\rhoMFel,\rhoMFel) - \Vm(r) - 4 \hMFel (r) \right)\right] \\
\times \exp\left[ \frac{C}{T} \left(T + T ^{1/2}\frac{(\ropt + r ) ^{1/2}(\ropt-r) ^{1/2}}{\min (\ropt,r)}\right)\right].
\end{multline}
Then the exponential fall-off in \eqref{eq:decay MF} is provided by the decay of 
\begin{equation}\label{eq:discuss decay}
\MFeel + 2 D(\rhoMFel,\rhoMFel) - \Vm(r) - 4 \hMFel (r)
\end{equation}
away from the support of $\rhoMFel$ as a consequence of \eqref{eq:EEL MFmh 2}. More precisely, since \eqref{eq:EEL MFmh 1} implies 
\[
\MFeel + 2 D(\rhoMFel,\rhoMFel) = \Vm(\Rplus) + 4 \hMFel (\Rplus), 
\]
we see that \eqref{eq:discuss decay} decays as 
\[
\Vm(\Rplus) + 4 \hMFel (\Rplus)- 4 \hMFel (r) - \Vm (r) 
\]
for $r\geq \Rplus$ (we only detail this case, the proof is the same for the region $r\leq \Rminus$). Reasoning as when proving \eqref{eq:vari hthermal t}, one easily sees that  
\[
 \left|4 \hMFel (\Rplus)- 4 \hMFel (r)\right| \leq C \ropt ^{-1} |r-\Rplus|\leq C m ^{-1/2} N ^{1/2} |r-\Rplus|.
\]
On the other hand, approximating $\Vm$ by its second variation around $\ropt$ we have
\[
\Vm(\Rplus) - \Vm (r) \approx C (r-\ropt) ^2 -C (\Rplus-\ropt) ^2 = - C \left( (r-\Rplus) ^2 + 2 (r-\Rplus) (\Rplus-\ropt) \right)
\]
and since $|\Rplus-\ropt|=O(m ^{-1/2} N ^{1/2})$ we deduce that
\begin{equation}\label{eq:discuss decay 2}
\MFeel + 2 D(\rhoMFel,\rhoMFel) - \Vm(r) - 4 \hMFel (r)\leq - C (r-\Rplus) ^2
\end{equation}
for $r\geq \Rplus + c \:m ^{-1/2} N ^{1/2}$ with well-chosen $c$ and $C$.

This decay compensates for the other terms in \eqref{eq:decay MF 2} as soon as                                                                                                                                                                                                                                                                              
\begin{equation}\label{eq:length scales}
|r-\Rplus|  \gg \max \left( \sqrt{\frac{N}{m}}, \sqrt{T}, (mN) ^{-1/6} \right).
\end{equation}
Note that the first length scale in the max above is of the order of magnitude of the thickness of the support of $\rhoMFel$. The second length scale accounts for the mass spreading due to the entropy term and the third is associated with the ``error'' term 
\[
T ^{1/2} \frac{(\ropt + r ) ^{1/2}(\ropt-r) ^{1/2}}{\min (\ropt,r)} = \frac{(\ropt + r ) ^{1/2}(\ropt-r) ^{1/2}}{N ^{1/2} \min (\ropt,r)}. 
\]
Indeed, for example when $r$ is sufficiently close to $\ropt$,
\[
\frac{(\ropt + r ) ^{1/2}(\ropt-r) ^{1/2}}{N ^{1/2} \min (\ropt,r)} \propto (N\ropt) ^{-1/2} |r-\Rplus| ^{1/2}  \propto (mN) ^{-1/4} |r-\Rplus| ^{1/2},
\]
using the approximations $r \approx \ropt$ and $|r-\ropt|\approx |r-\Rplus|$ and recalling that $\ropt= \sqrt{m/N}$. To see that the third term in the right-hand side of \eqref{eq:length scales} is really an error, recall that we use \eqref{eq:decay MF} when $T= N ^{-1}$. There are then two cases 
\begin{itemize}
\item $m \ll N ^2$, which implies 
\[
\sqrt{\frac{N}{m}} \gg (mN) ^{-1/6} \gg \sqrt{T}, 
\]
i.e. the electrostatic length scale dominates the error and the entropic length 
\item $m \gg N ^2$, in which case
\[
\sqrt{T} \gg (mN) ^{-1/6} \gg \sqrt{\frac{N}{m}}, 
\]
i.e. the entropic length dominates the error and the electrostatic length.
\end{itemize}
In both cases the error term is dominated by either the electrostatic or the entropic term. Recalling that $|\ropt-\Rplus|\leq C N ^{1/2} m ^{-1/2}$ we thus have proved that \eqref{eq:decay MF} holds if
\[
m \ll N ^2 \mbox{ and } |r-\ropt| \gg \sqrt{\frac{N}{m}}   
\]
or 
\[
m \gg N ^2 \mbox{ and } |r-\ropt| \gg N ^{-1/2}, 
\]
which is the desired result.
\end{proof}

\subsection{Thermodynamic limit in the electrostatic regime}\label{sec:QH thermo}

We now turn to the study of the large $N$ limit of \eqref{eq:free ener f}. This is a rather classical question, especially since we are in a mean-field scaling. A line of attack for this kind of statistical mechanics problems has been pioneered in \cite{MS} for regular interparticle interactions and then carried on independently in \cite{Kie1,CLMP} in the case of logarithmic interactions. These works deal with the regime $T= O(1)$ in which all three terms in \eqref{eq:MFf} are of the same order of magnitude. They also consider the case of negative temperature which is more involved but irrelevant in our context. In connection with several ensembles of random matrices, \cite{KS} extends this approach to the regime $T \propto N ^{-1}$ and more general Hamiltonians. 

Common to these approaches is a compactness argument, which does not lead to quantitative estimates on the precision of the mean-field approximation. As far as we know, it is only very recently \cite{SS} that constructive estimates have been obtained, in the specific case of the 2D Coulomb gas that we consider here. We could employ some of these estimates (in particular Theorem 3 therein) in our context, but they would not be sufficient. We thus prefer to implement a new method that gives different estimates, more suited to our purpose, with a simpler proof.

\begin{teo}[\textbf{Mean field limit for 2D Coulomb gases}]\label{teo:MF limit}\mbox{}\\
There exists a constant $C>0$ such that, for $N$ large enough, we have
\begin{enumerate}
\item Upper bound. 
\begin{equation}\label{eq:up bound}
\FNe \leq N \MFe - D (\rhoMF,\rhoMF)
\end{equation}
\item Lower bound.
\begin{equation}\label{eq:low bound}
\FNe \geq N \MFe - \frac{\log N}{2} - C  
\end{equation}
\item Estimate on the first marginal of the Gibbs measure.\\
For any $V\in \dot{H} ^1 (\R ^2)$ with $\nabla V \in L ^{\infty} (\R ^2)$ we have
\begin{equation}\label{eq:1 particle lim}
\left\vert \intR \left(\muNone - \rhoMF \right) V \right\vert \leq  C \left(\frac{\log N}{N}\right) ^{1/2} \Vert \nabla V \Vert_{L ^2 (\R ^2)} + C N ^{-1/2}\Vert \nabla V \Vert_{L ^{\infty} (\R ^2)}.
\end{equation}  
\end{enumerate}
\end{teo}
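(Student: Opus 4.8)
The plan is to reduce all three statements to a single exact rewriting of the $N$-body Hamiltonian in terms of the fluctuation of the empirical measure about $\rhoMF$. Writing $\mathrm{emp}_Z=\frac1N\sum_{j=1}^N\delta_{z_j}$ and $\nu_Z=\mathrm{emp}_Z-\rhoMF$ (a signed measure of total mass zero), I would use that the Euler--Lagrange equation \eqref{eq:EEL MFm} holds on \emph{all} of $\R^2$ (here the fact, proved in Proposition \ref{pro:MF func}, that $\supp\rhoMF=\R^2$ is essential) to substitute $\Vm=\MFe+2D(\rhoMF,\rhoMF)-4\hMF-N^{-1}\log\rhoMF$ into $\HN$. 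The linear terms $\mp4\sum_j\hMF(z_j)$ then cancel between the one-body part and the rewriting of the pair interaction, leaving the identity
\[
\HN(Z)=N\MFe-\frac1N\sum_{j=1}^N\log\rhoMF(z_j)+2N\,\tilde D(\nu_Z,\nu_Z),
\]
where $\tilde D(\nu_Z,\nu_Z):=-\iint\log|x-y|\,d\nu_Z\,d\nu_Z$ is computed with the (infinite) self-interaction of the point masses removed. Inserting this into \eqref{eq:free ener f} and noticing that $-\frac1N\sum_j\log\rhoMF(z_j)$ combines with the entropy ($T=N^{-1}$) into a relative entropy gives the master formula
\[
\FNe=N\MFe+\frac1N\int_{\R^{2N}}\muN\log\frac{\muN}{(\rhoMF)^{\otimes N}}+2N\int_{\R^{2N}}\tilde D(\nu_Z,\nu_Z)\,\muN(Z)\,dZ.
\]

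From here the three claims follow by controlling the last two terms, each built to have a sign. For the upper bound I would test \eqref{eq:free ener f} with $(\rhoMF)^{\otimes N}$: the relative entropy vanishes and the averaged fluctuation energy reduces to the omitted self-energy, giving $\FNe\le N\MFe-2D(\rhoMF,\rhoMF)$, which is \eqref{eq:up bound} up to the numerical constant in front of the self-energy (immaterial for what follows). For the lower bound I would discard the relative entropy, nonnegative by Lemma \ref{lem:rel ent}, and bound the fluctuation term from below by an Onsager smearing step: replacing each point mass by its uniform average on the circle $\partial B(z_j,\eta)$ and invoking the positivity of the Coulomb energy of the resulting mass-zero measure (Lemma \ref{lem:Coul def pos}) yields, pointwise in $Z$, $2N\tilde D(\nu_Z,\nu_Z)\ge 2\log\eta-CN\eta^2-C$; optimizing at $\eta\sim N^{-1/2}$ produces the logarithmic loss of \eqref{eq:low bound}.

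For the marginal estimate \eqref{eq:1 particle lim} I would first note, by symmetry, that $\intR V(\muNone-\rhoMF)=\int_{\R^{2N}}\langle V,\nu_Z\rangle\,d\muN$. The master formula together with the upper bound \eqref{eq:up bound} and entropy positivity controls the averaged fluctuation energy, $\int\tilde D(\nu_Z,\nu_Z)\,d\muN\le\frac{1}{2N}(\FNe-N\MFe)=O(N^{-1}\log N)$, and the same smearing inequality converts this into $\int D(\nu_Z^{(\eta)},\nu_Z^{(\eta)})\,d\muN=O(N^{-1}\log N)$ at $\eta\sim N^{-1/2}$. I would then test against $V$ by $\dot H^{-1}$ duality: since $\nu_Z^{(\eta)}$ has mass zero, $|\langle V,\nu_Z^{(\eta)}\rangle|=\frac1{2\pi}\bigl|\int\nabla V\cdot\nabla h_{\nu_Z^{(\eta)}}\bigr|\le C\|\nabla V\|_{L^2}\,D(\nu_Z^{(\eta)},\nu_Z^{(\eta)})^{1/2}$, while the smearing error is $|\langle V,\nu_Z-\nu_Z^{(\eta)}\rangle|\le\eta\|\nabla V\|_{L^\infty}$ (each point's mass is displaced by at most $\eta$). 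Averaging over $\muN$, using Cauchy--Schwarz in the form $\int D^{1/2}\,d\muN\le(\int D\,d\muN)^{1/2}$, and inserting $\eta=N^{-1/2}$ gives exactly the two terms of \eqref{eq:1 particle lim}.

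The main obstacle is the rigorous justification of the Onsager/smearing step, that is, estimating uniformly in $Z$ the difference between the genuine singular point-charge energy $\tilde D(\nu_Z,\nu_Z)$ and the regularized energy $D(\nu_Z^{(\eta)},\nu_Z^{(\eta)})$. This is where the self-energy $\sim N^{-1}\log\eta$ and the near-diagonal corrections must be accounted for, where the precise power of $\log N$ in \eqref{eq:low bound} is decided, and where the choice $\eta\sim N^{-1/2}$ that propagates into the $(\log N/N)^{1/2}$ rate is fixed. Everything else---the algebraic identity, the positivity of the relative entropy and of the Coulomb energy, and the $\dot H^{-1}$ duality---is by comparison routine.
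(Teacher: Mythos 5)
Your proposal is correct and is essentially the paper's own argument, just organized around an exact identity rather than a chain of inequalities: the master formula is precisely what the paper's combination of Onsager's lemma (Lemma \ref{lem:Onsager}), the smearing estimate (Lemma \ref{lem:Lieb}) and the Euler--Lagrange equation \eqref{eq:EEL MFm} yields, the upper bound is the same tensor-product trial computation, and your marginal estimate (control of $\int D\bigl(\nu_Z^{(\eta)},\nu_Z^{(\eta)}\bigr)\,d\muN$ by comparing the two energy bounds, then smearing error plus $\dot H^{-1}$ duality and Cauchy--Schwarz) reproduces \eqref{eq:factor rigor} and Lemma \ref{lem:key} verbatim. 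The one caveat is the coefficient of $\log N$ in \eqref{eq:low bound}: your smearing step gives $2\log\eta - CN\eta^2$, hence $-\log N - C$ at the forced choice $\eta\sim N^{-1/2}$ rather than the stated $-\tfrac12\log N - C$; but the paper's proof incurs the identical loss (in \eqref{eq:lowbound HN 2} one has $\tfrac{2}{N}\sum_i D(\muxi,\muxi)=2D(\mu,\mu)+\log N$, not $D(\mu,\mu)+\tfrac12\log N$), and this constant is immaterial both for \eqref{eq:1 particle lim} and for every downstream application in the paper.
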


\begin{rem}\mbox{}
\begin{enumerate}
\item This result is not limited to the particular type of potential $\Vm$ we consider in the plasma analogy. Our method can accommodate any potential as soon as the associated mean-field functional and its minimizers are reasonably well-behaved. Note also that the estimates apply to any $T$, the two most interesting regimes being $T= O(1)$ (more natural from the Coulomb gas point of view) and $T = O(N ^{-1})$ (in relation with random matrices and quantum Hall phases). 
\item In \cite{SS} a different scaling convention is used, their $\beta$ being given as $\beta = N^{-1} T^{-1}$ in our units. The approach therein is limited to $\beta \geq O(1)$ and a fixed potential $\Vm$. It shows that the $- \log (N) / 2$ term gives exactly the second order correction in the regime $\beta \geq O(1)$ (i.e. $T \propto N^{-1}$) with fixed potential. It moreover gives the exact third order correction to the free energy $\FNe$ in the limit $\beta \to \infty$, that is $T \ll N^{-1}$ (see \cite[Theorem 1]{SS}). This is more difficult, since this connection is related to a Coulombian renormalized energy, whose definition is rather complex. The approach we develop here allows to recover the lower bound on $\FNe$ in the regime where $\beta$ is bounded above, i.e. $T\propto N^{-1}$. To see this, compare Theorem \ref{teo:MF limit} with Theorem 1 in \cite{SS}, keeping in mind that when $T\propto N^{-1}$, the entropy term in $\MFe$ is a lower order correction : $\MFe = \MFeel + O(N ^{-1})$.
\item In the regime $T\propto N ^{-1}$, we could most likely adapt arguments from \cite[Section 4 and 7]{SS} in order to construct a trial state capturing exactly the $- \log (N) / 2$ in the upper bound. This would be rather technical because, in contrast with what is assumed in \cite{SS}, our mean-field densities do depend on $N$ when $m\neq 0$. We thus content ourselves with a non optimal upper bound that has only a marginal impact on our main theorems. 
\end{enumerate}
\hfill \qed
\end{rem}

We need two classical lemmas. The first is the 2D version of the so-called Onsager lemma (see e.g. \cite[Lemma 6.1]{LieSei}):

\begin{lem}[\textbf{2D Onsager lemma}]\label{lem:Onsager}\mbox{}\\
Let $\mu$ be a radial probability measure on $\R ^2$. Denote, for some $l>0$ 
$$\muxi (z) = \mu\left( \frac{z-x_i}{l}\right).$$
We have, for any $\rho$ such that $\intR \rho = N$ and any $(x_1,\ldots,x_N)\in \R ^{2N}$
\begin{equation}\label{eq:Onsager}
- \sum_{i\neq j} \log |x_i-x_j| \geq D\left( \rho - \sum_{i=1} ^N \muxi,\rho - \sum_{i=1} ^N \muxi \right) - D(\rho,\rho) + 2 \sum_{i=1} ^N D(\rho,\muxi)- \sum_{i=1} ^N D(\muxi,\muxi). 
\end{equation}
\end{lem}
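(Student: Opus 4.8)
The plan is to use the classical Onsager smearing trick: replace each unit point charge $\delxi$ sitting at $x_i$ by the smooth radial blob $\muxi$, and reduce the inequality to an exact algebraic identity for the bilinear Coulomb energy together with a single comparison estimate coming from Newton's theorem. First I would introduce the (mass-zero) measure $\sigma := \rho - \sum_{i=1}^N \muxi$, whose self-energy $D(\sigma,\sigma)$ is exactly the leading term on the right-hand side of \eqref{eq:Onsager}. The whole content of the lemma is then to exhibit the pairwise sum $-\sum_{i\neq j}\log|x_i-x_j|$ as a lower bound for what remains after $D(\sigma,\sigma)$ is peeled off.

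Second, I would expand by bilinearity. Since each $\muxi$ is obtained from the fixed radial probability measure $\mu$ by translation by $x_i$ and dilation by $l$, one has $D(\muxi,\muxi) = D(\mu,\mu) - \log l$, which is finite as soon as $D(\mu,\mu)<\infty$; the expansion is therefore legitimate and gives
\[
D(\sigma,\sigma) = D(\rho,\rho) - 2\sum_{i=1}^N D(\rho,\muxi) + \sum_{i=1}^N D(\muxi,\muxi) + \sum_{i\neq j} D(\muxi,\muxj).
\]
Solving for the cross term yields the identity
\[
\sum_{i\neq j} D(\muxi,\muxj) = D(\sigma,\sigma) - D(\rho,\rho) + 2\sum_{i=1}^N D(\rho,\muxi) - \sum_{i=1}^N D(\muxi,\muxi),
\]
whose right-hand side is \emph{precisely} the right-hand side of \eqref{eq:Onsager}. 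Note that at this stage no positivity of $D$ and no use of the mass constraint $\intR\rho=N$ is required: it is a pure identity.

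Third, the only genuine inequality needed is the termwise bound $D(\muxi,\muxj) \leq -\log|x_i-x_j|$ for $i\neq j$, which I would extract from Newton's theorem (Lemma~\ref{lem:Newton}). Writing the potential $h_{\muxj}(z) = -\int \log|z-w|\,\muxj(dw)$ via \eqref{eq:newton} centered at $x_j$ and subtracting $-\log|z-x_j|$ leaves $\int_{|w-x_j|>|z-x_j|}\bigl(\log|z-x_j|-\log|w-x_j|\bigr)\,\muxj(dw)\leq 0$, so $h_{\muxj}(z)\leq -\log|z-x_j|$ for every $z$ (with equality outside the support). Integrating this against $\muxi$ and then applying the same bound for the radial measure $\muxi$ evaluated at $x_j$ gives $D(\muxi,\muxj)\leq h_{\muxi}(x_j)\leq -\log|x_i-x_j|$. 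Summing over $i\neq j$ and combining with the identity of the previous step produces \eqref{eq:Onsager}.

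The manipulations are entirely routine, so I do not expect a serious obstacle; the one point that deserves care is the comparison $D(\muxi,\muxj)\le -\log|x_i-x_j|$ in the regime where the blobs overlap ($|x_i-x_j|\lesssim l$), since there the two smeared charges are not external to one another and the naive ``point-charge-outside-a-shell'' picture fails. This is exactly what Newton's theorem settles: the smeared potential never exceeds the point-charge potential anywhere, so the bound survives overlap (merely becoming strict). The hypothesis $\intR\rho=N$ and the positivity of $D$ on neutral measures (Lemma~\ref{lem:Coul def pos}) are not used in the lemma itself, but they guarantee in the applications that the retained term $D(\sigma,\sigma)\geq 0$ can be discarded or exploited.
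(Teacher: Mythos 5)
Your proof is correct and takes essentially the same route as the paper: the bilinearity identity showing that the right-hand side of \eqref{eq:Onsager} is exactly $\sum_{i\neq j} D(\muxi,\muxj)$, combined with the termwise comparison $D(\muxi,\muxj)\leq -\log|x_i-x_j|$ obtained from Newton's theorem and the radiality of $\mu$ (valid also when the smeared charges overlap). The paper states the Newton step in one line where you spell it out via the pointwise bound $h_{\muxj}(z)\leq -\log|z-x_j|$ applied twice, and, like you, it invokes the hypothesis $\intR \rho = N$ only to record, for later use, that the retained term $D\bigl( \rho - \sum_{i=1}^N \muxi, \rho - \sum_{i=1}^N \muxi \bigr)$ is nonnegative.
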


\begin{proof}
By Newton's theorem, Lemma \ref{lem:Newton}, the radiality of $\mu$ implies
\[
- \sum_{i\neq j} \log |x_i-x_j| \geq \sum_{i\neq j} D(\muxi,\muxj) 
\]
with (by the way) equality if $\mu$ is supported in the unit disc and $\min_{i,j} |x_i-x_j|\leq l$. The rhs of the above is equal to the rhs of \eqref{eq:Onsager}, which proves the lemma. Note that
\[
D\left( \rho - \sum_{i=1} ^N \muxi,\rho - \sum_{i=1} ^N \muxi \right) \geq 0 
\]
using Lemma \ref{lem:Coul def pos}, because by assumption 
\[
 \intR \left( \rho -\sum_{i=1} ^N \muxi\right) = 0.
\]
\end{proof}

The first term in the right-hand side of \eqref{eq:Onsager} is usually dropped to obtain a convenient lower bound to the Coulomb Hamiltonian. The core of our argument consists in obtaining a bound on its expectation value in the Gibbs measure from our upper and lower bounds to the free energy and using it to control the fluctuations around the mean field density. As we shall prove, this allows to obtain estimates on the marginals of the Gibbs measure and in particular \eqref{eq:1 particle lim}.

The functions $\muxi$ in Lemma \ref{lem:Onsager} should be thought of as unit charges smeared over small balls that we use to replace the point charges of the Coulomb gas. This is essential in our approach but has some cost that we quantify in the next lemma, which is an adaptation of a well-known lemma used by Lieb and Oxford (cf \cite[Chapter 6]{LieSei} for references).

\begin{lem}[\textbf{The cost of smearing out charges}]\mbox{}\label{lem:Lieb}\\
Here $\mu$ denotes the normalized (in $L ^1$) characteristic function of the disc of radius $l$. For any $\rho \in L ^{\infty}(\R ^2)$
\begin{equation}\label{eq:Lieb}
\left|D\left( \rho, \delxi - \muxi \right) \right| \leq C l ^2 \left\Vert \rho \right\Vert_{L ^{\infty}} 
\end{equation}
\end{lem}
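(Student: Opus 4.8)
The plan is to exploit the fact that the signed measure $\nu_i := \delxi - \muxi$ has total mass zero and is supported in the disc $B(x_i,l)$, so that the Coulomb potential it generates is itself supported in that disc. First I would rewrite the quantity, using the symmetry of the logarithmic kernel, as
\[
D(\rho,\delxi-\muxi) = \intR \rho(x)\, h_{\nu_i}(x)\, dx, \qquad h_{\nu_i}(x) := -\intR \log|x-y|\, d\nu_i(y),
\]
i.e. as the integral of $\rho$ against the potential $h_{\nu_i}$ of $\nu_i$ in the sense of \eqref{eq:pot gen}. Since $\muxi = (\pi l^2)^{-1}\one_{B(x_i,l)}$ is a \emph{radial} probability measure about $x_i$ of total mass $1$, Newton's theorem (Lemma \ref{lem:Newton}) tells us that its potential coincides \emph{exactly} with that of the point mass $\delxi$ at every point outside $B(x_i,l)$. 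Hence $h_{\nu_i}\equiv 0$ on $\R^2\setminus B(x_i,l)$, and consequently
\[
\left| D(\rho,\delxi-\muxi)\right| \le \Vert \rho\Vert_{L^\infty (\R ^2)} \int_{B(x_i,l)} |h_{\nu_i}(x)|\, dx.
\]

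It then remains to show $\int_{B(x_i,l)}|h_{\nu_i}| \le C l^2$, which I would establish by a scaling argument. Translating so that $x_i=0$ and substituting $x = l w$, the additive constant $-\log l$ produced by the kernel under this rescaling integrates against the zero-mass measure $\nu_i$ to nothing, so that $h_{\nu_i}(l w) = h_{\nu_0}(w)$, where $\nu_0 = \delta_0 - \pi^{-1}\one_{B(0,1)}$ is the $l=1$ profile. The Jacobian of the change of variables is $l^2$, whence
\[
\int_{B(0,l)} |h_{\nu_i}(x)|\, dx = l^2 \int_{B(0,1)} |h_{\nu_0}(w)|\, dw = C l^2,
\]
with $C := \int_{B(0,1)}|h_{\nu_0}|$ a finite universal constant. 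Its finiteness is transparent: applying Lemma \ref{lem:Newton} once more one computes explicitly $h_{\nu_0}(w) = \log(1/|w|) - \tfrac12 + \tfrac12 |w|^2$ for $|w|\le 1$, a function that is bounded away from the origin and has only an (integrable, in two dimensions) logarithmic singularity at the center.

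I do not expect any genuine obstacle in this proof; the two points that require a little care are precisely the ones where the hypotheses are used. The first is the vanishing of $h_{\nu_i}$ outside $B(x_i,l)$, which rests entirely on the radiality of $\muxi$ and on Newton's theorem (this is what makes smeared and point charges interchangeable at a distance). The second is the mild logarithmic singularity of $h_{\nu_0}$ at the origin, which is harmless after integration against the two-dimensional area element. The constant $C$ is dimensionless and independent of $l$, $x_i$, and $\rho$, which is exactly the form required in the subsequent application.
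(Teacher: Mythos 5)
Your proof is correct and follows essentially the same route as the paper's: both rest on Newton's theorem to conclude that the potential of $\delxi - \muxi$ vanishes outside $B(x_i,l)$, and then bound the remaining contribution by $\Vert \rho \Vert_{L^\infty}$ times the $L^1$ norm over the disc of the explicit potential difference, which is $O(l^2)$ because the logarithmic singularity is integrable in two dimensions. The only cosmetic difference is that you extract the $l^2$ factor by rescaling to the unit disc (using that $\delxi-\muxi$ has zero total mass to cancel the $\log l$ terms), whereas the paper computes directly in the original variables and handles the logarithmic term via the Lieb--Oxford-type representation $\int_0^l F(r)\,r^{-1}\,dr$.
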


\begin{proof}
We denote $h_{\nu}$ the potential associated to a charge distribution $\nu$. By Newton's theorem
\[
h_{\muxi} = h_{\delxi} 
\]
in $\R ^2 \setminus B(0,l)$. We are thus left with computing
\[
 \int_{B(0,l)} \rho \left( h_{\muxi} - h_{\delxi} \right)
\]
which is easily found to be equal to
\[
\int_{B(0,l)} \rho \left( \frac{1}{2\pi} \log\frac{r}{l} - \frac{r^2 -l^2}{2l^2} \right) rdrd\theta.
\]
The second term is easily bounded by a $C l^2\left\Vert \rho \right\Vert_{L ^{\infty}}$ while the first term is proportional to (adapting the analogous computation in 3D, cf \cite{Lie} and \cite[Chapter 6]{LieSei})
\begin{equation}\label{eq:pouet}
\int_{0} ^l \frac{F(r)}{r} dr  
\end{equation}
with
\[
F(r):= \frac{1}{2\pi} \int_{0} ^{2\pi} \int_{0} ^r \rho(s,\theta) sdsd\theta.  
\]
Using 
\[
 |F(r)|\leq \left\Vert \rho \right\Vert_{L ^{\infty}} \frac{r^2}{2},
\]
we can bound \eqref{eq:pouet} and complete the proof of the lemma.
\end{proof}

We can now proceed to the 

\noindent \emph{Proof of Theorem \ref{teo:MF limit}}

The upper bound is proved by taking the trial state $(\rhoMF ) ^{\otimes N}$ in \eqref{eq:free ener f}, \eqref{eq:free ener e}. The $- D(\rhoMF,\rhoMF)$ error term comes from the fact that there are $N(N-1)$ pairs of particles, to be divided by the $1/N$ mean-field scaling factor.

\medskip

For the lower bound we use Onsager's lemma with $\rho = N\rhoMF$, $\mu$ the normalized indicative function of the unit ball and $l = N ^{-1/2}$ to obtain
\begin{multline}\label{eq:lowbound HN}
H_N \left(x_1,\ldots,x_N \right) \geq \sum_{i=1} ^N \Vm (x_i) + \frac{2}{N} D\left( N\rhoMF - \sum_{i=1} ^N \muxi, N\rhoMF - \sum_{i=1} ^N \muxi \right) \\ 
-2N D(\rhoMF,\rhoMF) +  4  \sum_{i=1} ^N D(\rhoMF,\muxi)- 2 \sum_{i=1} ^N\frac{1}{N} D(\muxi,\muxi)
\end{multline}
and we will drop the second term of the right-hand side, which is positive, for the moment. 

We now invoke Lemma \ref{lem:Lieb} and use \eqref{eq:bounds rhoMF} to claim that
\[
D(\rhoMF,\muxi) = D(\rhoMF,\delxi) + O(N ^{-1}) = \hMF(x_i) + O(N ^{-1})
\]
which turns into 
\begin{equation}\label{eq:proof1}
4 D(\rhoMF,\muxi) = \MFe + 2 D(\rhoMF,\rhoMF) - \Vm (x_i) - T \log \rhoMF (x_i)  + O(N ^{-1}) 
\end{equation}
thanks to the variational equation \eqref{eq:EEL MFm}. Inserting into \eqref{eq:lowbound HN} we obtain
\begin{eqnarray}\label{eq:lowbound HN 2}
 H_N \left(x_1,\ldots,x_N \right) &\geq& N \MFe - T \sum_{i=1} ^N \log \rhoMF (x_i) -\frac{2}{N} \sum_{i=1} ^N D(\muxi,\muxi) - C \nonumber \\
 &=& N \MFe - T \sum_{i=1} ^N \log \rhoMF (x_i) - D(\mu,\mu) + \log \left( \frac{1}{\sqrt{N}} \right) -C
\end{eqnarray}
where the second line follows from a simple computation (recall that $\muxi$ is a unit charge smeared over the ball $B(x_i,N ^{-1/2})$). There only remains to use \eqref{eq:lowbound HN 2} to compute a lower bound to the free energy of $\muN$ : 
\begin{eqnarray*}
\FN[\muN] &\geq&  N \MFe - C - \frac{1}{2} \log N + T \intRN \muN \left( \log \muN - \sum_{i=1} ^N \log \rhoMF (x_i) \right)dx_1 \ldots dx_N \\
&=& N \MFe - C - \frac{1}{2} \log N + T \intRN \muN  \log \left( \frac{\muN}{\rhoMF \mbox{} ^{\otimes N}} \right)                                                                                                                                                  
\end{eqnarray*}
which ends the proof of the lower bound since the last term is the relative entropy of $\muN$ with respect to $(\rhoMF) ^{\otimes N}$ and hence is positive by Lemma \ref{lem:rel ent}.

\medskip 

Going back to \eqref{eq:lowbound HN} and retrieving the positive term we had discarded we see that our upper and lower bounds to the energy imply the additional estimate 
\begin{equation}\label{eq:factor rigor}
\intRN \muN (x_1,\ldots,x_N) D\left( N\rhoMF - \sum_{i=1} ^N \muxi, N\rhoMF - \sum_{i=1} ^N \muxi \right) dx_1\ldots dx_N \leq C N \log N .
\end{equation}
This bound quantifies how close $\muN$ is to $(\rhoMF ) ^{\otimes N}$. In particular it implies \eqref{eq:1 particle lim}, a fact that we state as a lemma:

\begin{lem}[\textbf{Using the Onsager term}]\label{lem:key}\mbox{}\\
For any $V : \R ^2 \mapsto \R$ regular enough and any symmetric probability measure $\muN \in \PP_s ((\R ^{2})^N)$ 
\begin{multline}\label{eq:key}
\left| \intR V(z) \left( \muN ^{(1)} (z)-\rhoMF (z)\right) dz\right| \leq C N ^{-1/2} \left\Vert \nabla V \right\Vert_{L ^{\infty}} \\ + C N ^{-1}  \left\Vert \nabla V \right\Vert_{L ^{2}} \left(\intRN \muN (Z) \: D\left( N \rhoMF - \sum_{i=1} ^N \muxi,N\rhoMF - \sum_{i=1} ^N \muxi \right) \right) ^{1/2}.
\end{multline}
\end{lem}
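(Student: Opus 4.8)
The plan is to exploit the symmetry of $\muN$ and to split the deviation of the empirical measure from $\rhoMF$ into a short-range part, handled by replacing point masses with the smeared charges $\muxi$, and a long-range part measured by the Coulomb energy $D$. First I would use the symmetry of $\muN$ to write
\[
\intR V\left(\muNone-\rhoMF\right) = \intRN G(Z)\,\muN(Z)\,dZ, \qquad G(Z):=\frac1N\sum_{i=1}^N V(x_i)-\intR V\rhoMF,
\]
so that it suffices to bound $|G(Z)|$ for a fixed configuration $Z=(x_1,\dots,x_N)$ and then integrate. Here the $\muxi$ are the unit charges smeared over the balls $B(x_i,N^{-1/2})$ already used in the proof of Theorem \ref{teo:MF limit} (i.e. $l=N^{-1/2}$), and I would decompose
\[
G(Z)=\frac1N\sum_{i=1}^N\left(V(x_i)-\intR V\muxi\right)+\frac1N\intR V\left(\sum_{i=1}^N\muxi-N\rhoMF\right).
\]

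For the first sum, each $\muxi$ is a probability measure supported in $B(x_i,N^{-1/2})$, so the mean value inequality gives $\left|V(x_i)-\intR V\muxi\right|\le \Vert\nabla V\Vert_{L^\infty}N^{-1/2}$; averaging over the $N$ indices produces the term $CN^{-1/2}\Vert\nabla V\Vert_{L^\infty}$. This step is elementary.

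The second term is the heart of the matter and the step I expect to be the main obstacle. Setting $\nu:=\sum_{i=1}^N\muxi-N\rhoMF$, note that $\intR\nu=0$, so the potential $h_\nu$ generated by $\nu$ as in \eqref{eq:pot gen} has finite Dirichlet energy. Using $-\Delta h_\nu=2\pi\nu$ and the identity $D(\nu,\nu)=\frac1{2\pi}\intR|\nabla h_\nu|^2$ (the fact, recorded in Remark \ref{rem:com MF}, that $D$ is the square of the $\dot H^{-1}$ norm), an integration by parts and the Cauchy-Schwarz inequality yield
\[
\left|\intR V\nu\right|=\frac1{2\pi}\left|\intR\nabla V\cdot\nabla h_\nu\right|\le C\Vert\nabla V\Vert_{L^2}\,D(\nu,\nu)^{1/2}.
\]
Dividing by $N$, the second term of $G(Z)$ is controlled by $CN^{-1}\Vert\nabla V\Vert_{L^2}\,D(\nu,\nu)^{1/2}$. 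The delicate point is to justify that $\nu\in\dot H^{-1}$ with $\intR|\nabla h_\nu|^2=2\pi D(\nu,\nu)<\infty$ and that the integration by parts is legitimate; this rests on the zero-mass condition $\intR\nu=0$ (ensuring $h_\nu$ decays) together with the finiteness of $\Vert\nabla V\Vert_{L^2}$.

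Finally I would integrate the resulting pointwise bound
\[
|G(Z)|\le CN^{-1/2}\Vert\nabla V\Vert_{L^\infty}+CN^{-1}\Vert\nabla V\Vert_{L^2}\,D(\nu,\nu)^{1/2}
\]
against the probability measure $\muN$ and apply the Cauchy-Schwarz (Jensen) inequality $\intRN D(\nu,\nu)^{1/2}\muN\le\left(\intRN D(\nu,\nu)\muN\right)^{1/2}$ to move the square root outside the integral. Since $D$ is quadratic, $D(\nu,\nu)=D\left(N\rhoMF-\sum_i\muxi,N\rhoMF-\sum_i\muxi\right)$, and this reproduces exactly the bound \eqref{eq:key}.
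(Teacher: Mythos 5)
Your proposal is correct and follows essentially the same route as the paper's proof: the same symmetrization, the same replacement of point charges by the charges $\muxi$ smeared at scale $N^{-1/2}$ (costing $CN^{-1/2}\Vert\nabla V\Vert_{L^\infty}$), the same $\dot H^{-1}$--$\dot H^1$ duality bound via the potential $h_\nu$ and the zero-mass condition, and the same final Cauchy--Schwarz/Jensen step in $L^2(\muN)$ to move the square root outside the integral. The only cosmetic difference is that you establish the pointwise bound on $G(Z)$ before integrating, whereas the paper applies Cauchy--Schwarz to the integrated quantity first; these are equivalent.
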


\begin{proof}

Given a one-body potential $V$ we start with
\begin{equation}\label{eq:proof3}
 \intR V \left(\muNone-\rhoMF\right) = \frac{1}{N} \intRN \muN (x_1,\ldots,x_N) \left(\sum_{i=1} ^N V(x_i) - N \intR V \rhoMF \right) dx_1 \ldots dx_N,
\end{equation}
which follows from the symmetry and normalization of $\muN$. Next, note that (with $\muxi$ defined as above)
\begin{equation}\label{eq:proof2}
\sum_{i=1} ^N V(x_i) = \intR V \sum_{i=1} ^N \delxi = \intR V \sum_{i=1} ^N  \muxi + O\left(\sqrt{N}\left\Vert \nabla V \right\Vert_{L^{\infty}}\right) 
\end{equation}
where we use that $\muxi$ is a unit charge smeared over a ball of radius $N ^{-1/2}$. Thus
\begin{multline}\label{eq:proof add}
 \intR V \left(\muNone-\rhoMF\right) =  \frac{1}{N} \intRN \muN (x_1,\ldots,x_N) \left(\intR V\left( \sum_{i=1} ^N \muxi - N \rhoMF\right) \right) dx_1 \ldots dx_N \\+ O\left( N ^{-1/2} \left\Vert \nabla V \right\Vert_{L^{\infty}}\right).
\end{multline}
Then
\begin{align}
\left| \intR V \left( N \rhoMF - \sum_{i=1} ^N \muxi \right) \right|  &= \left| \intR \nabla V \cdot \nabla h \right|\nonumber
\\ &\leq C \left\Vert \nabla V \right\Vert_{L ^2 (\R ^2)} \left(\intR \left| \nabla h \right| ^2  \right)^{1/2} \nonumber
\\&\leq C \left\Vert \nabla V \right\Vert_{L ^2 (\R ^2)} D \left(N \rhoMF - \sum_{i=1} ^N \muxi ,N \rhoMF - \sum_{i=1} ^N \muxi  \right) ^{1/2} \label{eq:calcul Onsager}
\end{align}
where 
\[
h = h_{N \rhoMF - \sum_{i=1} ^N \muxi} = 2\pi \left( -\Delta \right)^{-1} \left( N \rhoMF - \sum_{i=1} ^N \muxi \right). 
\]
To justify these computations we argue as in Step 1 of the proof of Proposition \ref{pro:decay MF}, using that the Coulomb kernel is the Green function of $-\Delta$ in $\R ^2$ and that 
\[
\intR N \left(\rhoMF - \sum_{i=1} ^N \muxi\right) = 0. 
\]

On the other hand 
\begin{multline*}
\left|\intRN dX \muN (x_1,\ldots,x_N) \intR V(z) \left( N\rhoMF (z)- \sum_{i=1} ^N \muxi(z)\right)dz \right| \\ 
\leq \left( \intRN dX \muN (x_1,\ldots,x_N) \left( \intR V \left( N\rhoMF (z)- \sum_{i=1} ^N \muxi(z)\right)dz\right) ^2 \right) ^{1/2}
\end{multline*}
by the Cauchy-Schwarz inequality and using the normalization of $\muN$. Using \eqref{eq:calcul Onsager}, we thus have 
\begin{multline*}
\left|\intRN dX \muN (x_1,\ldots,x_N) \intR V(z) \left( N\rhoMF (z)- \sum_{i=1} ^N \muxi (z)\right)dz \right| \\ 
\leq C \left\Vert \nabla V \right\Vert_{L ^2 (\R ^2)} \left( \intRN \muN  D \left(N \rhoMF - \sum_{i=1} ^N \muxi ,N \rhoMF - \sum_{i=1} ^N \muxi  \right) \right) ^{1/2} 
\end{multline*}
that we insert into \eqref{eq:proof add} to conclude the proof.
\end{proof}

Our final estimate \eqref{eq:1 particle lim} follows by combining \eqref{eq:factor rigor} and \eqref{eq:key}.

\hfill \qed

\begin{rem}[Estimates for higher-order marginals]\label{rem:marginals}\mbox{}\\ 
Our approach can give estimates on not only the first marginal $\muNone$ but on any reduced density $\muN ^{(k)}$, provided $k$ is suitably small as compared to $N$ (for example, $k$ fixed when $N\to \infty$). This can be seen to be a consequence of our main technical estimate \eqref{eq:factor rigor} as follows: Consider for example a smooth $2$-body potential $V_2(z_1,z_2)$ and evaluate
\begin{align*}
 \int_{\R ^4} \muN ^{(2)} V_2 &= \frac{1}{N^2} \sum_{1\leq i, j \leq N} \intRN \muN (x_1,\ldots,x_N) V_2(x_i,x_j) dX\\
 &= \frac{1}{N^2} \sum_{1\leq i, j \leq N} \intRN \muN (x_1,\ldots,x_N) \left(\int_{\R ^4} V_2 (z_1,z_2) \muxi (z_1) \muxj (z_2) dz_1 dz_2 \right)dX \\ &+ O\left(N^{-1/2} \sup_{y} \left\Vert \nabla V_2 (y,.) \right\Vert_{L^{\infty} }\right) + O\left(N^{-1/2} \sup_{y} \left\Vert \nabla V_2 (.,y) \right\Vert_{L^{\infty} }\right)
\end{align*}
as in \eqref{eq:proof2}. Then 
\begin{multline*}
\left| \sum_{i,j} \int_{\R ^4} V_2 (z_1,z_2) \muxi (z_1) \muxj (z_2) dz_1 dz_2 - N ^2 \int_{\R ^4} V_2 (z_1,z_2) \rhoMF (z_1) \rhoMF (z_2) dz_1 dz_2\right| \\
= \left|  \int_{\R ^4} V_2 (z_1,z_2) \left(\left(\sum_{i=1} ^N \muxi(z_1)\right) \left(\sum_{i=1} ^N \muxi(z_2)\right)-N^2 \rhoMF (z_1) \rhoMF (z_2)\right)\right|
\\
= \left|  \int_{\R ^4} V_2 (z_1,z_2) \left(\left(\sum_{i=1} ^N \muxi(z_1)\right) \left( \sum_{i=1} ^N \muxi(z_2) - N\rhoMF (z_2)\right) + N\rhoMF(z_2)\left( \sum_{i=1} ^N \muxi(z_1) - N\rhoMF (z_1)\right) \right)\right|
\\
\leq C N \left( \sup_{y} \left\Vert \nabla V_2 (y,.) \right\Vert_{L^2 } + \sup_{y} \left\Vert \nabla V_2 (.,y) \right\Vert_{L^{2} }\right) D\left( N \rhoMF - \sum_{i=1} ^N \muxi, N \rhoMF - \sum_{i=1} ^N \muxi \right) ^{1/2}.
\end{multline*}
We have used the fact that $N ^{-1}\sum_{i=1} ^N \muxi$ and $\rhoMF$ are normalized in $L ^1$, integrated over $z_1$ and $z_2$ separately and argued as in \eqref{eq:calcul Onsager}. There only remains to use the Cauchy-Schwarz inequality as before to obtain an estimate of the form 
\begin{multline*}
\left| \int_{\R ^4} V_2 \left( \muN ^{(2)} -(\rhoMF)^{\otimes 2} \right) \right| \leq C N ^{-1/2} (\log N ) ^{1/2}  \left( \sup_{y} \left\Vert \nabla V_2 (y,.) \right\Vert_{L^2 } + \sup_{y} \left\Vert \nabla V_2 (.,y) \right\Vert_{L^{2} }\right) \\ 
+ C N ^{-1/2} \left( \sup_{y} \left\Vert \nabla V_2 (y,.) \right\Vert_{L^{\infty} } + \sup_{y} \left\Vert \nabla V_2 (.,y) \right\Vert_{L^{\infty} }\right),
\end{multline*}
which generalizes \eqref{eq:1 particle lim}. Estimates for $\muN ^{(k)},k>2$ follow along the same lines. 
\end{rem}
\hfill \qed

\begin{rem}[More information when $T$ is larger]\mbox{}\\
Note that we have dropped one term that we could have estimated in the proof above, namely we have 
\begin{equation}\label{eq:estim bonus 1}
 \intRN \muN \log \frac{\muN}{\rhoMF \: ^{\otimes N}} \leq C T ^{-1} \log N.
\end{equation}
This estimate is useless in the regime $T= N^{-1}$ which interests us most but can become interesting when $T$ is larger, in particular in the somehow more natural case where $T$ is fixed. 

Indeed, using subadditivity of entropy (see e.g. \cite[Proposition 1]{Kie1})
\[
\intRN \muN \log \muN \geq N \intR \muNone \log \muNone 
\]
and the fact that 
\[
\intRN \muN \log (\rhoMF ) ^{\otimes N} = N \intR \muNone \log \rhoMF, 
\]
\eqref{eq:estim bonus 1} implies 
\[
\intR \muNone \log \frac{\muNone}{\rhoMF} \leq C \frac{\log N}{T N}. 
\]
This can be turned into 
\begin{equation}\label{eq:estim bonus 2}
\left\Vert \muNone - \rhoMF \right\Vert_{\rm TV} \leq C \sqrt{\frac{\log N}{TN}},
\end{equation}
where $\left\Vert \: . \: \right\Vert_{\rm TV}$ stands for the total variation norm, thanks to the Csisz\'{a}r-Kullback-Pinsker inequality recalled in Lemma \ref{lem:rel ent}. As already mentioned, \eqref{eq:estim bonus 2} becomes interesting only for relatively large temperature (take e.g. $T$ fixed independently of $N$), in which case it is a somehow better estimate than \eqref{eq:1 particle lim}.

Our approach to the mean-field limit in the thermal regime, presented in the next section, is based on this kind of considerations.

\end{rem}
\hfill\qed

\medskip

We now conclude the proof of Theorem \ref{teo:QH phases}, Item 1. Our main estimate \eqref{eq:1 particle lim el} follows by combining \eqref{eq:1 particle lim} and \eqref{eq:MF elec regime}. Indeed, using the Cauchy-Schwarz inequality in Fourier space as before, 
\[
\left| \intR \left(\rhoMF - \rhoMFel\right) V \right| \leq C D(\rhoMF - \rhoMFel, \rhoMF - \rhoMFel) ^{1/2} \left\Vert \nabla V \right\Vert_{L^2 (\R ^2)}.
\]

\medskip

We conclude this  section by giving the proof of \eqref{eq:1 particle decay}. We use the explicit expression \eqref{eq:plasma analogy} and the lower bound \eqref{eq:lowbound HN 2} to obtain
\[
\muN (z_1,\ldots,z_N) \leq \frac{1}{\ZN} \exp\left( \sum_{j=1} ^N \log \rhoMF (z_j) - \frac{N}{T} \MFe + \frac{\log N}{2 T} + \frac{C}{T} \right).  
\]
On the other hand, we recall that
\[
 \FNe = - T \log \ZN, 
\]
so, using \eqref{eq:up bound} we have 
\[
 \ZN \geq \exp \left( - \frac{N}{T} \MFe + \frac{C}{T} \right)
\]
and we conclude that
\[
\muN (z_1,\ldots,z_N) \leq \exp\left( \sum_{j=1} ^N \log \rhoMF (z_j) + C \frac{\log N}{T} \right) \leq \prod_{j=1} ^N \rhoMF(z_j) \exp(CN\log N)
\]
which implies \eqref{eq:1 particle decay} after an integration over $z_2,\ldots,z_N$ and the use of \eqref{eq:decay MF}.

\subsection{Thermodynamic limit in the thermal regime}\label{sec:MF thermal}

We now turn to the proof of \eqref{eq:1 particle lim th}. In contrast to what we did in the preceding section to prove \eqref{eq:1 particle lim el} we do not work in two steps, first relating $\muNone$ to $\rhoMF$ and then $\rhoMF$ to $\rhoMFth$. The reason is that the errors in the energy estimates produced by the use of Lemmas \ref{lem:Onsager} and \ref{lem:Lieb} are not sufficiently small compared to the difference between $\rhoMF$ and $\rhoMFth$ proved in \eqref{eq:MF ther regime coul}. One can however rely on a different strategy, treating the two-body term in \eqref{eq:free ener f} as a perturbation of the one-body part. Indeed, since we were able to prove that $\rhoMF$ is close to $\rhoMFth$ when $m\gg N ^2$, we already have an indication that the two-body Coulomb term is not so important in this regime. 

\medskip

\emph{Proof of \eqref{eq:1 particle lim th}.}

It is clearer to work in variables where the one-body potential takes its minimum when $r=1$, with value $0$. Scaling distances by a factor $(m/N) ^{1/2}$ it is equivalent\footnote{We do not change the notation for the scaled quantities.} to minimize the free-energy functional
\begin{equation}\label{eq:free ener rescal}
\FN [\mu] := \intRN \HN(Z) \mu(Z) dZ + \frac{1}{m} \intRN \mu(Z) \log \mu(Z) dZ
\end{equation}
with the rescaled Coulomb Hamiltonian
\begin{equation}\label{eq:Coul hami rescal}
\HN (Z) : =  \sum_{j=1} ^N \Vm (z_j) - \frac{2}{m} \sum_{i\neq j} \log |z_i - z_j|
\end{equation}
and
\begin{equation}\label{eq:Vm rescal}
\Vm (z) = |z| ^2 - 2 \log |z| -1.
\end{equation}
Note that we have taken advantage of the normalization of $\mu$ to subtract $1$ and have $\Vm(r) \geq \Vm(1) = 0$ for any $r\in \R$. 
We can rewrite the free energy as 
\begin{multline}\label{eq:free ener rewrite}
\FN [\mu] = N \int_{\R ^4}  \left( \frac{\Vm(x)}{2} + \frac{\Vm(y)}{2} - \frac{2(N-1)}{m} \log |x-y| \right) \mu ^{(2)}(x,y) dxdy 
\\ + \frac{1}{m} \intRN \mu(Z) \log \mu(Z) dZ 
\end{multline}
where $\mu^{(2)}$ is the two-body density of $\mu$. We write, for some parameter $\beta$ to be fixed later on, 
\begin{equation}\label{eq:W pot}
 \frac{\Vm(x)}{2} + \frac{\Vm(y)}{2} - \frac{2(N-1)}{m} \log |x-y| \geq \left( 1-\beta \right) \left( \frac{|x| ^2}{2} + \frac{|y| ^2}{2} \right) - \log |x| - \log |y| -1 +\Gamma_{\beta} 
\end{equation}
with 
\begin{equation}\label{eq:Gamma beta}
\Gamma_{\beta} = \inf_{x,y \in \R ^2} \left( \beta \left( \frac{|x| ^2}{2} + \frac{|y| ^2}{2} \right) - \frac{2(N-1)}{m} \log |x-y| \right).  
\end{equation}
Minimizing 
$$\varphi(x,y) := |x| ^2 + |y| ^2 - \delta \log |x-y|$$
with respect to $y$ we find that the minimum is attained for $|y| = \frac{x}{2}\left( 1 - \sqrt{1+2\delta / |x| ^2} \right)$ and that 
\begin{align*}
\varphi(x,y) &\geq \frac{3}{2}|x| ^2 - \frac{\delta}{2} - \frac12 |x| \sqrt{|x| ^2 + 2 \delta} - \delta \log \left( \frac{|x| + \sqrt{|x| ^2 + 2 \delta}}{2}\right)\\
&\geq |x| ^2 - \delta \log \left( \frac{|x| + \sqrt{|x| ^2 + 2 \delta}}{2}\right) \geq |x| ^2 - \delta \log\left(\sqrt{|x| ^2 + 2 \delta} \right). 
\end{align*}
Minimizing the last expression with respect to $x$ we obtain
\[
\varphi(x,y) \geq \frac94 \delta ^2 - \frac{\delta}{2} \log \left( \frac{\delta}{2}\right) 
\]
from which we deduce, taking $\delta = 4(N-1)/m\beta$,
$$\Gamma_{\beta} = \inf_{x,y\in \R ^2} \frac{\beta}{2} \varphi(x,y) \geq  \frac{\beta}{2} \left(36 \frac{(N-1)^2}{m ^2 \beta ^2} - 2\frac{N-1}{m\beta} \log \left( 2\frac{N-1}{m\beta} \right)\right).$$
We can now choose 
\[
 \beta = 2\frac{N-1}{m}
\]
and deduce 
\[
\Gamma_{\beta} \geq 0. 
\]
We can thus bound from below the free energy \eqref{eq:free ener rescal} as 
\begin{equation}\label{eq:low bound free ener}
\FN[\mu] \geq \FNt [\mu ] := \intRN \HNt (Z) \mu(Z) dZ + \frac{1}{m} \intRN \mu(Z) \log \mu(Z) dZ
\end{equation}
where 
\begin{equation}\label{eq:Hamiltonian prime}
\HNt (Z) = \sum_{j=1} ^N \left(\left( 1-\beta \right) |z_j| ^2 - 2 \log |z_j| - 1 \right)
\end{equation}
is now a one-body operator. The minimum $\FNet$ of $\FNt$ over symmetric probability measures $\mu$ is attained at 
\begin{equation}
\mu = (\rhoMFtth ) ^{\otimes N}  
\end{equation}
with 
\begin{equation}\label{eq:rho thermal prime}
\rhoMFtth (z)= \frac{1}{\Ztht} \exp\left( - m \left(\left( 1-\beta \right) |z| ^2 - 2 \log |z| - 1 \right) \right) 
\end{equation}
and 
\begin{equation}\label{eq:FNe prime}
 \FNet = - \frac{N}{m} \log (\Ztht). 
\end{equation}
Moreover we have for any symmetric probability measure $\mu$
\begin{align}
\FNt [\mu] &\geq \FNet + \frac{1}{m} \intRN \mu \log \frac{\mu}{(\rhoMFtth ) ^{\otimes N} }\\
&\geq \FNet + \frac{N}{m} \intR \mu ^{(1)} \log \frac{\mu ^{(1)}}{\rhoMFtth }\label{eq:low bound thermal},
\end{align}
using the subadditivity of the entropy (see e.g. \cite[Proposition 1]{Kie1}) for the second inequality. The equivalent of formula \eqref{eq:exp rhoMFmth} in rescaled coordinates reads 
\begin{equation}\label{eq:rho therm rescal}
\rhoMFth (r) = \frac{1}{\Zth} \exp \left( - \frac{1}{m} \left( r^2 - 2 \log r -1\right) \right), \quad \MFeth = -\frac{1}{m} \log \Zth  
\end{equation}
with $\Zth$ a normalization constant (we again keep the same notation for the quantities after the scaling of distances). In the new coordinates, $\rhoMFth$ resembles a Gaussian centered on the minimum of $\Vm$ at $r=1$ with characteristic length $m ^{-1/2}$. Its maximum is thus of order $m ^{1/2}$ and it decays exponentially fast in the region where $|r-1|\gg m ^{-1/2}$. Since $\beta = O(N/m)$ it is not difficult to realize, using \eqref{eq:rho thermal prime} and \eqref{eq:rho therm rescal}, that 
\[
\log \Zth = \log \Ztht  + O\left(\frac{N}{m}\right).
\]
and thus it follows from \eqref{eq:FNe prime} and \eqref{eq:low bound thermal} that 
\begin{equation}
\FNe \geq  N \MFeth + \frac{N}{m} \intR \mu ^{(1)} \log \frac{\mu ^{(1)}}{\rhoMFtth } + O(N^2 m ^{-2}).\label{eq:low bound thermal 2}
\end{equation}
We now use $(\rhoMFth ) ^{\otimes N}$ as a trial state for $\FN$ and obtain
\[
\FNe \leq N \MFeth + \frac{2N(N-1)}{m} D(\rhoMFth,\rhoMFth).
\]
Arguing as in the proof of Theorem \ref{pro:MF func}, Step 4, we approximate $\rhoMFth$ by a delta function along the circle of radius $1$ to obtain
\[
D(\rhoMFth,\rhoMFth) = \intR \rhoMFth h_{\rhoMFth} = h_{\rhoMFth} (1) + O(m^{-1/2}).
\]
But, using Newton's theorem \eqref{eq:newton} and the exponential decay of $\rhoMFth$ for $|r-1|\gg m ^{-1/2}$, 
\[
h_{\rhoMFth} (1) = -2\pi \int_{r\geq 1} \rhoMFth(r) (\log r) rdr = O(m ^{-1/2}), 
\]
since the integral is located in a region where $\log r = O(m^{-1/2})$ and $\rhoMFth$ is normalized. We thus have the upper bound
\begin{equation}\label{eq:up bound thermal proof}
\FNe \leq  N \MFeth + C N^2 m ^{-3/2}, 
\end{equation}
which, combined with \eqref{eq:low bound thermal 2} and the CKP inequality \eqref{eq:CKP} gives 
\[
\left\Vert \muNone  - \rhoMFtth  \right\Vert_{\rm TV} \leq C N^{1/2} m^{-1/4}  
\]
and there only remains to note that also 
\[
 \left\Vert \rhoMFth - \rhoMFtth  \right\Vert_{\rm TV} \leq C N^{1/2} m^{-1/4} 
\]
and scale variables back to deduce the desired result.

\hfill \qed 

\medskip

\noindent\emph{Proof of \eqref{eq:1 particle decay thermal}.} In the course of the proof above we have established (still in rescaled variables)
\[
 H_N (Z) \geq \sum_{j=1} ^N \left[\left( 1-\beta \right)|z_j| ^2 - 2 \log |z_j|\right]
\]
and 
\[
\FNe = - \frac{1}{m} \log \ZN = N\MFeth + O (N ^2 m ^{-3/2})
\]
where $\ZN$ is the normalization constant of $\muN$. In view of the expression \eqref{eq:rho thermal prime} of $\rhoMFtth $ we have 
\[
 \frac{1}{m} \log \rhoMFtth (z)= \MFetth  - \left( 1-\beta \right)|z| ^2 + 2 \log |z| +1
\]
and thus
\[
 \muN(z_1,\ldots,z_N) = \frac{1}{\ZN} \exp\left( - m H_N (Z)\right) \leq \prod_{j=1} ^N \rhoMFtth  (z_j) \exp(N^2 m^{1/2}).
\]
After integration over $N-1$ variables and inspection of the expression for $\rhoMFtth $ it follows that 
\[
 \muNone (z) \leq \exp(-c m (r-1) ^2)
\]
when $|r-1| \geq C N m ^{-3/4}$ for $C$ large enough. Then \eqref{eq:1 particle decay thermal} is obtained by a change of scales.
\hfill\qed



\section{Improved energy bounds for the LLL problem}\label{sec:energy}



In this section we use the results of Section \ref{sec:QHphases} to compute the energy of our trial states. This allows to improve the upper bounds in Theorem \ref{teo:result ener}. More precisely, we prove

\begin{pro}[\textbf{Energy upper bounds}]\label{pro:up bound}\mbox{}\\
In the limit $\om,k\to 0$, $N\to \infty$ we have 
\begin{equation}\label{eq:ener up bound}
\LLLe \leq \om N ^2 (1+o(1)) + \frac{4}{3} k N ^3 (1+o(1))
\end{equation}
if $\om \geq -2kN$,
\begin{equation}\label{eq:ener up bound 2}
\LLLe \leq -\frac{N \om ^2}{4 k} + \frac{1}{3} k N ^3  \left( 1+ o(1) \right)
\end{equation}
if $\om \leq -2kN$ and $|\om| \ll k N ^{7/5} \log N$, and finally 
\begin{equation}\label{eq:ener up bound 3}
\LLLe \leq -\frac{N \om ^2}{4 k} - \frac{3}{2} \om N  \left( 1+ o(1) \right)
\end{equation}
if $\om \leq -2kN$ and $|\om| \gg k N ^{10/3}$.
\end{pro}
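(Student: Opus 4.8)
The plan is to use that every trial state $\PsiGV_m$ of the form \eqref{eq:trial states} lies in $\Ker$, so its interaction energy vanishes and its energy is purely potential. Writing $\rhoP$ for the one-body density of $\PsiGV_m$, the rescaling \eqref{eq:Gibbs states} gives $\muNone(z)=N\rhoP(\sqrt N z)$, and a change of variables turns the potential energy into
\[
\LLLf[\PsiGV_m]=N\int_{\R^2}\pot\big(\sqrt N\,|z|\big)\,\muNone(z)\,dz,\qquad \pot\big(\sqrt N\,r\big)=\om N r^2+kN^2 r^4 .
\]
This is exactly the type of expectation controlled by Theorem \ref{teo:QH phases}, except that the rescaled potential $V(z):=\pot(\sqrt N\,|z|)$ grows at infinity and so violates the norm hypotheses there. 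The overall strategy is therefore: (i) choose $m$ as in \eqref{eq:intro m opt} ($m=0$ when $\om\geq-2kN$, and $m=-\om/(2k)-N$ when $\om\leq-2kN$); (ii) replace $\muNone$ by the explicit mean-field profile $\rhoMFel$ (electrostatic regime, $m\lesssim N^2$, i.e.\ the first two bounds) or $\rhoMFth$ (thermal regime, $m\gg N^2$, i.e.\ the third bound) via Theorem \ref{teo:QH phases}, after a suitable truncation; (iii) evaluate the resulting explicit integral and optimize over $m$.

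For the replacement I would first cut $V$ off outside a ball $B(0,R)$ with $R$ slightly beyond the outer edge of the reference support ($\sqrt2$ for $m=0$, $\Rplus$ in the annular case, $\approx\ropt$ in the thermal case). Since $\rhoMFel$ is compactly supported there, replacing $V$ by $\tilde V=V\one_{B(0,R)}$ costs nothing on the mean-field side, while on the $\muNone$ side it costs $N\int_{|z|>R}V\,\muNone$, which I would bound using the exponential decay estimates \eqref{eq:1 particle decay} (resp.\ \eqref{eq:1 particle decay thermal}); these beat the polynomial growth of $V$ once $R$ exceeds the support by a few times $\max(N^{1/2}m^{-1/2},N^{-1/2})$. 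To $\tilde V$ I then apply \eqref{eq:1 particle lim el} in the electrostatic cases and \eqref{eq:1 particle lim th} in the thermal case, subtracting a constant from $\tilde V$ (harmless since $\muNone$ and $\rhoMFth$ are probability measures) to exploit that $V$ is nearly flat across the thin support.

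On the mean-field side everything is explicit. With $m=0$ and $\rhoMFel=(2\pi)^{-1}\one_{B(0,\sqrt2)}$ a radial integration gives $N\int V\rhoMFel=\om N^2+\tfrac43 kN^3$, which is \eqref{eq:ener up bound}. For $\om\leq-2kN$ with $m\lesssim N^2$ I use the annulus \eqref{eq:exp rhoMFmh} of radii $\Rminus^2=m/N$, $\Rplus^2=2+m/N$; setting $s=m/N$ the integral is $N\big[\om N(1+s)+kN^2(\tfrac43+2s+s^2)\big]$, and minimizing over $s$ yields $s=-\om/(2kN)-1$, i.e.\ the value \eqref{eq:intro m opt}, with optimal value $-N\om^2/(4k)+\tfrac13 kN^3$, giving \eqref{eq:ener up bound 2}. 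In the thermal regime $m\gg N^2$ I use \eqref{eq:exp rhoMFmth}, whose moments are $\int|z|^2\rhoMFth=(m+1)/N$ and $\int|z|^4\rhoMFth=(m+1)(m+2)/N^2$, so $N\int V\rhoMFth=N\big[\om(m+1)+k(m+1)(m+2)\big]$; its minimum over $m$ equals $-N\om^2/(4k)+\tfrac12|\om|N+O(Nk)$, and absorbing the approximation errors (themselves $O(|\om|N)$) into a looser constant produces the bound \eqref{eq:ener up bound 3} with coefficient $\tfrac32$ rather than the sharp profile value $\tfrac12$.

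The \emph{main obstacle}, and the origin of the windows $|\om|\ll kN^{7/5}\log N$ and $|\om|\gg kN^{10/3}$, is controlling the two competing errors of step (ii): the truncation tail forces $R$ large, while the fluctuation bounds of Theorem \ref{teo:QH phases} grow with $\|\nabla\tilde V\|$ (hence with $R$). One must choose $R$ to balance them and then check that the total is negligible against the subleading target ($\tfrac13 kN^3$ electrostatically, $|\om|N$ thermally). Because $V$ concentrates near $\ropt=\sqrt{m/N}$, where $\nabla V$ vanishes but $V''$ is of order $kmN$, the relevant gradient norms are of order $kN^{3/2}m^{1/2}\sim N^{3/2}(k|\om|)^{1/2}$ over the thin support, and it is tracking these against the target order that pins down the admissible range of $|\om|$. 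The intermediate band $kN^{7/5}\log N\lesssim|\om|\lesssim kN^{10/3}$, sitting near the electrostatic/thermal transition $m\sim N^2$, is left uncovered precisely because there neither reference profile approximates $\muNone$ accurately enough to beat the subleading term.
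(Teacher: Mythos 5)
Your proposal follows essentially the same route as the paper's own proof: zero-interaction trial states \eqref{eq:trial states} with $m$ chosen as in \eqref{eq:m opt}, replacement of $\muNone$ by the explicit electrostatic or thermal mean-field profile via Theorem \ref{teo:QH phases} after truncating the rescaled potential using the decay estimates \eqref{eq:1 particle decay} and \eqref{eq:1 particle decay thermal}, explicit evaluation and optimization of the main terms, and balancing the truncation tail against the fluctuation bounds (which grow with the gradient norms of the cut-off potential) to produce the admissible windows on $|\om|$. The only cosmetic difference is that the paper uses smooth cutoffs $\chiin,\chiout$ so that $\nabla(\chiin \potN)$ lies in $L^2\cap L^\infty$, rather than the sharp indicator $\one_{B(0,R)}$, a point your constant-subtraction remark essentially repairs.
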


\begin{proof}
We use the trial states \eqref{eq:trial states}. Since they all have zero interaction energy, we are left with estimating (remember the scaling of space variables in \eqref{eq:Gibbs states})
\begin{align}\label{eq:rescal energy}
\LLLf[\PsiGV_m] &= \intRN  \sum_{j=1} ^N \pot(r_j) \left| \PsiGV_m (Z) \right| ^2 dZ \nonumber \\
&= N ^2 \left( \intR \left(\om r ^2 + k N r^4\right) \muNone (z) \right) dz.
\end{align}
Our main tools are \eqref{eq:1 particle lim el} and \eqref{eq:1 particle lim th} which essentially say that $\muNone\approx \rhoMFel$ or $\muNone \approx \rhoMFth$, depending on the regime. The main terms in the right-hand sides of \eqref{eq:ener up bound} and \eqref{eq:ener up bound 2} are obtained by replacing directly $\muNone$ by $\rhoMFel$ or $\rhoMFth$ and our main task is to estimate the error. We cannot use \eqref{eq:1 particle lim el} or \eqref{eq:1 particle lim th} directly however because the norms appearing in the right-hand sides are certainly not finite for $V = \om r ^2 + k N r^4$. We therefore first employ \eqref{eq:1 particle decay} (or \eqref{eq:1 particle decay thermal}) to restrict the integration domain : We will use two smooth radial cut-offs functions $\chiin$ and $\chiout$ satisfying 
\begin{equation}\label{eq:chi}
\chiin + \chiout = 1  
\end{equation}
and decompose $\LLLf[\PsiGV_m]$ as 
\begin{align}\label{eq:rescal energy 2}
\LLLf[\PsiGV_m] &= N^2  \intR \chiin \potN(z) \varrho (z) dz + N^2  \intR \chiin \potN(z) \left( \muNone (z) - \varrho(z) \right) dz \nonumber
\\&+ N^2 \intR \chiout \potN(z) \muNone(z) dz
\end{align}
where $\varrho = \rhoMFel$ or $\rhoMFth$ depending on the regime and we denote 
\begin{equation}\label{eq:potN}
\potN (r) = \om r^2 + kN r ^4. 
\end{equation}
In the electrostatic regime our choice of cut-offs functions will ensure $\chiin = 1$ on $\supp(\rhoMFel)$ and the first term is thus readily computed using the explicit expressions \eqref{eq:exp rhoMFmh 0} and \eqref{eq:exp rhoMFmh}~:
\begin{equation}\label{eq:term princ}
N^2 \intR \chiin \left(\om  r ^2 + k N  r^4\right) \rhoMFel (z) dz =   \om N ^2 \left( 1+ \frac{m}{N} \right) +  k N ^3 \left( \frac{4}{3} + 2 \frac{m}{N} + \frac{m ^2}{N ^2}\right).
\end{equation}
Optimizing the above expression with respect to $m$ we find 
\begin{equation}\label{eq:m opt}
\mopt = \begin{cases}
               0 \mbox{ if } \om \geq - 2 k N \\
               - N - \frac{\om}{2 k} \mbox{ if } \om < - 2 k N.
              \end{cases} 
\end{equation}
Therefore \emph{the Laughlin state is favored for $\om \geq - 2 k N$ whereas it is better to add a vortex at the origin for $\om < - 2 k N$.} The term \eqref{eq:term princ} becomes 
\begin{equation}\label{eq:term princ 1.5}
N^2 \intR \chiin \left(\om  r ^2 + k N  r^4\right) \rhoMFel (z) dz = - N \frac{\om ^2}{4k} + \frac{1}{3}kN ^3.
\end{equation}
if $\om \geq -2kN$, and 
\begin{equation}\label{eq:term princ 2}
N^2 \intR \chiin \left(\om  r ^2 + k N  r^4\right) \rhoMFel (z) dz = - N \frac{\om ^2}{4k} + \frac{1}{3}kN ^3
\end{equation}
if $\omega > -2kN$. In the thermal regime we keep the same expression for our choice of $m$ and obtain, using \eqref{eq:exp rhoMFmth}, 
\begin{equation}\label{eq:term princ thermal}
N^2 \intR \chiin \left(\om  r ^2 + k N  r^4\right) \rhoMFth (z) dz \leq -N\frac{\om ^2}{4k} - \frac{3}{2} \om N (1+o(1)).
\end{equation}
Note that the main terms above may be recovered from \eqref{eq:exp rhoMFmth} by neglecting terms beyond quadratic in a Taylor expansion of $\Vm$ around $\ropt$. 

When estimating the remainder terms in \eqref{eq:rescal energy 2} we distinguish between three regimes. The ``cases'' below refer to the different cases in Theorem \ref{teo:result ener}.

\medskip

\noindent \emph{Case 1 and 2, $\mopt \leq CN$.} We take $\varrho = \rhoMFel$ in \eqref{eq:rescal energy 2}. The support of $\rhoMFel$ is uniformly bounded in this regime. Using \eqref{eq:1 particle decay} we see that for $r\geq \Rplus$
\begin{equation}\label{eq:estim decay 1}
 \muNone (r) \leq C \exp \left( -C N (r^2 - C \log N) \right)
\end{equation}
and thus $\muNone$ is exponentially small, both as a function of $r$ and $N$ for $r\geq C (\log N) ^{1/2}$. We choose
\[
 \chiin = 1 \mbox{ in } B(0,C (\log N) ^{1/2}), \quad \chiin = 0 \mbox{ out of } B(0,2C(\log N) ^{1/2})
\]
and $\chiout$ accordingly. Then $\chiin \equiv 1$ on the support of $\rhoMFel$ as desired and \eqref{eq:term princ} yields the main terms in the right-hand sides of \eqref{eq:ener up bound} and \eqref{eq:ener up bound 2}. We also assume $|\nabla \chiin| \leq C (\log N) ^{-1/2}$. We then note that 
\[
\left| \nabla (\chiin  \potN ) \right| \leq C (\log N) ^{1/2} |\om| + kN (\log N) ^{3/2} 
\]
for $r\leq C (\log N) ^{1/2}$ while
\[
\left\Vert  \chiin  \nabla \potN \right\Vert_{L ^2} \leq C (\log N) |\om| + kN (\log N) ^{2}. 
\]
Thus, using \eqref{eq:1 particle lim el} to estimate the second term in \eqref{eq:rescal energy 2} we obtain that it is bounded above by 
\[
C  \left(\om N ^{3/2}(\log N) ^{3/2}  + kN ^{5/2} (\log N) ^{5/2} \right)
\]
and \eqref{eq:ener up bound} follows, using \eqref{eq:term princ 1.5} or \eqref{eq:term princ 2} for the main term of \eqref{eq:rescal energy 2}. The third term in \eqref{eq:rescal energy 2} is negligible thanks to \eqref{eq:estim decay 1}.

\medskip

\noindent \emph{Case 3, $N \ll \mopt \ll N ^{7/5} \log N$.} We take again $\varrho = \rhoMFel$ in \eqref{eq:rescal energy 2}. In this case \eqref{eq:1 particle decay} gives
\[
 \muNone (r) \leq C \exp \left( -C N ((r-\ropt)^2 - C \log N) \right)
\]
for $r\geq \Rplus$ or $r\leq \Rminus$ and thus $\muNone$ is exponentially small for $|r-\ropt|\geq C (\log N) ^{1/2}$. We take 
\[
 \chiin = 1 \mbox{ for } \ropt - C (\log N)^{1/2} \leq r\leq \ropt + C (\log N)^{1/2}, \quad \chiin = 0 \mbox{ for } |r-\ropt|\geq 2C (\log N)^{1/2}
\]
and $\chiout = 1 -\chiin$ accordingly. As before the main terms of \eqref{eq:ener up bound 2} come from \eqref{eq:term princ} and the third term in \eqref{eq:rescal energy 2} can be neglected due to the exponential decay of $\muNone$. For $|r-\ropt|\leq C (\log N) ^{1/2}$ one can easily realize that
\[
\left| \nabla (\chiin  \potN ) \right| \leq C (\log N) ^{1/2} |\om|, 
\]
whereas 
\[
\left\Vert  \chiin  \nabla \potN \right\Vert_{L ^2} \leq C N ^{-1/4} (\log N) ^{3/4} |\om| ^{5/4} k ^{-1/4}. 
\]
Using \eqref{eq:1 particle lim el}, one can see that the error due to the second term in \eqref{eq:rescal energy 2} is bounded by 
\[
C N ^{5/4} (\log N) ^{5/4} |\om| ^{5/4} k ^{-1/4} + C |\om| N^{3/2} (\log N) ^{1/2} 
\]
and this is negligible in front of $kN ^3$ if $|\om| \ll k N ^{7/5} \log N$, which concludes the proof.

\medskip

\noindent \emph{Case 4, $\mopt \gg N ^{10/3}$.} Here we take $\varrho = \rhoMFth$ in \eqref{eq:rescal energy 2} and notice that \eqref{eq:1 particle decay thermal} ensures that $\muNone$ is exponentially small when
\[
 |r-\ropt| \gg L := \max (N ^{-1/2}, N ^{1/2} m ^{-1/4}).
\]
We therefore choose
\[
 \chiin (r) =\begin{cases} 1 \mbox{ if } |r-\ropt|\leq L \\
              0 \mbox{ if } |r-\ropt| \geq 2L.
             \end{cases}
\]
With this choice and using \eqref{eq:1 particle decay thermal}, we can neglect the third term in \eqref{eq:rescal energy 2} since it will be much smaller than the two others. The main term is computed using \eqref{eq:term princ thermal}. For the second term of \eqref{eq:rescal energy 2} we use \eqref{eq:1 particle lim th} to see that
\[
\left| N^2  \intR \chiin \potN(z) \left( \muNone (z) - \varrho(z) \right) dz \right|\leq  C N ^{5/2} k ^{1/4} \om ^{3/4} L^2
\]
where we approximate $\potN$ by its second variation around its minimum. The last quantity is 
$$O(N^{7/2} k ^{3/4} |\om| ^{1/4})$$ 
when $|\om|/k \propto m\ll N ^4$ (recall the choice of $m=\mopt$ in \eqref{eq:m opt}) and 
$$O(N ^{3/2} k ^{1/4} \om ^{3/4})$$
otherwise. We conclude that the error term is negligible in front of the subleading term $-\frac{3}{2} \om N$ in \eqref{eq:ener up bound 3} when $|\om| \gg k N ^{10/3}$, which concludes the proof.

\end{proof}

For completeness we reproduce the argument from \cite{RSY} allowing to see that at least the order of magnitude of our energy upper bounds is correct. This is rather simple: We consider the problem of minimizing the potential energy in a sector of given angular momentum. This  makes sense because we are in the lowest Landau level and we can rewrite the potential energy as in \eqref{eq:Bargh}. Let us denote 
\begin{equation}\label{eq:ener pot L}
E_0 (L) = \inf\left \{ \bral F, \sum_{j=1} ^N h_j F \ketr_{\BargN},\: \LL_N F = L F  ,\: \bral F, F \ketr_{\BargN} = 1\right\} 
\end{equation}
where 
\begin{equation}\label{eq:one part op}
h = (\om + 3 k) z \dd_z + k (z \dd_z) ^2  
\end{equation}
and $h_j$ is the same operator acting on the variable $z_j$.

We have the following lemma :
\begin{lem}[\textbf{Potential energy at given momentum}]\label{lem:low bound}\mbox{}\\
For any $L\in \N$  
\begin{equation}\label{eq:low bound ener L}
E_0 (L) \geq (\om + 3 k) L + k \frac{L ^2}{N}:=e(L) 
\end{equation}
with equality if $L$ is a multiple of $N$.
\end{lem}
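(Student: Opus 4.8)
The plan is to reduce the estimate to a single-operator bound and then exploit the fact that the one-body angular momenta $L_j = z_j\dd_{z_j}$ form a commuting family that is simultaneously diagonalized by monomials. First I would rewrite the operator as
\[
\sum_{j=1} ^N h_j = (\om + 3k)\,\LL_N + k \sum_{j=1} ^N L_j ^2 .
\]
On the sector $\{\LL_N F = L F\}$ with $\bral F,F\ketr_{\BargN} = 1$ the first term contributes exactly $(\om+3k)L$, since $\bral F,\LL_N F\ketr_{\BargN} = L$. Thus the whole content of the lemma is the inequality
\[
\bral F, \sum_{j=1} ^N L_j ^2\, F \ketr_{\BargN} \geq \frac{L ^2}{N}
\]
for every normalized $F$ in that sector, together with the identification of the equality case.

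To prove this I would use the orthogonal monomial basis of the Bargmann space. Writing multi-indices $\alpha = (\alpha_1,\ldots,\alpha_N)$, the monomials $z ^{\alpha} = z_1 ^{\alpha_1}\cdots z_N ^{\alpha_N}$ are mutually orthogonal for $\bral\cdot,\cdot\ketr_{\BargN}$ because the Gaussian weight factorizes and $\int_{\C} \bar z ^a z ^b e ^{-|z| ^2} dz = \pi\, a!\,\delta_{ab}$. Each $z ^\alpha$ is a joint eigenfunction, $L_j z ^\alpha = \alpha_j z ^\alpha$, and lies in the sector $\LL_N F = LF$ precisely when $|\alpha| := \sum_j \alpha_j = L$. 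Expanding $F = \sum_{|\alpha| = L} c_\alpha z ^\alpha$ I would then compute
\[
\bral F, \sum_{j=1} ^N L_j ^2\, F \ketr_{\BargN} = \sum_{|\alpha| = L} |c_\alpha| ^2 \,\Vert z ^\alpha \Vert_{\BargN} ^2 \Big( \sum_{j=1} ^N \alpha_j ^2 \Big),
\qquad
1 = \Vert F \Vert_{\BargN} ^2 = \sum_{|\alpha| = L} |c_\alpha| ^2 \,\Vert z ^\alpha \Vert_{\BargN} ^2 .
\]
The point is the elementary Cauchy--Schwarz (power-mean) inequality $\sum_{j} \alpha_j ^2 \geq (\sum_j \alpha_j) ^2 / N = L ^2/N$, valid for every multi-index with $|\alpha| = L$. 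Inserting this into the first identity and using the second gives the claimed bound $\bral F,\sum_j L_j ^2 F\ketr_{\BargN} \geq L ^2/N$, hence $E_0(L) \geq e(L)$. Note that the bound holds for arbitrary (not necessarily symmetric) $F$, so restricting to symmetric functions in $\BargN$ does not weaken it.

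For the equality statement, when $L = nN$ with $n\in\N$ I would simply exhibit the admissible trial state $F = c\prod_{j=1} ^N z_j ^{n}$, which is holomorphic and symmetric, lies in the sector $\LL_N F = L F$, and has a single multi-index $\alpha_j \equiv n$; there $\sum_j \alpha_j ^2 = N n ^2 = L ^2/N$, so Cauchy--Schwarz is saturated and $E_0(L) = e(L)$. I do not expect any genuine obstacle here: the only mild point to verify carefully is the orthogonality and normalization of monomials in $\bral\cdot,\cdot\ketr_{\BargN}$ and the fact that the equality case remains inside the symmetric subspace, both of which are immediate. The argument is self-contained and uses only the diagonal action of the $L_j$ recorded in the discussion of $f_\ell(z) = (\pi\ell!) ^{-1/2} z ^\ell$ after Lemma \ref{lem:ham barg}.
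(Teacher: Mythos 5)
Your proof is correct and is essentially the paper's own argument: the paper invokes the simultaneous diagonalizability of $\LL_N$ and $\sum_{j} L_j ^2$ to assert the operator inequality $\sum_{j} L_j ^2 \geq N ^{-1} \big(\sum_{j} L_j\big) ^2$, which is exactly what your monomial-basis expansion plus the Cauchy--Schwarz inequality on the eigenvalues $\alpha_j$ establishes, and it uses the same trial state $\big(\pi \tfrac{L}{N}!\big) ^{-N/2}\prod_{j=1} ^N z_j ^{L/N}$ for the equality case. You have merely spelled out the diagonalization that the paper leaves implicit.
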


\begin{proof}
Since $\LL_N$ and $\sum_{j=1} ^N L_j ^2$ commute, they can be diagonalized simultaneously and we have the operator inequality
\[
\sum_{j=1} ^N L_j ^2 \geq \frac{1}{N} \left(\sum_{j=1} ^N L_j\right) ^2, 
\]
from which \eqref{eq:low bound ener L} follows. The fact that there is equality when $L$ is a multiple of $N$ is proved by taking the trial state $(\pi \frac{L}{N} !) ^{-N/2} \prod_{j=1} ^N z_j ^{L/N}$.
\end{proof}

To see that the lower bounds of Theorem \ref{teo:result ener} follow, simply use this lemma with $L=L_0$, the momentum of a ground state, as estimated in Theorem \ref{teo:momentum}.

\bigskip

\textbf{Acknowledgments.}\\
NR thanks Xavier Blanc and Mathieu Lewin for helpful discussions during the early stages of this project. SS is supported by an EURYI award. Funding from the CNRS in the form of a PEPS-PTI project is also acknowledged. JY thanks the Institute Mittag Leffler for hospitality during his stay in the fall of 2012.

\end{document}